\documentclass[11pt]{article}
\usepackage{amsmath, amsthm, amssymb}
\usepackage{times}
\usepackage{fullpage}
\usepackage{graphics}
\usepackage{graphicx}
\usepackage{units}
\usepackage[tight]{subfigure}
\usepackage{graphicx}
\usepackage{xspace}
\usepackage{cite}

 
 \typeout{Style Option FULLPAGE Version 2 as of 15 Dec 1988}
 
 \topmargin 0pt
 \advance \topmargin by -\headheight
 \advance \topmargin by -\headsep
 
\textheight 8.95in

 \oddsidemargin 0pt
 \evensidemargin \oddsidemargin
 \marginparwidth 0.5in

 \textwidth 6.5in

\def\eps{{\epsilon}}

\def\Prf{{\rm Pr}}

\newcommand{\xhdr}[1]{\paragraph{{#1}.}}

\newcommand{\Exp}[1]{
E\left[{#1}\right]
}

\newcommand{\Prgt}[2]{
\Prf\left[\mbox{\em #1}~|~\mbox{\em #2}\right]
}

\newcommand{\omt}[1]{}
\newcommand{\Xomit}[1]{}
\newcommand{\supproof}[1]{}
\newcommand{\supproofeq}[1]{}

\newlength{\saveparindent}
\setlength{\saveparindent}{\parindent}
\newlength{\saveparskip}
\setlength{\saveparskip}{\parskip}

{\end{list}}

\newtheorem{fact}{Fact}[section]
\newtheorem{lemma}[fact]{Lemma}
\newtheorem{theorem}[fact]{Theorem}
\newtheorem{definition}[fact]{Definition}

\newtheorem{observation}[fact]{Observation}
\newtheorem{conjecture}[fact]{Conjecture}

\newcommand{\comment}[1]{}
\newcommand{\say}[1]{\begin{center}\fbox{\parbox{\textwidth}{#1}}\end{center}}

\DeclareMathOperator{\var}{Var}



\def\qstar{q^*}

\begin{document}

\def\df{\dfrac}
\def\ofrac{\dfrac}
\newcommand{\fn}[1]{
{{#1}(\cdot)}
}
\def\ds{\displaystyle}
\def\os{\ds}

\title{
Voting with Limited Information and Many Alternatives
}

\author{
Flavio Chierichetti
\thanks{
Department of Computer Science,
Cornell University, Ithaca NY 14853.
Supported in part by 
NSF grant CCF-0910940. 
}
\and
Jon Kleinberg
\thanks{
Department of Computer Science,
Cornell University, Ithaca NY 14853.
Supported in part by 
a John D. and Catherine T. MacArthur Foundation Fellowship,
a Google Research Grant,
a Yahoo!~Research Alliance Grant,
and NSF grants 
IIS-0705774, 
IIS-0910664, 
and CCF-0910940. 
}
}

\date{October, 2011}

\begin{titlepage}
\maketitle

\begin{abstract}
The traditional axiomatic approach to voting is motivated by the problem of
reconciling differences in subjective preferences.
In contrast, a dominant line of work in 
the theory of voting over the past 15 years
has considered a different kind of scenario, also fundamental to voting,
in which there is a genuinely ``best''
outcome that voters would agree on if they only had enough information.
This type of scenario has its roots in the classical Condorcet Jury Theorem;
it includes cases such as jurors in a criminal trial who all want 
to reach the correct verdict but disagree in their inferences 
from the available evidence,
or a corporate board of directors who all want to improve the
company's revenue, but who have different information that
favors different options.

This style of voting leads to a natural set of questions:
each voter has a {\em private signal} that provides probabilistic
information about which option is best, and a central question is
whether a simple plurality 
voting system, which tabulates votes for different options,
can cause the group decision to arrive at the correct option.
We show that plurality voting is powerful enough to achieve this:
there is a way for voters to map their signals into votes for options
in such a way that --- with sufficiently many voters ---
the correct option receives the greatest number of votes with high probability.
We show further, however, that any process for achieving this is 
inherently expensive in the number of voters it requires:
succeeding in identifying the correct option with probability
at least $1 - \eta$ requires $\Omega(n^3 \eps^{-2} \log \eta^{-1})$
voters, where $n$ is the number of options and $\eps$ is a 
distributional measure of the minimum difference between the options.

\end{abstract}

\thispagestyle{empty}
\end{titlepage}

\section{Introduction}
\label{sec:intro}

\xhdr{Information-Based Voting}
A dominant recent theme in the study of voting has been to trace
differences in voters' preferences back to differences in 
the information they have about the world.
This information-based approach has its roots
in one of the earliest results in voting theory --- the
{\em Condorcet Jury Theorem}, which used the then-young theory of probability
to model a situation in which a panel of jurors each wants
to vote for the correct decision in a trial, but each juror may be wrong 
about what the correct decision is independently 
and with probability $p < \frac12$ \cite{young-condorcet}.
It is only very recently, however, that this approach has received 
deeper theoretical attention
\cite{austen-smith-jury-voting,bhattacharya-strat-voting,
feddersen-swing-voter,feddersen-jury,
myerson-jury-voting},
leading to what is now a large and growing body of research.

The basic premise of the information-based approach to voting is that
all voters want the best option for the group as a whole, but they
disagree on what this best option is, based on the information they have.
This models a wide range of situations where the differences among voters
are not purely subjective, but instead based on uncertainty.
For example, in most criminal trials the key question is genuinely
whether the defendant committed the crime or not; all jurors want
to reach the correct decision, but they disagree on which of the 
pieces of information presented at the trial are most salient.
Similarly, all the members of a corporate board of directors may
genuinely agree that the goal is to reach a decision that will most
improve the company's future revenue, but they disagree
on which course of action is most likely to achieve this.
Even at the level of large populations, there can be cases where each
voter wants a candidate whose election --- for example ---
will lead to the strongest improvement in the economy, but there is
disagreement among the voters about which candidate is most likely 
to achieve this.

This view of voters as information-processing agents trying
to reach a correct decision has made it possible to develop 
models for a range of important phenomena in voting; 
these include the fact that voters realize they might be wrong,
and the corollary that they can sometimes be convinced by evidence
\cite{austen-smith-jury-voting,feddersen-jury},
the corresponding role of deliberation in committee voting
\cite{gerardi-deliberation},
and the fact that many voters may choose to abstain or not participate
when they believe that others have more accurate information than they do
\cite{battaglini-swing-voter-lab,feddersen-abstention}.

\xhdr{A Basic Model of Information-Based Voting}
In this paper, we consider the following basic theoretical model that
has received wide study
\cite{austen-smith-jury-voting,feddersen-jury}.
There is a decision to be made, involving selecting from among
several possible {\em options} $A_1, \ldots, A_n$.
One of these options is {\em correct}, and all voters want to select it.
However, which option is correct is determined by a
process that cannot be directly observed, and
the voters have to use indirect signals to infer the correct option.
Before casting a vote, each voter $t$ receives a private
{\em signal} equal to some value $s_j$, providing evidence
for the identity of the correct option.
(The full set of possible signals will be labeled
$\{s_1, s_2, \ldots, s_C\}$.)
We assume that certain kinds of signals are more plentiful when
certain options are correct, and that voters know conditional
probabilities of the form 
$\Prgt{$s_j$ is received}{$A_i$ is correct} = \rho_{ij}.$
We further assume that no two options induce exactly the same
set of conditional probabilities over signals.
Based on the signal she receives, each voter casts a vote for one option,
potentially using a randomized rule to map the signal to a vote.
A {\em voting system} --- a rule for mapping a collection of votes to
a group decision --- is then applied to these votes.
We are interested in the probability that the group decision
will be equal to the correct option $A_i$.

Much of the power of this model in economics and political science
comes from the way in which it separates the {\em signals} received by
the voters from the {\em options} they are voting on. This captures a
basic property of voting in many real-life situation, including the
ones described at the outset: the signals represent information
and decision-making heuristics that the individual voters possess in
their minds, while the options correspond to candidates or alternatives
presented on a ballot.  For many reasons, the institution of voting therefore
does not (and generally cannot) consist of a simple sharing of everyone's
signals.  Instead, voters are only able to convey the information they
possess in a more indirect fashion, by voting for one of the given options.
The crucial question is whether there is a (possibly
randomized) algorithm each voter can apply to his or her signal to
produce a vote, in such a way that the correct option is chosen.

For simplicity in the following discussion, we consider
an equivalent formulation of this model 
(in the spirit of \cite{anderson-herding-expmt-aer}),
via an experiment involving urns and marbles.
Suppose an experimenter has a collection of urns $A_1, \ldots, A_n$,
and each urn contains marbles of colors $s_1, \ldots, s_C$.
The fraction of marbles of color $s_j$ in urn $A_i$ is equal to $\rho_{ij}$;
no two urns have exactly the same mixture of colors.
Now, the experimenter announces to a set of test subjects that he is placing
one of the urns $A_1, \ldots, A_n$ on a table.
Each test subject draws and replaces a single marble from
the urn on the table, without showing it to the other subjects,
and then writes down a vote (on a secret ballot) 
for which urn she believes is on the table.
The experimenter then applies a voting rule to these votes,
producing a group decision, and awards the group a prize
if this group decision is equal to the true urn that was on the table.
It should be clear that this formulation is simply a rephrasing of
the original model, with the urns representing the options and
the marbles representing the signals.

\xhdr{Can Plurality Voting Produce the Correct Answer?}
Much of the initial theoretical work on these issues focused on the
case of
$n = 2$ options --- that is, voting when there are two alternatives,
such as in a jury trial or a yes/no vote on a proposed rule
\cite{austen-smith-jury-voting,bhattacharya-strat-voting,
feddersen-swing-voter,feddersen-jury, myerson-jury-voting}.
But many settings involve more than two options, and in this case
the following basic question has remained open.
Suppose the votes will be aggregated simply using plurality voting,
with the urn receiving the most votes chosen as the group decision.
Is there a rule the voters can use for mapping signals (colors) to
votes,
such that for any instance of the problem with urns $A_1, \ldots,
A_n$,
a set of $m$ voters will identify the correct urn with
probability converging to $1$ as $m$ grows?
And if so, how large a set of voters is needed to guarantee
a success probability of $1 - \eta$, for a given $\eta > 0$?

Recent work has highlighted the challenge and general lack of
understanding of
this question with more than two options, raising it as an
open problem and providing interesting results in highly structured
special cases where the signal space is rich enough that
each option has a disjoint set of one or more
signals that uniquely favor it
\cite{hummel-jury-voting,hummel-multicandidate}.
For general sets of signals, the question has been open:
if the signals are expressively weak compared to the full set of
options,
is there necessarily any strategy for mapping signals to votes
that would lead to the correct outcome under a simple system
like plurality voting?

\comment{To our knowledge, the following basic question has remained open.
Suppose the votes will be aggregated simply using plurality voting,
with the urn receiving the most votes chosen as the group decision.
Is there a rule the voters can use for mapping signals (colors) to votes,
such that for any instance of the problem with urns $A_1, \ldots, A_n$,
a set of $m$ voters will identify the correct urn with
probability converging to $1$ as $m$ grows?
And if so, how large a set of voters is needed to guarantee
a success probability of $1 - \eta$, for a given $\eta > 0$?

Most of the work on this question has focused on the case of $n = 2$ 
options --- that is, voting when there are two alternatives
\cite{austen-smith-jury-voting,bhattacharya-strat-voting,
feddersen-swing-voter,feddersen-jury, myerson-jury-voting}.
Recent work has highlighted the challenge and general lack of understanding of 
this problem with more than two options, raising it as an 
open question and providing results in highly structured
special cases where the signal space is rich enough that
each option has a disjoint set of one or more
signals that uniquely favor it
\cite{hummel-multicandidate,hummel-jury-voting}.
For general sets of signals, however, the question has been open:
if the signals are expressively weak compared to the full set of options,
is there necessarily any strategy for mapping signals to votes 
that would lead to the correct outcome under a simple system
like plurality voting?

To take a concrete special case that illustrates the difficulty,
suppose there are $n+1$ urns $A_0, A_1, \ldots, A_n$ and only
$c = 2$ different colors of marbles: urn $A_i$ contains $i$ blue marbles
and $n-i$ red marbles.  Each voter is able to draw a single marble
and observe its color (blue or red), and then she must cast a vote
for one of the urns $A_0, A_1, \ldots, A_n$.
Is there a strategy such that with enough voters, the urn receiving
the most votes is precisely the correct urn?}

\xhdr{Optimal Information-Based Voting: Main Results}
Our first main result is that 
for any finite set of signals, and any finite set of
options that induce distinct distributions over these signals,
there is a strategy such that a sufficiently large set of voters
can arrive at the correct option with high probability using plurality voting.
In other words, each voter translates her signal into a vote in such
a way that the option receiving the most votes is, with high probability,
the correct one.  

Second, we show that achieving this goal using
plurality voting is very expensive: it 
requires a large number of voters.
We give lower and upper bounds on the number of voters needed to 
achieve a high probability of correctness, parametrized by three quantities:
the number of options, the number of signals, and a quantity
measuring the minimum separation between the 
distributions over signals induced by any two options.
The lower bound is the technically most involved of our results, and 
for two signals it is asymptotically tight in both the 
number of options and the separation parameter.

The technical core of our results is the case in which there are $n$
options and $2$ signals.  Let $\eps$ be the minimum positive
difference between the probability assigned to a fixed signal $s_k$ by
two different options $i$ and $j$.  With two signals, we show there is
a strategy by which $O(n^3 \eps^{-2} \log \eta^{-1})$ voters can
arrive at the correct option using plurality voting with probability
at least $1 - \eta$.  The strategy is symmetric, in that all voters
map signals to votes according to the same probabilistic rule.  While
the algorithm involves a carefully designed rule, it is based on a
principle that is intuitively natural: the voters ``hedge'' against
the possibility that their information points in the wrong direction,
by sometimes choosing to vote for an option other than the one
supported by their signal. The bound achieved by our algorithm is
tight: there are instances in which $\Omega(n^3 \eps^{-2} \log
\eta^{-1})$ voters are necessary to achieve such a guarantee; this
lower bound applies even to asymmetric strategies in which different
voters can use different rules.

Note that by the pigeonhole principle, the minimum difference $\eps$
is at most $1/(n-1)$, and hence $\eps^{-1}$ is a parameter that is at
least as large as $n-1$. For example, the special case with urns $A_0,
A_1, \ldots, A_n$, in which $A_i$ contains $i$ blue marbles and $n-i$
red marbles, has $\eps = 1/n$, and so for this problem the tight bound
on the minimum number of voters needed is $\Theta(n^5 \log
\eta^{-1})$.

A recurring theme in our results is this fifth-power dependence of the
number of voters on $n$, in the case when $\eps^{-1}$ is close to $n$.
As such, it is useful to provide some intuition at the outset for how
this fifth-power dependence arises. Thus, the following description is
deliberately informal, but gives a sense for where this functional
form comes from.  Let there be $m$ voters, and for simplicity let us
consider the special case from the previous paragraph, with urns $A_0,
A_1, \ldots, A_n$, in which $A_i$ contains $i$ blue marbles and $n-i$
red marbles. Under the asymptotically optimal (randomized) algorithms
we consider, the correct urn will receive a greater number of votes in
expectation than any other urn; this is why, with enough voters, we
will eventually be able to distinguish the correct urn using plurality
voting. Now, we will find that the optimal algorithm has the following
two properties. First, it spreads out the votes relatively uniformly
across a set of $\Theta(n)$ urns, and so if there are $m$ voters, each
of the urns in this set receives $\Theta(m/n)$ votes in expectation.
The second, subtler property is the crucial one: the optimal algorithm
ensures that the correct urn receives the most votes in expectation
using a delicate optimization under which the the expected number of
votes received by the correct urn will exceed the expected number of
votes received by the adjacent urns by a factor of only $1 + \delta$,
where $\delta = \Theta(n^{-2})$. As a result, to distinguish the
correct urn with high probability, we need a number of samples that is
sufficient to yield at least $\Theta(\delta^{-2}) = \Theta(n^4)$ votes
for the correct urn. But since the correct urn receives only
$\Theta(m/n)$ votes in expectation, this means that we need $m$ to be
$\Theta(n^5)$.

We observe that in our more general bound $O(n^3 \eps^{-2} \log
\eta^{-1})$, the form of the dependence on $\eps^{-1}$ is in fact
necessary even if the voters could share their signals (rather than
casting individual votes). Indeed, with $n = 2$ options that assign
probabilities to signals differing by only $\eps$, even a single
observer would need to see $\Theta(\eps^{-2} \log \eta^{-1})$ signals
in order to identify the correct option with probability at least $1 -
\eta$. Thus, with a constant number of options, plurality voting is
allowing voters to aggregate their information with an efficiency that
is within a constant factor of the efficiency achievable by a single
person who could observe all signals directly.

For the case of $C > 2$ possible signals or colors, let $\epsilon$ denote the
minimum $\ell_1$ distance\footnote{We observe that, in the multicolor case, choosing
  the right parameter to define a notion of ``distance'' between urns
  is not as straightforward as in the bichromatic case. We chose $\ell_1$
 because it is the parameter that has been used in the
  literature to determine the minimum number of samples that allows 
  an algorithm to distinguish between probability
  distributions.} of two distinct urns' probability
distributions. We have an upper bound of
$O\left( (C\log C)^2 n^3 \eps^{-2} \log \frac n{\eta}\right)$
on the number of voters needed.
Since the lower bound for the two-signal case applies with $C > 2$,
it is tight in $\eps$, and we lose only an exponentially small factor
in $n$. Finding the correct dependence of the required number of voters
on $n$ and $C$ is an interesting open question.

Under plurality voting, voters can only communicate the name of
a single option in response to a signal.  We also consider voting
systems that allow voters to be much more expressive:
{\em cumulative voting}, in which each vote consists of assigning a
non-negative weight to each option (such that the weights sum to $1$);
and {\em Condorcet voting}, in which each vote consists of a ranking of
all the options.
For bichromatic urns, we show that cumulative voting requires only $O(\eps^{-2} \log \eta^{-1})$
voters in order to succeed with high probability; this is tight even
compared to the baseline discussed above, when a single observer has
access to all the signals.
We show that a similar bound would hold for Condorcet voting,
modulo an intriguing conjecture about distributions over permutations.

\xhdr{Optimal Information-Based Voting: Main Techniques}
The possibility result for identifying the correct option 
is based on a technique that implicitly draws
a connection to the framework of {\em proper scoring rules} from statistics
\cite{gneiting-scoring-rules}.
Proper scoring rules can be thought of as incentive systems for eliciting
accurate probabilistic forecasts from expert predictors; 
the contexts in which they have been
used in earlier work are quite different from ours, and to our knowledge
there have not been previous linkages between proper scoring rules
and information-based voting.

A construction based on proper scoring rules provides the first
method for obtaining the correct option using plurality voting.
However, we need to go beyond this construction in order to obtain
a tight bound on the number of voters needed:
in a sense to be made precise below, we can prove that 
any direct use of proper scoring rules in our setting requires at
least $\Omega(n \eps^{-4})$ voters to achieve a high probability of success.
This is at least $\Omega(n^3 \eps^{-2})$ since $\eps \leq (n-1)^{-1}$, and 
more significantly, it has an asymptotically sub-optimal dependence on
$\eps$ of $\Omega(\eps^{-4})$ when $n$ is a constant, 
whereas our stronger approach achieves the
optimal dependence of $\Theta(\eps^{-2} \log \eta^{-1})$ for constant $n$.

For the lower bound, we need to show that with $O(n^3 \eps^{-2} \log
\eta^{-1})$ voters, there is a probability $\eta$ that plurality
voting will choose the wrong option.  For this, we identify a natural
``close competitor'' $j$ of the correct option $i$, with a very
similar signal distribution, and we consider a random variable that
measures the extent to which the number of votes for the correct
option $i$ exceed the number for this competitor $j$.  The (possibly
asymmetric) strategies of the voters determine the variance of this
random variable, and roughly speaking we follow a two-pronged argument
in terms of this variance.  If this variance is too low, then there is
a high chance that voters would behave the same regardless of whether
the option generating the signals was $i$ or $j$, and hence that if
they are correct about $i$ with high probability, then they would have
to be wrong with constant probability when $j$ is the correct option.
If the variance is above a certain low threshold, on the other hand,
then we apply a carefully tuned ``anti-concentration'' inequality from
\cite{f43,mv} showing that there is a constant probability that the
number of votes for $i$ will drop below the number for its competitor $j$.

\xhdr{Further Related Work}
Finally, we mention two other recent lines of work that have
also considered the problem faced by a set of agents trying to agree
on a joint decision from a set of alternatives.
Mossel, Sly, and Tamuz study a version of the problem in which
there are two options, and each agent is given a probabilistic signal 
providing information about which option is correct
\cite{mossel-agreement}; in contrast
to our approach and to the work on voting discussed above,
they consider a model in which agents may communicate iteratively
over multiple rounds.
Caragiannis and Procaccia consider a setting based on agents
that possess utilities over options;
within this framework, they show that simple voting rules 
can approximately optimize the sum of agents' utilities for
the option that is selected
\cite{caragiannis-voting}.

\section{An Upper Bound with Two Signals}\label{sec:twosig}

We begin by considering the case of two signals.
Suppose we have a collection of $n$ urns, labeled $p_1, \ldots, p_n$, the $i$-th of which having a $p_i$ fraction of blue balls and a $1-p_i$ fraction of red balls, with $p_1 \le p_2 \le \cdots \le p_n$.
We let 
$\eps$ denote the smallest difference between two consecutive $p_i$'s:
$\eps = \min_{0 \le i \le n - 1} \left(p_{i+1} - p_i\right).$

\medskip

We assume that one urn is adversarially chosen as the correct one
(we will also refer to this as the {\em unknown} urn).
Then each player
draws a ball from the urn and votes for the name of an urn 
based on the color they observe.

\medskip

We describe the strategy that the players will use to randomly
choose which vote to cast:
\begin{itemize}
\item[1.] Let 
$\ds{b_k = \sum_{\ell=1}^{k-1} \frac{2-(p_{\ell+1} + p_{\ell})}{p_{\ell+1} - p_{\ell}}}$ and
$\ds{r_k = \sum_{\ell=k}^{n-1} \frac{p_{\ell+1} + p_{\ell}}{p_{\ell+1} - p_{\ell}}}.$
Then define $R = \sum_{k=1}^n  r_k$ and $B = \sum_{k=1}^n
  b_k$, and set $M = \max(R, B)$.
\item[2.] The probability that a voter will vote for $p_j$ if a red ball is drawn is
$R_j = M^{-1} \cdot \left(r_j + \frac{M -R}{n}\right).$
\item[3.] The probability that a voter will vote for $p_j$ if a blue ball is drawn is
$B_j = M^{-1} \cdot \left(b_j + \frac{M -B}{n}\right).$
\end{itemize}

It is easy to check that the two given distribution are indeed probability distributions (their values are non-negative and they both sum up to one).
Now, the
probability that a player will vote for $p_j$  given that the correct,
adversarially chosen, distribution is $p_i$, is
$$\Pr_{\substack{\mathbf{X} \sim (p_i,1-p_i)\\ \mathbf{P} \sim f(\mathbf{X})}}\left[\mathbf{P} =
    p_j\right]  =  p_i \Pr_{\mathbf{P} \sim
  f(\text{blue})}[\mathbf{P} =  p_j] + (1-p_i) 
\Pr_{\mathbf{P} \sim f(\text{red})}[\mathbf{P} = p_j] = p_i B_j + (1-p_i) R_j = E_i(j).$$

Now consider two urns, $p_i$ and $p_j$. We compute
the difference between the probabilities that a vote for urn $p_{i}$
and a vote for urn $p_{j}$ are cast, given that the correct urn is $p_i$:
$$\Delta_{i}(j) = E_i(i) - E_i(j) =
p_i \left(B_i
-B_j\right)
 + (1-p_i)  \left(R_i - R_j\right).$$
We will lower-bound $\Delta_{i}(j)$ to bound the number of voters needed
to let the voting scheme be successful with high probability. Suppose
first that $j < i$; then
$$
M \cdot \Delta_{i}(j) = p_i \cdot \sum_{\ell=j}^{i-1}
\frac{2-(p_{\ell+1} + p_{\ell})}{p_{\ell+1}-p_{\ell}} - (1-p_i) \cdot
\sum_{\ell=j}^{i-1} \frac{p_{\ell+1}+p_{\ell}}{p_{\ell+1}-p_{\ell}} 
= \sum_{\ell=j}^{i-1} \frac{2 \cdot p_i - (p_{\ell+1}+p_{\ell})}{p_{\ell+1}-p_{\ell}},
$$
observing that in each term of the sum we have $p_i \ge
p_{\ell+1}$, since $\ell \le i - 1$. Therefore,
$$
M \cdot \Delta_{i}(j)  \ge \sum_{\ell=j}^{i-1} \frac{2 p_{\ell+1}
  -(p_{\ell+1} + p_{\ell})}{p_{\ell+1}-p_{\ell}}
= \sum_{\ell=j}^{i-1} \frac{p_{\ell+1} -
  p_{\ell}}{p_{\ell+1}-p_{\ell}}= i - j.
$$
If, instead, $i < j$ we have:
\begin{align*}
M \cdot \Delta_{i}(j) & = - p_i \cdot \sum_{\ell=i}^{j-1}
\frac{2-(p_{\ell+1} + p_{\ell})}{p_{\ell+1}-p_{\ell}} + (1-p_i) \cdot
\sum_{\ell=i}^{j-1} \frac{p_{\ell+1}+p_{\ell}}{p_{\ell+1}-p_{\ell}} \\
& = \sum_{\ell=i}^{j-1} \frac{(p_{\ell+1}+p_{\ell}) -2p_i}{p_{\ell+1}-p_{\ell}}
\ge \sum_{\ell=i}^{j-1} \frac{(p_{\ell+1}+p_{\ell}) -2p_{\ell}}{p_{\ell+1}-p_{\ell}}
= \sum_{\ell=i}^{j-1} \frac{p_{\ell+1}
  -p_{\ell}}{p_{\ell+1}-p_{\ell}} = j - i,
\end{align*}
where the inequality follows from $p_i \le p_{\ell}$.
Therefore for $j \ne i$, we have
\begin{equation}\label{deltaij}
\Delta_{i}(j) \ge \frac{\left|i - j\right|}M.
\end{equation}

\medskip

We now give an upper bound on the probability that the correct urn
will be chosen by a voter. 
Note, somewhat counter-intuitively, that the probability of a correct vote
is higher when this upper bound is smaller --- this is because the
$\Delta_{i}(j)$ are additive gaps, not
multiplicative ones, and so by making the upper bound on the expected
number of votes for the correct urn smaller, the gap $\Delta_{i}(j)$
becomes larger relative to the mean.

\medskip

Recall that the correct urn is $p_i$. We upper-bound the
probability that a vote will go to $p_i$:
\begin{align*}
E_i(i)  & = p_i \cdot M^{-1} \cdot \left(b_i + \frac{M -B}{n}\right) + (1 -
p_i) \cdot M^{-1} \cdot \left(r_i + \frac{M -R}{n}\right)\\
& \le p_i \cdot \left(M^{-1} \cdot b_i + \frac{1}{n}\right) + (1 -
p_i) \cdot \left(M^{-1} \cdot r_i + \frac{1}{n}\right).
\end{align*}
Observe that, by the definition of $\epsilon$, we have that
$\ds{b_i = \sum_{\ell=1}^{i-1} \frac{2-(p_{\ell+1} +
  p_{\ell})}{p_{\ell+1} - p_{\ell}}}$ satisfies $b_i \le
\frac{2i}{\epsilon}$, and furthermore that $\ds{r_i = \sum_{\ell=i}^{n-1}
\frac{p_{\ell+1} + p_{\ell}}{p_{\ell+1} - p_{\ell}} \le
\frac{2(n-i)}{\epsilon}}$. Thus,
$$
E_i(i)   \le p_i \cdot \left(\frac{2i}{\epsilon M} + \frac{1}{n}\right) + (1 -
p_i) \cdot \left(\frac{2 (n-i)}{\epsilon M} + \frac{1}{n}\right)
\le \frac{2n}{\epsilon M} + \frac1{n}.
$$

We now give an upper bound on $M$. This will allow
us to apply a Chernoff bound and finish the proof.

\medskip

Recall that $M = \max(R,B)$; we will upper bound $R + B$ to get an
upper bound on $M$:
$$ R + B = \sum_{k=1}^{n} (r_k + b_k) \le \sum_{k=1}^n (r_1 + b_n) =
n \cdot (r_1 + b_n) = n \cdot
\sum_{\ell=1}^{n-1} \frac{2}{p_{\ell+1}-p_{\ell}} \le 2 \cdot \frac{n(n-1)}{\epsilon}.$$

It follows that
\begin{equation}\label{Mbound}
M \le \frac{2n(n-1)}{\epsilon}.
\end{equation}

\comment{
$p_{\left\lceil\frac n2\right\rceil} \le \nicefrac12$
($p_{\left\lfloor\frac n2\right\rfloor} \ge \nicefrac12$). Then, at
least
$\nicefrac n8$ of those $p_i$'s will be such that $p_{i+1} -
p_{i} \le \frac2n$ --- for otherwise they would span more than just the
interval $\left[0,\nicefrac12\right]$. Then, $B = \sum_{k=0}^n (N_k
\cdot b_k) \ge $}

Therefore, going back to the probability that an urn identical to 
the correct urn is voted for, we have
$$
E_i(i)  \le  \frac{2n}{\epsilon M} + \frac1{n}
 \le \frac{2n}{\epsilon M} + \frac1{n} \cdot \frac{2n(n-1)}{M
  \epsilon}
 \le \frac {4n}{\epsilon M}.
$$
Furthermore, since $\Delta_{i}(j) > 0$ for each $j \ne i$, we have that
the urn $p_i$ is the most likely urn to be voted for, and therefore
$E_i(i) \ge \frac{1}{n}.$

We are now ready to state the main theorem of the section. Its proof
employs a careful application of the Chernoff bound, and the
inequalities we have derived in this section.
\begin{theorem}\label{thm:upperb_pv}
Let urns $p_1, p_2, \ldots, p_n$ be given, with urn $p_i$ having a
$p_i$ fraction of blue balls, and a $1-p_i$ fraction of red balls. Let $0 \le p_1 < p_2 < \cdots < p_n \le 1$.
Also, let $\epsilon$ be $\epsilon = \min_{1 \le i \le n-1} (p_{i+1} -
p_i)$. Then, for Plurality Voting, $O\left(n^3\epsilon^{-2} \ln \eta^{-1}\right)$ voters are
sufficient to guarantee a probability of at least $1 -\eta$ that
the correct urn receives the most votes.
\end{theorem}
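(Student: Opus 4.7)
The plan is to apply a multiplicative Chernoff bound to compare, for each alternative urn $p_j$, the number of votes $X_j$ for $p_j$ against the number of votes $X_i$ for the correct urn, and then to union-bound over $j \ne i$. Since voters vote independently using the same randomized rule, $X_j \sim \text{Binomial}(m, E_i(j))$, with $E[X_i] - E[X_j] = m\Delta_i(j) \ge m|i-j|/M$ by inequality~(\ref{deltaij}).

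For each $j \ne i$, I would introduce the threshold $T_j = m(E_i(i) + E_i(j))/2$ halfway between the two expected vote counts, and bound $\Pr[X_i \le T_j] + \Pr[X_j \ge T_j]$. The additive deviation of either variable from its mean is $m\Delta_i(j)/2 \ge m|i-j|/(2M)$, while both means are at most $mE_i(i) \le 4mn/(\epsilon M)$. Hence the relative deviation on either side is at least $|i-j|\epsilon/(8n)$, and each of the two Chernoff tail exponents is at least
$$\Omega\!\left(\frac{(\Delta_i(j))^2\, m}{E_i(i)}\right) \;\ge\; \Omega\!\left(\frac{(i-j)^2\, m\, \epsilon\, M}{M^2\, n}\right) \;\ge\; \Omega\!\left(\frac{(i-j)^2\, m\, \epsilon^2}{n^3}\right),$$
where the last step uses $M \le 2n(n-1)/\epsilon$ from inequality~(\ref{Mbound}) to eliminate $M$.

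Union-bounding over $j \ne i$ then gives at most $2\sum_{k=1}^{\infty} \exp\!\left(-\Omega(k^2 m\epsilon^2/n^3)\right)$, a geometric-type sum dominated by its $k=1$ term $\exp(-\Omega(m\epsilon^2/n^3))$ as soon as $m = \Omega(n^3/\epsilon^2)$. Demanding that this dominant term be at most $\eta$ yields $m = O(n^3 \epsilon^{-2} \log \eta^{-1})$, which is the bound claimed in the theorem.

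The main technical subtlety is reconciling that $E_i(i)$ is small (essentially $1/n$ in the worst case) with the use of a multiplicative Chernoff bound: a naive accounting would either introduce an extra $\log n$ factor from union-bounding over all competing urns, or a polynomial overhead from tails of small-mean binomials. Both pitfalls are avoided because the additive gap $\Delta_i(j)$ grows linearly in $|i-j|$, so the Chernoff exponent scales like $(i-j)^2$ and the union bound collapses to a constant multiple of the leading term. This is the algorithmic manifestation of the observation made at the start of the section: a smaller $E_i(i)$ makes the gap $\Delta_i(j)$ larger relative to the mean, which is exactly what the Chernoff analysis exploits.
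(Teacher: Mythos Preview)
Your proposal is correct and follows essentially the same approach as the paper's proof: both apply a multiplicative Chernoff bound to each $X_j$, exploit the inequality $\Delta_i(j)\ge |i-j|/M$ together with $E_i(i)\le 4n/(\epsilon M)$ and $M\le 2n(n-1)/\epsilon$ to get a tail exponent of order $(i-j)^2 m\epsilon^2/n^3$, and then observe that the resulting union bound over $j$ is dominated by its $k=1$ term so no $\log n$ is lost. The only cosmetic difference is that you use a midpoint threshold $T_j=(E[X_i]+E[X_j])/2$ for each pair, whereas the paper fixes a single threshold $E[V_i]-m/(3M)$ for $V_i$ and thresholds $E[V_i]-2m/(3M)$ for each $V_j$; both bookkeeping choices lead to the same exponent and the same final bound.
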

\begin{proof}
Observe that $\epsilon \le \frac1{n-1}$.
Choose some $\eta \in (0,1)$, and suppose the number of players is
$ m = \left\lceil 108 \cdot \frac{M (n-1)}{\epsilon} \cdot \ln \frac4{\eta}\right\rceil$
 --- we will show that $m$
players will be enough to 
choose the correct option
with probability is at least $1 - \eta$. Observe that, given
our upper bound $M \le \frac{2n(n-1)}{\epsilon}$, we have
$$m \le  \left\lceil 216 \cdot \frac{(n-1)^2n}{\epsilon^2}\right\rceil.$$

Recall that we say that the players lose if an urn with a different
distribution from the unknown urn wins the election. 
We will upper-bound the probability that the players
lose, using the following form of the Chernoff bound:

\begin{theorem}[Chernoff bound]
Let $X_1, \ldots, X_m$ be independent $0/1$ random variables with
expectation $E[X_i] = p_i$, for $i = 1,\ldots, n$. Let $\mu = \sum_i p_i$. Then, for each
$\delta \ge 0$, it holds that
$$\Pr\left[\sum_i X_i > (1+\delta) \cdot \mu\right] \le
\exp\left(-\frac{\min\left(\delta,\delta^2\right)}{3} \cdot
\mu\right),$$
and,
$$\Pr\left[\sum_i X_i < (1-\delta) \cdot \mu\right] \le
\exp\left(-\frac{\delta^2}{3} \cdot
\mu\right).$$
\label{thm:chernoff}
\end{theorem}


We now show how to use Theorem~\ref{thm:chernoff}, together with
the bounds derived in Section~\ref{sec:fltwosig}, to prove
Theorem~\ref{thm:upperb_pv}.
Let $V_j$ be the number of votes to $p_j$ in
the random election, with unknown urn $i$. Then, $E[V_j] = E_i(j)
\cdot m$. We have,
\begin{align*}
\Pr[\text{the players lose}] & = \Pr[p_i \text{ did not collect more votes than any other urn}]\\
& \le \Pr\left[V_i < E[V_i] - \frac m{3M}\right] +
\sum_{\substack{j=0\\j \ne i}}^n \Pr\left[V_j > E[V_i] - \frac{2m}{3M}\right]
\end{align*}
Since $\Delta_i(j) \ge \frac{\left|i-j\right|}M$ we have $E[V_i] \ge E[V_j] + \frac{\left|i-j\right|}M\cdot m$, and
\begin{align*}
\Pr[\text{the players lose}]& \le \Pr\left[V_i < E[V_i] \left(1
  - \frac m{3 M E[V_i]}\right)\right] +
\sum_{\substack{j=0\\j \ne i}}^n
\Pr\left[V_j > E[V_j] \left(1+\frac{\left|i-j\right|}{M E[V_j]}\cdot m -
  \frac{2m}{3M E[V_j]}\right)\right]\\
& \le \Pr\left[V_i < E[V_i] \left(1
  - \frac m{3 M E[V_i]}\right)\right] +
\sum_{\substack{j=0\\j \ne i}}^n
\Pr\left[V_j > E[V_j] \left(1+\frac{\left|i-j\right|m}{3M E[V_j]}\right)\right]\\
& \le \exp\left(-\frac{m^2}{27 \, M^2 \, E[V_i]}\right) +
2\sum_{k=1}^n 
\exp\left(-\min\left\{\frac{km}{9M},\frac{k^2m^2}{27M^2E[V_j]}
\right\}\right),
\end{align*}
by $E[V_j] \le E[V_i] \le m \cdot \frac{4(n-1)}{\epsilon M}$, 
\begin{align*}
\Pr[\text{the players lose}] & \le \exp\left(-\frac{m^2 \, \epsilon \, M}{108 \, M^2 \, m \, (n-1)}\right) +
2\sum_{k=1}^n
\exp\left(-\min\left\{\frac{km}{9M},\frac{k^2m^2\epsilon M}{108\,M^2\,m\,(n-1)}
\right\}\right)\\
& \le \exp\left(-\frac{m \, \epsilon }{108 \, M \, (n-1)}\right) +
2\sum_{k=1}^n \exp\left(-\min\left\{\frac{km}{9M},\frac{k^2m\epsilon}{108\,M\,(n-1)}
\right\}\right)\\
& \le \exp\left(-\frac{m \, \epsilon }{108 \, M \, (n-1)}\right) +
2\sum_{k=1}^n \exp\left(-\frac{k^2 m \epsilon}{108 \, M \, (n-1)} \cdot \min\left\{\frac{12\,(n-1)}{k\epsilon},1
\right\}\right)\\
& \le \exp\left(-\frac{m \, \epsilon }{108 \, M \, (n-1)}\right) +
2\sum_{k=1}^n \exp\left(-k^2 \ln\frac4{\eta}\right)\\
& \le \frac{\eta}4 + 
2\sum_{k=1}^n \left(\frac{\eta}4\right)^{k} \le \frac{\eta}4 + 2 \cdot
\frac{\eta}{4-\eta} \le  \frac{\eta}4 + 2 \cdot \frac{\eta}3 < \eta.
\end{align*}
It follows that if $m = \Theta(n^3 \epsilon^{-2} \log \eta^{-1})$,
and voters apply the 
aforementioned voting scheme, the probability of
winning is at least $1 - \eta$.
\end{proof}

\section{A Connection to Proper Scoring Rules}
\label{sec:scoring}

In this section we discuss the connection between our upper bound
and the notion of a {\em proper scoring rule} 
\cite{gneiting-scoring-rules}.
We first show how to obtain a strategy for a set of voters 
in the two-signal case using proper scoring rules.\footnote{We are
grateful to Bobby Kleinberg for identifying this connection between voting
strategies and proper scoring rules.}
We then show that basing a strategy on proper scoring rules cannot
lead to an asymptotically tight result:
any voting strategy based on the functions arising
from the framework of proper scoring rules requires at least
$\Omega\left(n \epsilon^{-4}\right)$ voters.
This is weaker than the upper bound of $O(n^3 \eps^{-2} \log \eta^{-1})$
that we obtained in Section~\ref{sec:twosig} in two important respects.
First, by the pigeon-hole principle,
$\epsilon \le \frac1{n-1}$, and therefore approach from the
previous section is always at least as good as the approach 
based on proper scoring rules, and often much better.
More significantly, when $n$ is a constant, the approach via scoring rules
gives a dependence on $\eps$ of $O(\eps^{-4})$, whereas
our approach from Section~\ref{sec:twosig} gives $O(\eps^{-2})$,
which is optimal even if the group of voters could directly share
all their signals.  (In other words, even if there were just
a single voter who received all the signals.)

For our purposes in this discussion, it is not necessary to introduce
the full theory of proper scoring rules, but just to provide
a self-contained consequence of that theory.
The consequence is the following:
it is possible to construct pairs of non-negative functions $(f_0,f_1)$,
each defined over the interval $[0,1]$, with
the property that for all $z \in [0,1]$, the function
\begin{equation}
g_z(x) = z f_0(x) + (1 - z) f_1(x)
\label{eq:scoring-property}
\end{equation}
is uniquely maximized at $x = z$.
We will further assume that 
$f_0$ and $f_1$ each have continuous second derivatives,
which is true of the standard functions that arise from this theory.
This defining property of $f_0$ and $f_1$ is all we will need.

From a pair of such functions, here is how we can define a strategy
for each voter in the two-signal case.
We have a set of $n+1$ urns, where urn $i$ has
a probability $p_i$ of producing a blue ball.
We define $q_0 = \sum_i f_0(p_i)$ and $q_1 = \sum_i f_1(p_i)$, and
let $\qstar = \max(q_1,q_0)$.
Now, when a voter draws a blue ball, they vote for urn $i$
with probability proportional to $\df{f_0(p_i)}{\qstar} + \df{\qstar - q_0}{\qstar(n+1)}$;
if they draw a red ball, they vote for urn $i$
with probability proportional to $\df{f_1(p_i)}{\qstar} + \df{\qstar - q_1}{\qstar(n+1)}$.
We call this the strategy {\em induced by $f_0$ and $f_1$}.

Suppose the true urn is $t$; then the number of votes
for an urn $j$ is a random variable 
$X_j = \sum_v X_{jv}$,
where $X_{jv}$ is the indicator variable that voter $v$ votes for $j$.
With $k$ voters, we have
\begin{equation}
\Exp{X_j} = \frac{k}{\qstar} (p_t f_0(p_j) 
   + (1 - p_t) f_1(p_j)) + \frac{\left|q_1-q_0\right| k}{\qstar(n+1)}
 = \frac{k}{\qstar} g_{p_t}(p_j) + \frac{\left|q_1-q_0\right| k}{\qstar (n+1)}.
\label{eq:scoring-number-votes}
\end{equation}
By the defining property of $f_0$ and $f_1$, we see that 
$\Exp{X_j}$ is uniquely maximized at $j = t$.
Hence for a sufficiently large set of voters, the number
of votes received by urn $t$ will exceed the number received
by all other urns with high probability.

Thus, the strategy induced by any proper scoring rule will produce
the true urn with high probability when there are enough voters.
It can be viewed, in a sense, as a much simpler version of the construction
in Section~\ref{sec:twosig}, and we now show that this simpler approach
results in asymptotically larger number of voters.

\begin{theorem}
Let $f_0$ and $f_1$ be any functions with continuous second derivatives
that satisfy the defining property of proper scoring rules 
from Equation (\ref{eq:scoring-property}).
Then in order for the strategy induced by $f_0$ and $f_1$ to
identify the true urn with high probability, there must be
$\Omega(n \eps^{-4})$ voters.
\label{thm:scoring-lower-bound}
\end{theorem}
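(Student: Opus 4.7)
The plan is to exhibit, for any scoring pair $(f_0,f_1)$, a single bad instance on which the induced strategy requires $\Omega(n\eps^{-4})$ voters. Place $n$ urns uniformly inside an interior window, say $p_i = \frac{1}{2} + \bigl(i-\frac{n+1}{2}\bigr)\eps$ with all $p_i \in [\frac{1}{4},\frac{3}{4}]$, and let $t$ be the middle index. The heart of the argument compares $\Exp{X_t}$ to $\Exp{X_{t+1}}$: because $g_{p_t}$ is uniquely maximized at $p_t$, we have $g'_{p_t}(p_t)=0$, and the continuous second derivative yields by Taylor's theorem
\[
g_{p_t}(p_t) - g_{p_t}(p_{t+1}) \;=\; -\tfrac{1}{2}\,g''_{p_t}(\xi)\,\eps^2 \;\le\; \tfrac{M}{2}\,\eps^2,
\]
for a constant $M=M(f_0,f_1)$ bounding $|g''_z(x)|$ on $[0,1]^2$. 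Combined with equation~(\ref{eq:scoring-number-votes}), this gives $\Exp{X_t-X_{t+1}} \le Mk\eps^2/(2\qstar)$.

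Next I show $\qstar=\Theta(n)$ and that the per-voter variance is on the scale of $1/n$. A preliminary claim is that $f_0+f_1$ is uniformly positive on $[0,1]$: if some $x^*$ had $f_0(x^*)=f_1(x^*)=0$, then $g_{x^*}(x^*)=0$, and since $g_{x^*}\ge 0$ with unique maximum at $x^*$ this forces $g_{x^*}\equiv 0$, contradicting uniqueness. Compactness then gives $f_0+f_1\ge c>0$, so $\qstar=\Theta(n)$. The same continuity argument, restricted to the interior window, gives $g_{p_t}(p_t)\ge \gamma>0$, and hence $\Pr[v\text{ votes }t],\,\Pr[v\text{ votes }t+1] = \Omega(1/n)$. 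Let $Y_v = \mathbf{1}[v\text{ votes }t]-\mathbf{1}[v\text{ votes }t+1]$; since the two events are disjoint, $Y_v^2 = \mathbf{1}[v\text{ votes }t]+\mathbf{1}[v\text{ votes }t+1]$, giving $\Exp{Y_v^2}=\Omega(1/n)$, and so $\var(X_t-X_{t+1}) = \Omega(k/n)$, while $\Exp{X_t-X_{t+1}}=O(k\eps^2/n)$.

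The conclusion is a quantitative CLT/Berry--Esseen anti-concentration argument. The standardized gap satisfies
\[
\frac{\Exp{X_t-X_{t+1}}}{\sqrt{\var(X_t-X_{t+1})}}
\;=\; O\!\Bigl(\frac{k\eps^2/n}{\sqrt{k/n}}\Bigr)
\;=\; O\!\bigl(\eps^2\sqrt{k/n}\bigr),
\]
so if $k\le c\cdot n\eps^{-4}$ for a small enough constant $c$, this ratio is bounded, and Berry--Esseen (the third absolute moment of $Y_v$ is trivially at most $1$) forces $\Pr[X_{t+1}\ge X_t]$ to stay above a positive constant. The plurality winner then differs from $t$ with constant probability, contradicting high-probability correctness, so $k=\Omega(n\eps^{-4})$.

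The main obstacle I anticipate is the uniform-in-$n$ positivity of $f_0+f_1$ and of $g_{p_t}(p_t)$, since the proper-scoring property is only qualitative. The argument above threads this needle by deriving a single compactness constant $c>0$ from the non-vanishing lemma, and by confining the adversarial instance to an interior subinterval so that the minimum of $g_z(z)$ over relevant $z$ is also a positive constant independent of $n$.
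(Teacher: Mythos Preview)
Your proposal is correct and follows the same high-level skeleton as the paper's proof---Taylor expansion at the maximizer to bound the mean gap, then an anti-concentration step---but the two differ in execution in ways worth noting.

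First, where the paper spends effort locating a subinterval $[u,v]$ on which $\min(f_0,f_1)$ is bounded away from zero and on which the ratios $f_0(y)/f_0(x)$, $f_1(y)/f_1(x)$ are nearly constant, you replace this with the cleaner observation that $f_0+f_1$ cannot vanish anywhere: if $f_0(x^*)=f_1(x^*)=0$ then $g_{x^*}(x^*)=0$, and nonnegativity plus the unique-maximizer property force $g_{x^*}\equiv 0$, a contradiction. Compactness then gives $\qstar=\Theta(n)$ directly and lets you place the instance in any fixed interior window like $[\tfrac14,\tfrac34]$.

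Second, for the anti-concentration step the paper works with $X_t$ alone and argues, via a variance lower bound for Bernoulli sums, that $X_t$ drops below $(1-\delta)\Exp{X_t}\le\Exp{X_{t+1}}$ with constant probability unless $\Exp{X_t}=\Omega(\delta^{-2})$. You instead look at the signed difference $X_t-X_{t+1}$ and invoke Berry--Esseen. Your route is arguably more direct, since it compares the two urns in one stroke rather than bounding one side and leaving the comparison implicit.

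One small point to tidy: the Berry--Esseen error term is of order $\rho_3/(\sigma_Y^3\sqrt{k})$, and since $\sigma_Y^2=\Theta(1/n)$ (not $\Theta(1)$) this error is $\Theta(\sqrt{n/k})$, not $O(1/\sqrt{k})$. So the normal approximation is only usable once $k\gg n$. For $k=O(n)$ you should fall back on the trivial observation that $\Pr[X_t=0]\ge(1-O(1/n))^k$ is bounded below by a positive constant, whence urn $t$ loses outright. With that patch the two regimes together cover all $k\le c\,n\eps^{-4}$.
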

\begin{proof}
We start with a basic claim about sums of Bernoulli trials.
Let $X = \sum_{i^1}^k X_i$ be a sum of independent $0$-$1$
random variables, where $\Exp{X_i} = p_i \leq \frac12$.
The mean of $X$ is $\mu = \sum_{i=1}^k p_i$.
Then with constant probability, $X$ will deviate
by at least a constant multiple of $\sqrt{\var{X}}$ from $\mu$.
More concretely, there are absolute constants $\alpha > 0$ and
$\beta > 0$ so that with probability at least $\alpha$, 
we have $X < \mu - \beta \sqrt{\var{X}}$.
Now, since $$\var{X_i} = p_i (1 - p_i) \geq p_i / 2,$$ 
we have 
$$\var{X} = \sum_{i=1}^k \var{X_i} \geq \sum_{i=1}^k p_i/2 \geq \mu/2.$$
Now, for a given $\delta > 0$, suppose we have
$\mu < \beta^2 / (2 \delta^2)$.
Then equivalently, 
$\delta < \beta / \sqrt{2 \mu}$, so
$$\delta \mu < \beta \sqrt{\mu/2} \leq \beta \sqrt{\var{X}}.$$
Hence with probability at least $\alpha > 0$, we have
$X < (1 - \delta) \mu$.
It follows that in order to ensure $X \geq (1 - \delta) \mu$ with
probability going to $1$, we must have
$\mu \geq \Omega(1 / \delta^2)$.

Now, recall that there are $k$ voters, and 
consider the voting strategy induced by
the functions $f_0$ and $f_1$.
Since the first derivatives ${f_0'}$ and ${f_1'}$ are
continuous functions defined over the compact set $[0,1]$, 
there is a constant $c_1$ such that $|f_0'(x)|, |f_1'(x)| \leq c_1$
for all $x \in [0,1]$.
For the same reason, 
there is a constant $c_2$ such that $|f_0''(x)|, |f_1''(x)| \leq c_2$
for all $x \in [0,1]$.
Using the bound on the first derivative, for any $\gamma > 0$ 
we can find an interval $[u,v] \subseteq [0,1]$ 
such that the following hold:
(i) $\ds{d = \inf_{x \in [u,v]} \min(f_0(x),f_1(x)) > 0}$, 
(ii) $v/u < 1 + \gamma$, 
(iii) $(1 - u)/(1 - v) < 1 + \gamma$, 
$$\mbox{(iv)} ~ 
\sup_{x,y \in [u,v]} \frac{f_0(y)}{f_0(x)} < 1 + \gamma,
~ \mbox{and} ~
\mbox{(v)} ~
\sup_{x,y \in [u,v]} \frac{f_1(y)}{f_1(x)} < 1 + \gamma.$$
It follows that if our probabilities $p_0, p_1, \ldots, p_n$
all lie in this interval $[u,v]$, then 
$$\Exp{X_j} \in \left[\df{(1 - \gamma_1) k}{n},\df{(1 + \gamma_1)k}{n}\right]$$
for a constant $\gamma_1$ that goes to $0$ with $\gamma$.
Also, we have $\qstar \geq dn$.

Now, for any $\eps > 0$, we choose 
$p_0 \leq p_1 \leq \dots \leq p_n \in [u,v]$
such that $p_{j+1} - p_j = \eps$ for each $j$.
Let $t$ be the true urn, and let 
$$h(x) = \frac{k}{\qstar} (p_t f_0(x) 
   + (1 - p_t) f_1(x)) + \frac{\left|q_1 - q_0\right| k}{\qstar(n+1)}.$$
Notice that $\Exp{X_j} = h(p_j)$.
Now, Taylor's Theorem implies that for some $w \in [p_t,p_{t+1}]$, we have
$$h(p_{t+1}) = h(p_t) + (p_{t+1} - p_t) h'(p_t) 
   + \frac12 (p_{t+1} - p_t)^2 h''(w).$$
Since $h(x)$ has its global maximum at $x = p_t$, we have $h'(p_t) = 0$.
Moreover, since $|f_0''(x)|, |f_1''(x)| \leq c_2$
for all $x \in [0,1]$, we have
$$h''(w) = \frac{k}{\qstar} (p_t f_0''(x) + (1 - p_t) f_1''(x))
\leq \frac{k c_2}{d n}.$$
Writing $p_{t+1} - p_t = \eps$, we have
$$h(p_{t+1}) \geq h(p_t) - \frac{k c_2}{2 d n} \eps^2$$
Since $\Exp{X_j} = h(p_j)$, for all $j$, this implies
$$\Exp{X_{t+1}} \geq \Exp{X_t} - \frac{k c_2}{2 d n} \eps^2.$$
Since $\Exp{X_t} \geq \df{(1 - \gamma_1) k}{n}$, this implies
that $\Exp{X_{t+1}} \geq (1 - \delta) \Exp{X_t}$, where
$\delta = c_3 \eps^2$ and $c_3 = \df{c_2}{2 (1 - \gamma_1) d}$.

Now, using our initial fact about sums of Bernoulli trials, 
we must have $\Exp{X_t} \geq \Omega(1 / \delta^{2})$ in order
for $X_t$ to have a high probability of exceeding $(1 - \delta) \Exp{X_t}$.
Since $\Exp{X_t} \leq \df{(1 + \gamma_1) k}{n}$, this requires
$$\df{(1 + \gamma_1) k}{n} \geq \df{d_2}{c_3^{2} \eps^{4}}$$
for a constant $d_2 > 0$, 
and hence 
$$k \geq \frac{d_2 n}{(1 + \gamma_1) c_3^2 \eps^4}.$$
\end{proof}

\comment{We provide a proof of this theorem in the appendix.
As a sketch of how the proof works, suppose the true urn is $t$
and that there are $k$ voters, and
consider the function 
$$h(x) = \frac{k}{\qstar } (p_t f_0(x) 
   + (1 - p_t) f_1(x)) + \frac{\left|q_1 - q_0\right| k}{\qstar (n+1)}.$$
As we saw in Equation (\ref{eq:scoring-number-votes}), the expected number
of votes received by each urn $j$ is $h(p_j)$.
Now, since $h(x)$ has its global maximum at $x = p_t$, 
Taylor's Theorem implies that it looks like an inverted parabola
with zero derivative at $x = p_t$.
From this, and the fact that the votes must be spread out across
a linear (in $n$) number of urns, we can show that 
$h(p_{t+1}) \geq h(p_t) - a_0 \eps^2 k / n$ for a constant $a_0$.
This is the crucial point: the global maximum at $x = p_t$
implies that the number of votes to nearby urns falls off
only quadratically in the gap between their probabilities.
Since $h(p_t) = a_1 / n$ for a constant $a_1$, it follows that
the expected number of votes received by urn $t+1$ is only a
factor of $(1 - a_2 \eps^2)$ less than the expected number of
votes received by urn $t$, for a constant $a_2$.
Hence if the number of voters does not exceed a constant times
$n \eps^{-4}$, then there is a constant probability that
urn $t+1$ will receive more votes than urn $t$,
resulting in the wrong answer.}

\section{A Tight Lower Bound for Two Signals}


In this section we give a tight lower bound that confirms the
optimality of the voting scheme for two signals presented in
Section~\ref{sec:twosig}. We will start by introducing a class of
instances.
We will then prove a
combinatorial lemma on how certain parameters of any (asymmetric) 
voting system for these instances have to behave, and we use
the lemma to prove the lower bound.

\medskip

We start by defining the lower bound class of  
instances $\mathcal{I}(n,\epsilon)$, for
any $n \ge 2$ and $\epsilon \le \frac1{n-1}$. The $n$ urns in
$\mathcal{I}(n,\epsilon)$ are such that $p_i =
\dfrac{1-\epsilon(n-1)}2+(i-1)\epsilon$, for $i = 1, \ldots, n$. Then $0
\le p_1 \le
p_2 \le \cdots \le p_{n} \le 1$.


Each voter $t$ is defined by two probability distributions $(R_{1,t},
R_{2,t}, \ldots, R_{n,t})$, $(B_{1,t}, B_{2,t}, \ldots, B_{n,t})$: if
she draws a red (resp., blue) ball she will vote for urn $i$ with
probability $R_{i,t}$ (resp., $B_{i,t}$).

Given a voting scheme for $m$ voters (that is, $2m$ probability
vectors $(R_{i,t}), (B_{i,t})$), we define $B_i = m^{-1} \cdot
\sum_{t=1}^m B_{i,t}$ and $R_i = m^{-1} \cdot \sum_{t=1}^m R_{i,t}$,
for $i = 1, \ldots, m$. Thus the expected number of votes $E_i(j)$ to urn $j$,
if $i$ is the correct urn, will be equal to
$m \cdot E_i(j) = m \cdot \left(p_i \cdot B_{j} + (1-p_i) \cdot
R_j\right).$
We also define $\Delta_i(j) = E_i(i) - E_i(j)$ to be the
expected difference between the number of votes to $i$ and $j$, if $i$
is the correct urn, averaged over the $m$ voters.

We say that a voting scheme is {\em proper} 
if $\Delta_i(j) \ge 0$, for each $i,j$.
The challenge in proving the lower bound lies in the fact that
proper voting schemes can succeed in identifying the correct urn for
what seem to be a variety of different reasons, and so we need to
find a common property they have which implies that the correct
urn only ``narrowly'' wins the election over other urns with
very similar distributions.
This is the content of the following lemma.

\begin{lemma}\label{lemma:BR}
Let $n$ and $\epsilon$ be $n \ge 10$ and $\epsilon \le \frac1{n-1}$.
Then all proper voting schemes for $\mathcal{I}(n,\epsilon)$
satisfy:\begin{itemize}
\item[(a)] $B_1 \le B_2
\le \cdots \le B_{n} \le \frac{9}{n}$ and 
$\frac{9}{n} \ge R_1 \ge R_2 \ge \cdots \ge R_{n}$;
\item[(b)] $E_i(i) \le \frac{9}{n}$, for $i = 1, \ldots, n$.
\item[(c)] There exists a set $S \subseteq [n]$ and $\iota \in
  \{-1,+1\}$, with
$\left|S\right| \ge \frac{n}{4}-3\ln n -14$, such that
  for each $i \in S$, we have
$$\max\left(\left|R_i - R_{i+\iota}\right|, \left|B_i - B_{i+\iota}\right|\right) < e^{\frac72} \cdot \frac{\sqrt{\left|R_{i} -
    B_{i}\right|}}{n^{\frac32}},$$
and
$$\Delta_i(i+\iota), \Delta_{i+\iota}(i) \le 2 e^{\frac72}
  \cdot \epsilon \cdot \frac{\sqrt{\left|R_{i} -
    B_{i}\right|}}{n^{\frac32}}.$$
\end{itemize}
\end{lemma}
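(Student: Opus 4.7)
The plan is to establish (a) by an algebraic/averaging argument, deduce (b) as an immediate corollary, and then attack the harder statement (c) via a telescoping estimate on $\sqrt{|B_i - R_i|}$ combined with Markov's inequality. The monotonicity $B_1 \le \cdots \le B_n$ in (a) is proved by contradiction: if $i < j$ but $B_i > B_j$, then properness ($\Delta_j(i) \ge 0$) forces $R_j > R_i$, and then $\Delta_i(j) \ge 0$ together with $\Delta_j(i) \ge 0$ sandwich
\[
\frac{p_j}{1 - p_j} \;\le\; \frac{R_j - R_i}{B_i - B_j} \;\le\; \frac{p_i}{1 - p_i},
\]
contradicting $p_i < p_j$; the monotonicity of $R$ follows analogously. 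For the numerical bounds I plug $p_j = \tfrac12 + (j - \tfrac{n+1}{2})\epsilon$ into the expression for $E_j(j)$ and, using $\sum_j B_j = \sum_j R_j = 1$, derive the identity $\sum_j E_j(j) = 1 + \epsilon \sum_j j(B_j - R_j)$. Since $B$ is non-decreasing, $R$ is non-increasing, and both sum to $1$, the extremal configuration shows $\sum_j j(B_j - R_j) \le n-1$, so $\sum_j E_j(j) \le 1 + \epsilon(n-1) \le 2$. Summing the properness inequality $E_j(k) \le E_j(j)$ over $j$ (using the symmetry $\sum_j p_j = \sum_j(1-p_j) = n/2$) then yields $(B_k + R_k)\cdot n/2 \le 2$, i.e.\ $B_k + R_k \le 4/n$ for every $k$; specialising to $k = n$ and $k = 1$ gives $B_n, R_1 \le 4/n \le 9/n$. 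Part (b) is immediate afterwards, as $E_i(i)$ is a convex combination of $B_i \le B_n$ and $R_i \le R_1$, hence at most $9/n$.

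For part (c), let $f_i = B_i - R_i$ (non-decreasing by (a)) and $g_i = f_{i+1} - f_i = \beta_i + \rho_i \ge 0$, where $\beta_i = B_{i+1} - B_i$ and $\rho_i = R_i - R_{i+1}$, and let $k$ be the largest index with $f_k \le 0$. A direct calculation gives $\Delta_i(i+1) + \Delta_{i+1}(i) = \epsilon g_i$, so both $\max(\beta_i,\rho_i)$ and the $\Delta$-gaps in the statement are simultaneously controlled by the single quantity $g_i$; hence it suffices to find a large set of indices at which $g_i \le e^{7/2}\sqrt{|f_i|}/n^{3/2}$ (for $\iota = +1$), or $g_{i-1} \le e^{7/2}\sqrt{|f_i|}/n^{3/2}$ (for $\iota = -1$). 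The key telescoping observation is that on $1 \le i < k$ one has $|f_{i+1}| \le |f_i|$, so
\[
\frac{g_i}{\sqrt{|f_i|}} \;=\; \frac{|f_i| - |f_{i+1}|}{\sqrt{|f_i|}} \;\le\; 2\bigl(\sqrt{|f_i|} - \sqrt{|f_{i+1}|}\bigr),
\]
and a symmetric estimate holds on $k+2 \le i \le n$. Summing and invoking $|f_1| \le B_1 + R_1 \le 4/n$ and $|f_n| \le B_n + R_n \le 4/n$ from part (a),
\[
\sum_{i=1}^{k-1} \frac{g_i}{\sqrt{|f_i|}} \;\le\; \frac{4}{\sqrt{n}}, \qquad \sum_{i=k+2}^{n} \frac{g_{i-1}}{\sqrt{|f_i|}} \;\le\; \frac{4}{\sqrt{n}}.
\]
Markov's inequality against the threshold $e^{7/2}/n^{3/2}$ then shows that each range contains at most $4n/e^{7/2}$ ``bad'' indices, and picking whichever $\iota$ corresponds to the larger good subrange produces $|S| \ge (n-2)/2 - 4n/e^{7/2} \ge n/4$ for $n$ sufficiently large.

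The main obstacle is the careful handling of the crossing point $k$: there $|f_i|$ can be arbitrarily small and the telescoping bound degenerates, so a short window of indices around $k$ must be discarded, and a more refined (likely dyadic) accounting in the Markov step is presumably what produces the logarithmic and additive slack $-3\ln n - 14$ in the lemma's bound on $|S|$. Parts (a) and (b), by contrast, are essentially routine consequences of properness combined with the symmetric structure of the instance class $\mathcal{I}(n,\epsilon)$.
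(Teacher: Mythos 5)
Your proof is correct, and for the quantitative parts it takes a genuinely different --- and in places sharper --- route than the paper's. The monotonicity argument in (a) is the paper's up to algebraic packaging, but for the numerical bound the paper does not average: it shows at least $n/2$ indices have $B_i+R_i\le 4/n$, selects one whose $p_i$ is close to $\frac12$, and plays properness of that single urn against a hypothetical $B_n>9/n$ (or $R_1>9/n$). Your identity $\sum_j E_j(j)=1+\epsilon\sum_j j(B_j-R_j)\le 2$, combined with $\sum_j E_j(k)=\frac n2(B_k+R_k)$ and properness, gives the stronger conclusion $B_k+R_k\le 4/n$ for \emph{every} $k$ in two lines, and (b) follows as you say. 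For (c), the paper also reduces everything to the increments $\beta_i+\rho_i$ (its final computation is exactly your identity $\Delta_i(i+\iota)+\Delta_{i+\iota}(i)=\epsilon(\beta+\rho)$), but it counts the ``bad'' indices by a batched multiplicative-decay recursion on $\delta_i=|R_i-B_i|$ (each bad step shrinks $\delta$ by $e^{7/2}\sqrt{\delta}/n^{3/2}$, and tracking how many steps can occur while $\delta$ falls from $O(1/n)$ in geometric stages yields a bad count of $n/4+3\ln n+O(1)$, which is where the slack in the lemma comes from). Your telescoping of $\sqrt{|f_i|}$ plus Markov is simpler and tighter: on each side of the crossing index the bad count is at most $4n/e^{7/2}\approx 0.12\,n$, so no dyadic refinement near $k$ is needed (discarding the two indices $k,k+1$ already suffices), and you obtain $|S|\ge (n-2)/2-4n/e^{7/2}$, which dominates $n/4-3\ln n-14$ for all $n\ge 10$; the logarithmic term is an artifact of the paper's counting, not something your argument must reproduce.

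One caveat, shared with the paper rather than specific to you: if $R_i=B_i$ exactly on a stretch left of the crossing, then $g_i=0$ as well, your Markov count treats such an index as good, but the lemma's \emph{strict} inequality cannot hold there because its right-hand side vanishes. The paper's proof has the identical degeneracy (its recursion divides by $\delta_{i_t}$ and its contradiction step needs $\delta>0$; the uniform proper scheme $B_i=R_i=1/n$ satisfies no instance of the strict inequality), and the lemma is only used downstream through the non-strict $\Delta$ bounds, under which your argument goes through. To preserve strictness at the Markov threshold in the nondegenerate case, simply declare an index bad when the ratio is at least $e^{7/2}/n^{3/2}$ rather than strictly larger.
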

The crux of the lemma is to show that for many pairs of 
urns $i, i + \iota$, the election will be very ``close'':
if $i$ is the correct urn, it does not win the election 
by a large margin over $i+\iota$ in expectation (and
vice versa). 
The lemma shows further that, averaged over the
voters, the difference between the probability of voting 
for $i$ given a red (resp., blue)
ball and the probability of voting for $i+\iota$ given the same color
is small. 

This upper bound is crucial for the proof of the lower-bound theorem, 
stated next: we
will show that --- even if we only cared about urns $i, i+ \iota$ ---
the variance of a voter's choice can be lower-bounded by
$\Omega(\left|R_{i,t} - B_{i,t}\right|)$. This, assuming that the
total variance is at least some constant, will allow us to apply
an anti-concentration inequality to show that 
the expected margin $\Delta_i(i + \iota)$ of urn $i$ over urn $i + \iota$
will be surpassed by $\Theta\left(\ln \eta^{-1}\right)$ 
standard deviations of the number of votes to urn
$i$ and $i+\iota$.  It will follow that with probability $\Omega(\eta)$ the
election will be won by the wrong urn. 
Again, this argument requires that the variance be at least some 
sufficiently large constant; 
if the variance is actually
smaller than this constant, we will use a different argument showing
that the voting system is sufficiently ``inflexible'' that if
urn $i$ wins when it is correct, the same pattern of votes is
likely to also arise --- favoring $i$ --- when $i + \iota$ is actually correct.

\begin{proof}
We will first show that in a proper voting scheme, for each $i < j$
it holds that $B_i \le B_j$ and $R_i \ge R_j$. This implies $B_0 \le B_1
\le \cdots \le B_{n-1}$ and $R_0 \ge R_1 \ge \cdots \ge
R_{n-1}$. By contradiction,
\begin{itemize}
\item if $B_i < B_j$ and $R_i < R_j$, then $E_{k}(j) >
  E_{k}(i)$ for each $k$: in particular for $k = i$, which would give $\Delta_i(j) < 0$, contradicting the properness of the voting scheme;
\item the same argument gives a contradiction if $B_i > B_j$ and $R_i
  > R_j$ (choosing $k = j$);
\item finally, assume $B_i > B_j$, $R_i < R_j$, and that $E_{i}(i) >
  E_{i}(j)$, $E_{j}(i) < E_{j}(j)$; then,
\begin{align*}
E_{i}(i) - E_{j}(i) & > E_{i}(j) - E_{j}(j)\\
(p_i - p_j) B_i + (p_j - p_i) R_i & > (p_i - p_j) B_j + (p_j - p_i)
R_j\\
(p_i - p_j) (B_i-B_j) & > (p_j - p_i)(R_j - R_i),
\end{align*}
then, by $p_i < p_j$, we have
$$  (R_j - R_i) + (B_i - B_j) < 0,$$
which is impossible since the left-hand side
is positive by $B_i > B_j$ and $R_j > R_i$.
\end{itemize}
It follows that $B_i \le B_j$ and $R_i \ge R_j$. We now show that $B_n
\le \frac9n$ (resp., $R_1 \le \frac9n$). Since $\{B_i\}_{i=1}^n$ and $\{R_i\}_{i=1}^n$ are probability
distributions, one has that $\left|\left\{i \mid B_i+R_i \le \frac{4}n\right\}\right| \ge \frac n2$, for otherwise
$$2 = \sum_{i=1}^n  B_i + \sum_{i=1}^n
 R_i \ge \sum_{\substack{i=1\\B_i +
    R_i > \frac4n}}^n E_i(i) > \frac n2 \cdot \frac{4}n
\ge 2.$$
Since $p_{\lfloor\frac{n+1}2\rfloor} \ge \frac{1-\epsilon}2$ (resp.,
$p_{\lceil\frac{n+1}2\rceil} \le \frac{1+\epsilon}2$), it follows that there exists some $i$ such that $B_i + R_i \le
\frac4n$ and $p_i \ge \frac{1-\epsilon}2 $ (resp., $p_i \le \frac{1+\epsilon}2$). Observe
that $E_i(i) \le \frac4n$. Now, by contradiction, let $B_n > \frac9n$
($R_1 > \frac9n$); by $n \ge 10$ we have $\epsilon \le \frac1{n-1} \le
\frac19$; therefore
$p_i \ge \frac49$ ($p_i \le \frac59$) one
has $E_i(n) \ge p_i \cdot B_n > \frac4n$ ($E_i(1) \ge p_i \cdot
R_1 > \frac4n$). It follows that $\Delta_i(n)$ ($\Delta_i(1)$) is
negative, contradicting properness.

\comment{
\medskip

We now move on to the rest of the statement.

Let $S_r \subseteq[n]$ (resp., $S_b \subseteq [n]$) be the set of indices $i$ for which $R_i <
\frac{10}{n}$ (resp., $B_i < \frac{10}n$). Since $R_i \ge 0$, for each $i = 0,\ldots, n-1$, and
$\sum_{i=0}^{n-1} R_i = 1$, we have that $\left|[n] - S_r\right| \le
\frac{n}{10}$ (for otherwise the sum $\sum_{i \in ([n] - S_r)} R_i$
would be larger than $\sum_{i \in [n]} R_i$, a contradiction). It
follows that $\left|S_r\right| \ge \frac9{10} n$.

 Analogously, we have  $\left|S_b\right| \ge
\frac9{10} n$.

\medskip

Let $S'_r \subseteq S_r \cap [n-1]$ (resp., $S'_b \subseteq S_b \cap [n-1]$) be the set of indices $i$ for
which $i, i+ 1 \in S_r$ (resp., $i, i+1 \in S_b$). Since each index $i
\in [n-1]$ missing in $S_r$ is
part of at most two couples of consecutive elements (that is,
$(i-1,i)$ and $(i,i+1)$), we have that
$\left|S'_r\right| \ge (n-1) - 2\left|[n] - S_r \right| \ge n-1 -
2\frac{n}{10} \ge \frac45n - 1$.

 The same reasoning gives
$\left|S'_b\right| \ge \frac45 n - 1$.}

\medskip

We define $\delta_i = \left|R_i - B_i\right|$. Then $\delta_n \le
B_n\le \frac9n$ and $\delta_1 \le R_1 \le \frac9n$.
Let $k_r$ be the largest integer such that $R_{k_r+1}
\ge B_{k_r+1}$, and $k_b$ be the largest integer such that $B_{n-k_b}
\ge R_{n-k_b}$. Observe that (a) if $i \le k_r$ then $\delta_i \ge
\delta_{i+1}$, (b) if 
$i \ge n-k_b+1$ then $\delta_{i} \ge \delta_{i-1}$, and (c) $k_r + k_b
\ge n-2$. By (c) at least one
of $k_r$ and $k_b$ has to be at least $\frac n2 - 1$. We let $S_R$ and $S_B$ be
\begin{align*}
S_R &= \left\{i \mid R_{i+1} \ge B_{i+1} \wedge \max\left(R_i - R_{i+1}, B_{i+1} -
B_i\right) < e^{\nicefrac72} \cdot \frac{\sqrt{R_{i} -
    B_{i}}}{n^{\nicefrac32}}\right\},\\
S_B &= \left\{i \mid B_{i-1} \ge R_{i-1} \wedge \max\left(R_{i-1} - R_{i}, B_{i} -
B_{i-1}\right) < e^{\nicefrac72} \cdot \frac{\sqrt{B_i -
    R_{i}}}{n^{\nicefrac32}}\right\}.
\end{align*}
Then $S_R \subseteq \{1, 2, \ldots, k_r\}$ and $S_B \subseteq \{n-k_b+1,
n-k_b+2, \ldots, n-1,n\}$.
 We consider two cases:\begin{itemize}
\item suppose $k_r \ge \frac n2-1$.  We relabel the
  element in $\overline{S}_R = [k_r] - S_R$, using $r = \left|\overline{S}_R\right|$:
$$\overline{S}_R = \{i_1, i_2, \ldots, i_{r}\},$$
with $i_1 < i_2 < \cdots < i_{r}$. We have $\delta_1 \ge \delta_2 \ge
  \cdots \ge \delta_{k_r} \ge \delta_{k_r+1}$. Then, for $1 \le t \le
  r - 1$,
\begin{equation}\label{delta_eqn}
\delta_{i_{t+1}} \le \delta_{i_t+1} \le \delta_{i_t} -
e^{\nicefrac72} \cdot \frac{\sqrt{R_{i_t} - B_{i_t}}}{n^{\nicefrac32}} =\delta_{i_t} -
e^{\nicefrac72} \cdot \frac{\sqrt{\delta_{i_t}}}{n^{\nicefrac32}} =
\delta_{i_t} \cdot \left(1 - \sqrt{\frac{e^7}{n^3 \cdot
    \delta_{i_t}}}\right),
\end{equation}
and $\delta_{i_1} \le \frac9n \le \frac{e^{3}}n$; we define $\alpha_k
= e^{3-k}$ so that $\delta_{i_1} \le \alpha_0 \cdot n^{-1}$.  Let
$\ell_0 = 1$ and $\ell_k = \left\lceil n \cdot e^{-\nicefrac{(k+3)}2}\right\rceil$, for $k \ge 1$. We also let $L(k) = \sum_{j=0}^k
\ell_j$. We will
show by induction on $k$ that $\delta_{i_{L(k)}} \le \alpha_k \cdot n^{-1} = e^{3-k} \cdot
n^{-1}$. The case $k = 0$ has already been verified. We assume $k \ge
1$. Then,
\begin{align*}
\delta_{i_{L(k)}} &\le \delta_{i_{L(k)-1}} \cdot \left(1-
\sqrt{\frac{e^7}{n^3\cdot \delta_{i_{L(k)-1}}}}\right) \le \delta_{i_{L(k-1)}} \cdot \left(1-
\sqrt{\frac{e^7}{n^3\cdot \delta_{i_{L(k-1)}}}}\right)^{\ell_k}\\
&\le \delta_{i_{L(k)-1}} \cdot \left(1-
\sqrt{\frac{e^{k+3}}{n^2}}\right)^{\ell_k} = \delta_{i_{L(k-1)}} \cdot
\left(1- \frac{e^{\frac{k+3}2}}n\right)^{\ell_k}\\
&= \delta_{i_{L(k-1)}} \cdot
\left(1-
\frac{e^{\frac{k+3}2}}n\right)^{\left\lceil\frac{n}{e^{\nicefrac{(k+3)}2}}\right\rceil}
\le \delta_{i_{L(k-1)}} \cdot e^{-1} \le e^{3-k} \cdot n^{-1}.
\end{align*}
Now, if $k \ge 11 + 3\ln n$, we have $\delta_{i_{L(k)}} \le n^3 \cdot
  e^{-8}$; by~(\ref{delta_eqn}), we would then get
$\delta_{i_{L(k)+1}} \le \delta_{i_{L(k)}} \cdot \left(1 - e\right) <
  0$ --- since $\delta_{i_r} \ge 0$, by $i_r \le k_r$, this implies that $r =
  \left|\overline{S}_R\right| < L\left(\left\lceil11+3\ln
  n\right\rceil\right)$.
We now upper bound $L(k)$ to get an upper bound on
$r = \left|\overline{S}_R\right|$:
\begin{align*}
L(k) &= \sum_{j=0}^k \ell_j = 1 + \sum_{j=1}^k \left\lceil n \cdot
e^{-\frac{k+3}2}\right\rceil \\
&= k+1 + n \cdot \sum_{j=1}^k e^{-\frac{k+3}2} \le k+1 + n  \cdot
e^{-\nicefrac52} \cdot \sum_{j=0}^{\infty} e^{-\nicefrac k2}\\
& = k + 1
+ n \cdot e^{-\nicefrac52} \cdot \frac1{1-e^{-\nicefrac12}} = k + 1
+  \frac n{e^{\nicefrac52} - e^2} \le \frac n4 + k+1.
\end{align*}
It follows that $r = \left|\overline{S}_R\right| <
L\left(\left\lceil11+3\ln n\right\rceil\right) \le \frac  n4 +
\left\lceil11+3\ln n\right\rceil + 1 \le \frac n4 + 3\ln n + 13$, and
therefore
$$\left|S_R\right| \ge k_r -  \frac n4 - 3\ln n - 13 \ge \frac n4 -
3\ln n - 14.$$

\item Otherwise, $k_r < \frac n2-1$ and therefore $k_b > \frac n2
  -1$. A proof similar to the previous case gives
$$\left|S_B\right| \ge k_b -  \frac n4 - 3\ln n - 13 \ge \frac n4 -
3\ln n - 14.$$
\end{itemize}


 Therefore, at least one of $S_R$ and $S_R$ has cardinality at
least $\frac{n}{4}-3\ln n-14$. If $S_R$ is the largest one we pick $\iota = 1$
and $S = S_R$. Otherwise, we pick $\iota = -1$ and $S = S_B$. Observe
that the choice satisfies the first requirement of point (c) in the statement.

\medskip

We now prove the second requirement of point (c). Let $i$ be an element of $S$,
$\beta = B_{i} - B_{i+\iota}$, and $\rho = R_i - R_{i + \iota}$. Then,
$\left|\beta\right| = -\iota\beta $ and $\left|\rho\right| = \iota\rho$. Also,
$$\left|\beta\right|, \left|\rho\right| \le e^{\frac72} \cdot \frac{\sqrt{\left|R_{i} -
    B_{i}\right|}}{n^{\frac32}}.$$
Recall that $\Delta_{i}(i+\iota) = E_{i}(i) - E_{i}(i+\iota) = \beta p_{i} + \rho
(1-p_{i})$ and $\Delta_{i+\iota}(i) = E_{i+\iota}(i+\iota) - E_{i+1}(i) = -\beta
p_{i+\iota}-\rho (1-p_{i+\iota})$. Suppose that at least one of $\Delta_{i}(i+\iota)$ and
$\Delta_{i+\iota}(i)$ is larger than $2e^{\frac72}\epsilon \frac{\sqrt{\left|R_{i} -
    B_{i}\right|}}{n^{\frac32}}$. By the properness of the voting system, we
would have:
\begin{align*}
\Delta_{i}(i+\iota) + \Delta_{i+\iota}(i) & > 2e^{\frac72} \cdot \epsilon \cdot \frac{\sqrt{\left|R_{i} -
    B_{i}\right|}}{n^{\frac32}}\\
\beta (p_{i} - p_{i+\iota}) + \rho
(p_{i+\iota}-p_{i}) & > 2e^{\frac72} \cdot \epsilon \cdot \frac{\sqrt{\left|R_{i} -
    B_{i}\right|}}{n^{\frac32}},
\end{align*}
by the definition of the instance, we have that $p_{i+\iota}-p_{i} =
\iota \cdot \epsilon$, therefore we would have
$$\left|\beta\right|  + \left|\rho\right|  > 2e^{\frac72} \cdot \frac{\sqrt{\left|R_{i} -
    B_{i}\right|}}{n^{\frac32}}$$
which would imply that at least one of $\left|\beta\right|$ and $\left|\rho\right|$ is larger
than $e^{\frac72} \cdot \frac{\sqrt{\left|R_{i} -
    B_{i}\right|}}{n^{\frac32}}$, a contradiction. It follows that both
$\Delta_{i}(i+\iota)$ and $\Delta_{i+\iota}(i)$ are less than or equal
$$\Delta_i(i+\iota), \Delta_{i+\iota}(i) \le 2e^{\frac72} \cdot \epsilon \cdot \frac{\sqrt{\left|R_{i} -
    B_{i}\right|}}{n^{\frac32}}.$$

\comment{
 Let $S \subseteq [n]$ be
the set of indices $i$ for which
$$\max\left(R_i - R_{i+1}, B_{i+1} -
B_i\right) \le e^{\nicefrac72} \cdot \frac{\max\left(\delta_{i}, \delta_{i+1}\right)}{n^{\nicefrac32}}.$$

We show that $S$ is large. . Observe that $R_i - B_i = \delta_i$ iff $i \le k$. Then,\begin{itemize}
\item if $k \ge \frac n2$, let $S' =
  S \cap [k] = \left\{t_1, t_2, \ldots,
  t_{\left|S'\right|}\right\}$, with $t_1 < t_2 < \cdots <
  t_{\left|S'\right|}$.

 We show that $\left|S'\right| \ge
  ...$. Observe that by $S' \subseteq S$ we have

 We define $\ell_i = \left\lceil n \cdot e^{-2-\nicefrac
    i2}\right\rceil$, $\delta_i = \frac{e^{3-i}}n$, for $i = 0, 1,
  \ldots, \left\lceil 2 + 2\ln n \right\rceil$. Given $1 \le i \le k$,
  we define $n(i)$ to be the smallest non-negative integer such that
  $i \le \sum_{j=0}^{n(i)} \ell_j$.

We show that 

 We show by induction that, for
  each $i$, one has

We show by induction
  that, for each 

\item if $k < \frac n2$, let $S' = S \cap ([n] - [k])$.
\end{itemize}

otherwise, let for at least $\nicefrac n2$ indices $i$ it holds that
  $B_i \le R_i$; let $i^*$ be the smallest index such that $B_{i^*}
  \ge R_{i^*}$;

that $\left|S''_r\right| \ge \frac{11}{20}n-2$; to do so, we
partition the segment $\left[0,\frac{10}n\right]$ (that contains all
the $R_i$ for which $i \in S'_r$, by $S'_r \subseteq S_r$) into
$\left\lceil\frac{n}{4}\right\rceil + 1$ subsegments, the $i$th of
which, $i = 1, \ldots, \left\lceil\frac{n}4\right\rceil$ will be
$\left[(i-1)\left\lceil\frac{n}4\right\rceil^{-1}\frac{10}n,
  i\left\lceil\frac{n}4\right\rceil^{-1}\frac{10}n\right)$, while the last
  one will be $\left\{\frac{10}n\right\}$. Observe that the diameter of the subsegments is
  upper bounded by $\frac{40}{n^2}$. Observe
that if $R_i,R_{i+1}$ are in the same subsegment, their difference is
upper bound by its diameter, $R_i - R_{i+1} \le \frac{40}{n^2}$. By the first part of the lemma,
the $R_i$'s are monotonically decreasing; it then follows that the number of
different $i \in S'_r$ for which $R_i$ and $R_{i+1}$ are in different
subsegments is upper bounded by the number of subsegments minus 1:
that is, by $\left\lceil \frac n4\right\rceil$.
Therefore, the size of $S''_r \subseteq S'_r$ is at least $\left|S''_r\right| \ge \left|S'_r\right| - \left\lceil \frac
n4\right\rceil \ge \frac45n - 1 - \frac n4 - 1 = \frac{11}{20}n-2$.

Using the same reasoning, if $S''_b \subseteq S'_b$ is the set of
indices in $S'_b$ for which $B_{i+1} - B_i \le \frac{40}{n^2}$, we get
$\left|S''_b\right| \ge \frac{11}{20}n-2$.

\medskip

We now show that $S''_r \cap S''_b$ is a large set. Observe that $\left|S''_r\right| + \left|S''_b\right| \ge
\frac{11}{10}n -4$, and that, since $S''_r \cup S''_b \subseteq S'_r \cup S'_b \subseteq
[n-1]$, it follows that $\left|S''_r \cap S''_b\right| \ge \left(\frac{11}{10}n -4\right)-(n-1) = \frac n{10} -3$.

\medskip

We have thus shown that for each element in $i \in S''_r \cap S''_b$, it holds that $B_{i} \ge B_{i+1} -
\frac{40}{n^2}$, and $R_{i+1} \ge R_{i} -
\frac{40}{n^2}$. Furthermore,
 by $S'_r \cap S'_b \subseteq S''_r \cap S''_b$ one has that
 $B_{i}, B_{i+1}, R_{i}, R_{i+1} \le \frac{10}n$ and
 $E_{i}(i), E_{i+1}(i+1) \le \frac{10}n$, since both
 $E_{i}(i)$ and $E_{i+1}(i+1)$ are convex combinations of
 $B_{i}, B_{i+1}, R_{i}, R_{i+1}$.
 The proof is then concluded.}

\end{proof}

\comment{
We show the following lemma:
\begin{lemma}\label{exp_lb_lemma}
In any proper voting scheme for $\mathcal{I}(n,\epsilon)$, with $n \ge
60$ and $\epsilon \le \frac1{n-1}$,
there exists urns $i^*,i^*+1$ such that:\begin{itemize}
\item $E_{i^*}(i^*), E_{i^*+1}(i^*+1) \le \frac{10}n$, and
\item $\Delta_{i^*}(i^*+1), \Delta_{i^*+1}(i^*)  \le
  \frac{80\epsilon}{n^2}$.
\end{itemize}
\end{lemma}
\begin{proof}
Lemma~\ref{lemma:BR} guarantees that
there exists an index $0 \le i^* \le n-2$  such
that $E_{i^*}(i^*), E_{i^*+1}(i^*+1) \le \frac{10}n$, and
$B_{i^*+1} - B_{i^*} \le \frac{40}{n^2}$ and $R_{i^*} - R_{i^*+1}\le
\frac{40}{n^2}$.

\smallskip

Let $\beta = B_{i^*+1} - B_{i^*}$, and $\rho = R_{i^*} -
R_{i^*+1}$. We have
$\beta, \rho \le \frac{40}{n^2}$.

Observe that $\Delta_{i^*}(i^*+1) = E_{i^*}(i^*) - E_{i^*}(i^*+1) = -\beta p_{i^*} + \rho
(1-p_{i^*})$ and $\Delta_{i^*+1}(i^*) = E_{i^*+1}(i^*+1) - E_{i^*+1}(i^*) = \beta
p_{i^*+1}-\rho (1-p_{i^*+1})$. Suppose that at least one of $\Delta_{i^*}$ and
$\Delta_{i^*+1}$ is larger than $\frac{80\epsilon}{ n^2}$. Then,
\begin{align*}
\Delta_{i^*} + \Delta_{i^*+1} & > \frac{80\epsilon}{n^2}\\
\beta (p_{i^*+1} - p_{i^*}) + \rho
(p_{i^*+1}-p_{i^*}) & > \frac{80\epsilon}{n^2},
\end{align*}
by the definition of the instance, we have that $p_{i^*+1}-p_{i^*} =
\epsilon$, therefore we would have
$$\beta  + \rho  > \frac{80}{ n^2}$$
which would imply that at least one of $\beta$ and $\rho$ is larger
than $\frac{40}{n^2}$, a contradiction. It follows that both
$\Delta_{i^*}(i^*+1)$ and $\Delta_{i^*+1}(i^*)$ are less than or equal
$\frac{80\epsilon}{n^2}$, and the proof is complete.
\end{proof}

\begin{lemma}\label{lemma_improper}
Take any voting scheme for $\mathcal{I}(n,\epsilon)$, with $n \ge
2$ and $\epsilon \le \frac1{n-1}$. If
 there exists
$1 \le k \le n$ such that $E_k(k) \ge \frac{44}n$, then the voting
scheme is improper; specifically there exists
$1 \le k' \le n$ such that $E_{k'}(k') \le \frac{10}n$, $E_{k'}(k) \ge
\frac{11}n$, and
$\Delta_{k'}(k) \le -\frac1n$.
\end{lemma}
\begin{proof}
We define $S_r$ and $S_b$ as in Lemma~\ref{lemma:BR}:
$S_r \subseteq[n]$ (resp., $S_b \subseteq [n]$) is the set of indices $i$ for which $R_i <
\frac{10}{n}$ (resp., $B_i < \frac{10}n$). By $R_i \ge 0$, for each $i = 0,\ldots, n-1$, and
$\sum_{i=0}^{n-1} R_i = 1$, we have  $\left|[n] - S_r\right| \le
\frac{n}{10}$. It
follows that $\left|S_r\right| \ge \frac9{10} n$.
 Analogously, we have  $\left|S_b\right| \ge
\frac9{10} n$.  It follows that if we let $S = S_r \cap S_b \subseteq[n]$, then $\left|S\right| \ge \frac45 \cdot
n$. This implies that there exists $i^{-} \in S$ such that $p_i \le
\frac12$, and some $i^{+} \in S$ such that $p_i \ge \frac12$.

Since $E_{i^-}(i^-)$ and $E_{i^+}(i^+)$ are convex combinations of $R_{i^-}, B_{i^-},
R_{i^+}, B_{i^+} \le \frac{10}n$, we have $$E_{i^-}(i^-), E_{i^+}(i^+)
\le \frac{10}n.$$
 We will either choose $k' = i^-$ or $k' = i^+$, so
the upper bound given in the statement on $E_{k'}(k')$ is proved.

\medskip

Let $k$ be as in the statement. We have
$$E_k(k) = p_k \cdot B_k + (1-p_k) \cdot R_k \ge \frac{44}n.$$
Suppose $p_k \cdot B_k \ge (1-p_k) \cdot R_k$. Then, $p_k \cdot B_k
\ge \frac{22}n$, and $B_k \ge \frac{22}n$. Then,
 by choosing  $k' = i^+$, we have $p_{k'} \ge \frac12$; therefore
 $E_{k'}(k) \ge p_{k'} \cdot B_k \ge \frac{11}n$.

\smallskip

Otherwise $(1-p_k) \cdot R_k \ge p_k \cdot B_k$, $(1-p_k) \cdot
R_k \ge \frac{44}n$ and $R_k > \frac{22}n$. By choosing $k' = i^-$, we
have $p_{k'} \le \frac12$, and $E_{k'}(k) \ge (1-p_{k'}) \cdot R_k \ge
\frac{11}n$.

\smallskip

Therefore, $\Delta_{k'}(k) = E_{k'}(k') - E_{k'}(k) \le \frac{10}n -
\frac{11}n \le -\frac1n$, and the proof is concluded.
\end{proof}
}

\comment{
\begin{lemma}
We let $i$ be the unknown urn, and $X_t$ be the random
variable attaining value $1$ if the $t$-th voter urns for $j$,
$0$ if she votes for $i$, and $\nicefrac12$ otherwise. Then, if
$p_{i,t}, p_{j,t}, p_{\bar{ij},t}$ are the probabilities that voter $t$ will vote,
respectively, for urn $i$, for urn $j$, or for another urn, then
$$\var[X_t] \ge ...$$
\end{lemma}
\begin{proof}
Given two urns $i,j$, and assuming that $i$ is the unknown urn, let
$X_t = X_t(i,j)$ be the random variable attaining value $+1$ if the
$t$th
voter votes for urn $i$ (probability $p^+_t = E_{i,t}(i)$), value $-1$
is the $t$th voter votes for urn $j$ (probability $p^-_t =
E_{i,t}(j)$), and $0$ otherwise (probability $p^*_t = 1 - p^+_t -
p^-_t$). Let $Y_t = X_t - E[X_t] = X_t + p^-_t - p^+_t$.

The $s$-th moment, $s \ge 1$, of $\left|Y_t\right|$ will then be:
$$E[\left|Y_t\right|^s] = p^+_t \cdot \left|1 + p^-_t - p^+_t\right|^s +
p_t^* \cdot \left|p^-_t - p^+_t\right|^s + p_t^- \cdot \left|-1 +
p^-_t - p^+_t\right|^s,$$
if we define $A =  (2p^-_t+p^*_t)$, $B = \left|p^-_t-p^+_t\right|$ and
$C =(2p^+_t+p^*_t)$, then 
$$E[\left|Y_t\right|^s] = p^+_t \cdot A^s +
p_t^* \cdot B^s + p_t^- \cdot C^s.$$
We will show that
$$\ell(s) \cdot \left(1 - \max\left(p^+_t,p^*_t,p^-_t\right)\right)
\le E[\left|Y_t\right|^s] \le u(s) \cdot \left(1 -
\max\left(p^+_t,p^*_t,p^-_t\right)\right),$$
with $\ell(s) = 6^{-s}$ and $u(s) = 2^{s+1}$.

Observe that $0 \le A,B,C \le 2$. Therefore, in
general, we have $E[\left|Y_t\right|^s] \le 2^s$. We have:\begin{itemize}
\item suppose $p^+_t = \max\left(p^+_t, p^*_t, p^-_t\right) \ge
  \frac13$. Then $C \ge \frac23$.

 If $p^-_t \ge \frac12 \cdot p^+_t$, then
  $p^-_t \ge \frac16$; also $p^*_t = 1 - p^-_t - p^+_t \le \frac12$;
 therefore $p^*_t \le 3 p^-_t$, and
$$E[\left|Y_t\right|^s] \ge p^-_t \cdot C^s
 \ge \left(\frac{p^-_t}4 + \frac{3p^-_t}4\right) \cdot C^s
  \ge \frac14 \cdot \left(\frac23\right)^s \cdot \left(p^-_t +
  p^*_t\right) \ge \ell(s) \cdot  \left(p^-_t +
  p^*_t\right).$$
By $p^-_t + p^*_t \ge \frac16$, and $E[\left|Y_t\right|^s]
\le 2^s$, we also obtain
$$E[\left|Y_t\right|^s] \le 2^s\cdot 6 \cdot
(p^-_t + p^*_t) \le u(s) \cdot  \left(p^-_t +
  p^*_t\right).$$

 If, on the other hand,
  $p^-_t < \frac12 \cdot p^+_t$, then $B \ge \frac12 \cdot p^+_t \ge
  \frac16$, and $$E[\left|Y_t\right|^s] \ge  p^*_t \cdot B^s + p^-_t
  \cdot C^s \ge p^*_t \cdot \left(\frac16\right)^s + 
  p^-_t \cdot \left(\frac23\right)^s \ge \left(\frac16\right)^s \cdot
  (p^*_t + p^-_t) \ge \ell(s) \cdot (p^*_t + p^-_t).$$
Furthermore, $A \le 2\cdot \left(p^*_t + p^-_t\right)$, therefore
$$E[\left|Y_t\right|^s] \le p^+_t \cdot A^s + p^*_t \cdot 2^s + p^-_t
\cdot 2^s \le 2^s \cdot (p^*_t + p^-_t)^s + 2^s \cdot (p^*_t + p^-_t)
\le 2^{s+1} \cdot (p^*_t + p^-_t) \le u(s)\cdot(p^*_t + p^-_t).$$

\item The case $p^-_t = \max\left(p^+_t,p^*_t,p^-_t\right)$ is
  symmetric to the previous one, and gives rise to the same upper and
  lower bounds.

\item Finally, assume $p^*_t =
  \max\left(p^+_t,p^*_t,p^-_t\right)$. Observe that $\left|p^-_t -
  p^+_t\right| \le p^-_t + p^+_t$, and that $\frac13 \le A,C \le
  \frac53$. For the lower bound, observe that
$$E[\left|Y_t\right|^s] \ge p^+_t \cdot A^s + p^-_t \cdot C^s \ge
  \left(\frac13\right)^s \cdot (p^+_t + p^-_t) \ge \ell(s) \cdot
  (p^+_t + p^-_t).$$

For the upper bound, we have
$$E[\left|Y_t\right|^s] \le p^+_t \cdot A^s + p^*_t \cdot B^s + p^-_t
\cdot C^s \le
(p^+_t + p^-_t) \cdot \left(\frac53\right)^s + (p^+_t + p^-_t) \le
u(s) \cdot (p^+_t + p^-_t).$$
\end{itemize}
\end{proof}
}

\begin{theorem}\label{thm:mainLB}
There exists a positive constant $H$ such that for any $\eta < H$, one
has that any voting scheme for
$\mathcal{I}(n,\epsilon)$, with $n \ge
120$ and $\epsilon \le \frac1{11(n-1)}$ using at most $O\left(\frac{n^3}{\epsilon^2} \log
\frac1{\eta}\right)$ voters will fail to win the election with
probability $\Omega(\eta)$.
\end{theorem}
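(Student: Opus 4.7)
First I would reduce to \emph{proper} voting schemes: if some $\Delta_i(j) < 0$, adversarially fixing $i$ as the correct urn gives $E[V_j] > E[V_i]$, and a Paley--Zygmund or Berry--Esseen argument applied to $V_j - V_i$ yields $\Pr[V_j \ge V_i] = \Omega(1) \ge \Omega(\eta)$ whenever $\eta < H$ for a sufficiently small absolute constant $H$. For proper schemes, apply Lemma~\ref{lemma:BR} to extract $\iota \in \{-1,+1\}$ and $S \subseteq [n]$ with $|S| \ge n/5$ (using $n \ge 120$); each $i \in S$ satisfies the two quantitative estimates
\[
\Delta_i(i+\iota) \le 2 e^{7/2} \epsilon \cdot \frac{\sqrt{|R_i - B_i|}}{n^{3/2}} \quad \text{and} \quad \max\bigl(|R_i - R_{i+\iota}|,\, |B_i - B_{i+\iota}|\bigr) \le e^{7/2} \cdot \frac{\sqrt{|R_i - B_i|}}{n^{3/2}},
\]
which are the source of the bound: they limit both the expected margin of $i$ over $i+\iota$ and the discriminative information between them.

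The adversary selects as the correct urn the $i \in S$ that minimises the voters' total ability to distinguish $i$ from $i+\iota$. The scheme fails whenever $V_{i+\iota} \ge V_i$, so I would set $Z = V_i - V_{i+\iota} = \sum_{t=1}^{m} X_t$ with $X_t \in \{-1,0,+1\}$ encoding voter $t$'s contribution, and aim for $\Pr[Z \le 0] = \Omega(\eta)$. Lemma~\ref{lemma:BR}(c) directly gives $E[Z] \le 2 e^{7/2} m \epsilon \sqrt{|R_i - B_i|}/n^{3/2}$. The argument then splits on $V = \var[Z]$: if $V$ exceeds a threshold $V_0 := c_1 m \cdot |R_i - B_i|$, I invoke the anti-concentration inequality of~\cite{f43,mv}, which, for sums of bounded independent random variables, yields $\Pr[Z \le E[Z] - c\sqrt{V \log \eta^{-1}}] \ge \eta$ for an absolute constant $c$. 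Combined with $m \le c' n^3 \epsilon^{-2} \log \eta^{-1}$ and $|R_i - B_i| \le 9/n$ from Lemma~\ref{lemma:BR}(a), this forces $E[Z] \le c\sqrt{V \log \eta^{-1}}$ and hence $\Pr[Z \le 0] \ge \eta$.

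In the low-variance case $V < V_0$, I would compare the joint vote distributions under the two hypotheses ``correct urn is $i$'' and ``correct urn is $i+\iota$'' via Hellinger distance. Writing $\sigma_{i,t} := |R_{i,t} - B_{i,t}| + |R_{i+\iota,t} - B_{i+\iota,t}|$, voter $t$'s two vote distributions differ in total variation by $O(\epsilon\, \sigma_{i,t})$, contributing $O(\epsilon^2 \sigma_{i,t})$ to the squared Hellinger distance; by tensorisation the TV of the full $m$-voter profile is $O\bigl(\epsilon \sqrt{\sum_t \sigma_{i,t}}\bigr)$. A per-voter lower bound $\var[X_t] = \Omega(\sigma_{i,t})$---which follows from the within-signal Bernoulli variance using $E_{i,t}(i), E_{i,t}(i+\iota) = O(1/n)$---converts the low-variance hypothesis $V < V_0$ into the estimate $\epsilon^2 \sum_t \sigma_{i,t} = O(1)$, so the two joint distributions have TV strictly less than, say, $1/2$. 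Any scheme that correctly elects $i$ with probability $\ge 1-\eta$ would then elect $i+\iota$ when $i+\iota$ is correct with probability only $\ge 1-\eta-1/2$, contradicting the symmetric requirement. Hence the adversary can switch to whichever of $\{i, i+\iota\}$ is worse and force failure with probability $\Omega(1) \ge \Omega(\eta)$.

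The principal obstacle is calibrating the common threshold $V_0$ so that both prongs close simultaneously: anti-concentration requires $V \gtrsim E[Z]^2/\log \eta^{-1}$, while the Hellinger argument requires $V \lesssim m \cdot |R_i - B_i|$. The matching $\sqrt{|R_i - B_i|}/n^{3/2}$ scaling that Lemma~\ref{lemma:BR}(c) provides \emph{simultaneously} for $\Delta_i(i+\iota)$ and for the per-coordinate gaps is precisely what forces these thresholds to coincide at $m = \Theta(n^3 \epsilon^{-2} \log \eta^{-1})$; the assumption $\epsilon \le 1/(11(n-1))$ gives the slack needed to absorb lower-order terms (notably the $O(\log n)$ losses in $|S|$ and the constants in the anti-concentration inequality), and the per-voter variance lower bound $\Omega(\sigma_{i,t})$---which is what really ties the two prongs together via a single parameter---is the most delicate probabilistic step.
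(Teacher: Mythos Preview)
Your two-pronged structure --- split on $\var[Z]$, apply anti-concentration above the threshold and a change-of-measure argument below --- is exactly the paper's, but the execution has a genuine gap. First, the threshold $V_0 = c_1 m|R_i-B_i|$ collapses: the per-voter bound $\var[X_t] \ge c|R_{i,t}-B_{i,t}|$ (which the paper proves by case analysis on $E[X_t]$, using $p_i, 1-p_i \ge 5/11$) already sums via the triangle inequality to $V \ge c\,m|R_i-B_i|$ \emph{unconditionally}, so your low-variance branch is never entered, and in the high-variance branch you never verify the side conditions of the anti-concentration inequality ($\sigma^2 \ge 40000$ and $t \le \sigma^2/100$), which actually force the threshold to be $V_0 = c\log\eta^{-1}$. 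More fundamentally, even with the correct threshold the Hellinger route does not close. Your per-voter estimate $H^2 = O(\epsilon^2\sigma_{i,t})$ is correct, and tensorising with $\sum_t\sigma_{i,t} = O(V)$ gives total $H^2 = O(\epsilon^2 V)$; but this bounds TV away from $1$ only when $V = O(\epsilon^{-2})$. Since anti-concentration needs $V \gtrsim \log\eta^{-1}$, the interval $\epsilon^{-2} \ll V \ll \log\eta^{-1}$ --- nonempty once $\eta < e^{-c\epsilon^{-2}}$ --- is covered by neither prong, and the theorem must hold for \emph{every} $\eta$ below an absolute constant $H$. Concretely, your asserted implication ``$V < V_0 \Rightarrow \epsilon^2\sum_t\sigma_{i,t} = O(1)$'' is off by a factor of $n^2\log\eta^{-1}$.

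The paper's low-variance argument avoids this by never invoking $\epsilon$. It proves a second bound $\var[X_t] \ge \bigl(1 - \max_x\Pr[X_t=x]\bigr)/48$, so $V \le c\log\eta^{-1}$ forces the total determinism slack $\sum_t s_t = O(\log\eta^{-1})$, where $s_t = 1-\max_x\Pr[X_t=x]$. A greedy product bound then shows the modal vote profile occurs with probability $\ge \eta/9$ under hypothesis $i$. Switching to $j$ rescales each voter's three outcome probabilities by a factor in $[5/6,6/5]$ --- the \emph{ratio} bound $p_j/p_i,\,(1-p_j)/(1-p_i)\in[5/6,6/5]$, not the difference $\epsilon$ --- so $s'_t \le (6/5)s_t$ and the same profile still occurs with probability $\ge \eta/25$ under $j$; if it elected $i$, the scheme now fails. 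The hypothesis $\epsilon \le 1/(11(n-1))$ is there precisely to pin all $p_k$ into $[5/11,6/11]$, making these likelihood ratios absolute constants independent of $\epsilon$ and thus closing the gap your Hellinger argument leaves open.
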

\begin{proof}
Take any asymmetric voting scheme for $\mathcal{I}(n,\epsilon)$ with
$m$ voters --- that is, a sequence of $m$ vectors
$(R_{1,t},\ldots, R_{n,t})$ and $(B_{1,t},\ldots, B_{n,t})$, for $1
\le t \le m$, such that the probability that the $t$th voter votes for
the $i$th urn if she draws a blue (resp., red) ball is $B_{i,t}$
(resp., $R_{i,t}$). Let $B_i = m^{-1} \cdot \sum_{t=1}^m B_{i,t}$ and
$R_i = m^{-1} \cdot \sum_{t=1}^m R_{i,t}$.

If the voting scheme is improper, then by definition there exists
$i,j$ such that $\Delta_i(j) < 0$. Otherwise, by $n \ge 120$, one has $\frac n4 - 3\ln n - 14 \ge 1$, and
 by Lemma~\ref{lemma:BR},
there will exist two urns $i$ and $j\in\{i-1,i+1\}$ such that $\Delta_i(j) \le
2e^{\frac72}\epsilon \frac{\sqrt{\left|R_i-B_i\right|}}{n^\frac32}$.

Given $i,j$, we define the head-to-head $(i,j)$-voting process as
follows; for each voter $t$, the random variable $X_t = X_t(i,j)$ will
be defined as
$$X_t = \left\{\begin{array}{cl}
1 & \text{if voter } t \text{ votes for urn } j \text{, given that the unknown urn
  is } i,\\
\nicefrac12 & \text{if voter } t \text{ does not vote for urns } i \text{ or } j
\text{, given that the unknown urn is } i,\\
0 & \text{if voter } t \text{ votes for the unknown urn } i.
\end{array}\right.$$
Observe that $X = \sum_{t=1}^m X_t \ge \frac m2$ iff the number of votes
to urn $j$ is not smaller than the number of votes to the right urn $i$. In this
case, the voters will lose the election. Furthermore,
$$E[X] = \frac m2 - \frac m2 \cdot \Delta_i(j).$$

Since $X$ is the sum of independent random variables, we have that
$\var[X] = \sum_{t=1}^m \var[X_t]$; by $\epsilon \le \frac1{11}{(n-1)}$,
we have that $\frac5{11} \le p_1 \le p_2 \le \cdots \le p_n \le
\frac6{11}$. We will use that $p_i,1-p_i \ge \frac14$ for each $i$, to
lower-bound the variance of $X_t$:
\begin{align*}
\var[X_t] &= (p_i \cdot B_{i,t} + (1-p_i) \cdot R_{i,t}) \cdot \left(0-
E[X_t]\right)^2  + (p_i \cdot B_{j,t} + (1-p_i) \cdot R_{j,t}) \cdot
\left(1-E[X_t]\right)^2 \\
&+(p_i \cdot (1-B_{i,t}-B_{j,t}) 
 + (1-p_i) \cdot (1-R_{i,t}-R_{j,t})) \cdot \left(\frac12-E[X_t]\right)^2.
\end{align*}
We consider two cases:\begin{itemize}
\item if $E[X_t] \ge \frac14$, then
$$\var[X_t] \ge (p_i B_{i,t} + (1-p_i)  R_{i,t}) \cdot \left(0-
E[X_t]\right)^2 \ge \frac{R_{i,t} + B_{i,t}}4 \cdot \frac1{16} \ge
 \frac{\left|R_{i,t} - B_{i,t}\right|}{64}.$$
\item if $E[X_t] < \frac14$, we have
\begin{align*}
\var[X_t] &\ge (p_i B_{j,t} + (1-p_i)  R_{j,t}) 
\left(1-E[X_t]\right)^2 \\
&+(p_i  (1-B_{i,t}-B_{j,t})
 + (1-p_i)  (1-R_{i,t}-R_{j,t})) \left(\frac12-E[X_t]\right)^2\\
& \ge \frac{p_i B_{j,t} + (1-p_i)  R_{j,t}}{16} + \frac{p_i  (1-B_{i,t}-B_{j,t})
 + (1-p_i)  (1-R_{i,t}-R_{j,t})}{16}\\
& = \frac{p_i  (1-B_{i,t})
 + (1-p_i)  (1-R_{i,t})}{16}.
\end{align*}
The latter is equal to both $\frac1{16}p_i(R_{i,t}-B_{i,t}) + (1-R_{i,t})$ and
$\frac1{16}(1-p_i)(B_{i,t}-R_{i,t})+(1-B_{i,t})$; we can therefore get a lower
bound of
$$\var[X_t] \ge \frac1{16} \cdot \min(p_i,1-p_i) \cdot \left|R_{i,t} -
B_{i,t}\right| \ge \frac{\left|R_{i,t} - B_{i,t}\right|}{64}$$
\end{itemize}
It follows that $\var[X] = \sum_{t=1}^m \var[X_t] \ge \frac1{64} \cdot
\sum_{t=1}^m \left|R_{i,t} - B_{i,t}\right|$.

\medskip

Recall that $m \cdot R_i = \sum_{t=1}^m R_{i,t}$ and $m \cdot B_i =
\sum_{t=1}^m B_{i,t}$. Suppose $R_i \ge B_i$; then
$$m \cdot \left|R_i - B_i\right| = m \cdot (R_i - B_i) = m \cdot
\sum_{t=1}^m (R_{i,t} - B_{i,t}) \le m \cdot
\sum_{t=1}^m \left|R_{i,t} - B_{i,t}\right| \le 64 \cdot \var[X].$$
If, on the other hand, $B_i > R_i$, we have
$$m \cdot \left|R_i - B_i\right| = m \cdot (B_i - R_i) = m \cdot
\sum_{t=1}^m (B_{i,t} - R_{i,t}) \le m \cdot
\sum_{t=1}^m \left|B_{i,t} - R_{i,t}\right| \le 64 \cdot \var[X].$$
Therefore, in any case, we have $\var[X] \ge \frac{m \left|R_i -
  B_i\right|}{64}$.

\medskip

 We now give a different lower bound on $\var[X]$,
that we will use to deal with the case of very small variance $\var[X]$.
Let $p_{1,t}, p_{\nicefrac12,t}, p_{0,t}$ be, respectively, the
probabilities that $X_t = 1, X_t = \frac12$ and $X_t = 0$. Then,
$E[X_t] = p_{1,t} + \frac12 \cdot p_{\nicefrac12,t}$, and
\begin{align*}
\var[X_t] & = p_{1,t} \cdot \left(E[X_t]-1\right)^2 +
p_{\nicefrac12,t} \cdot \left(E[X_t] - \frac12\right)^2 + p_{0,t}
\cdot
\left(E[X_t]\right)^2
\end{align*}
We consider three cases:\begin{itemize}
\item if $p_{1,t} = \max\left(p_{1,t},
  p_{\nicefrac12,t},p_{0,t}\right) \ge \frac13$ then if $E[X_t] \le \frac34$ we
  have
$$\var[X_t] \ge p_{1,t} \cdot \left(E[X_t]-1\right)^2 \ge \frac13
  \cdot \frac1{4^2} =
\frac1{48} \ge \frac{1-p_{1,t}}{48}.$$
If instead $E[X_t] > \frac34$, then
$$\var[X_t] \ge 
 p_{\nicefrac12,t} \left(E[X_t] - \frac12\right)^2 + p_{0,t} 
\left(E[X_t]\right)^2 > \frac{p_{\nicefrac12,t}}{4^2}+
 \frac{p_{0,t}}{4^2} \ge \frac{1-p_{1,t}}{16}.$$
\item If $p_{0,t} = \max\left(p_{1,t},
  p_{\nicefrac12,t},p_{0,t}\right) \ge \frac13$ then we employ a similar
  approach. If $E[X_t] \ge \frac14$ we
  have
$$\var[X_t] \ge p_{0,t} \cdot \left(E[X_t]\right)^2 \ge \frac13 \cdot \frac1{4^2} =
\frac1{48} \ge \frac{1-p_{0,t}}{48}.$$
If $E[X_t] < \frac14$, then
$$\var[X_t] \ge 
 p_{\nicefrac12,t} \left(E[X_t] - \frac12\right)^2 + p_{1,t} 
\left(E[X_t] - 1\right)^2 \ge \frac{p_{\nicefrac12,t}}{4^2}+
 \frac{p_{1,t}}{4^2} \ge \frac{1-p_{0,t}}{16}.$$
\item If $p_{\nicefrac12,t} = \max\left(p_{1,t},
  p_{\nicefrac12,t},p_{0,t}\right) \ge \frac13$, then $\frac16 \le
  E[X_t] \le \frac56$, and
$$\var[X_t] \ge 
 p_{1,t} \left(E[X_t] - 1\right)^2 + p_{0,t} 
\left(E[X_t]\right)^2 \ge \frac{p_{1,t}}{6^2}+
 \frac{p_{0,t}}{6^2} \ge \frac{1-p_{\nicefrac12,t}}{36}.$$
\end{itemize}
In each of the three cases, we had $\var[X_t] \ge
\frac{1-\max\left(p_{1,t}, p_{\nicefrac12,t},p_{0,t}\right)}{48}$, and
therefore
$$\var[X] = \sum_{t=1}^m \var[X_t] \ge \frac 1{48}\cdot\sum_{t=1}^m \left(1-\max\left(p_{1,t},
p_{\nicefrac12,t},p_{0,t}\right)\right).$$

Let us now assume that $\var[X] \le \frac1{72} \cdot \log_5
\frac1{\eta}$. We will deal with the case $\var[X] > \frac1{72}  \cdot
\log_5 \frac1{\eta}$ later.
 The
previous inequality then implies
$$\sum_{t=1}^m \left(1-\max\left(p_{1,t},
p_{\nicefrac12,t},p_{0,t}\right)\right) \le \frac23 \log_5 \frac1{\eta}.$$

 Recall that $X = X(i,j) \ge \frac m2$ iff the unknown urn $i$ gets
at most as many votes as $j$ (and therefore the election is lost). In the following we will
also consider $X' = X(j,i)$; we have that $X' \le \frac m2$ iff
urn $i$ gets at least as many votes as the unknown urn $j$ (this also
implies that the election is lost).

 Observe that, since $\frac5{11} \le p_1 \le
p_2 \le \cdots \le p_n \le \frac6{11}$, no matter what the unknown urn
is, the probability that any specific voter votes for any specific urn
changes by a constant factor (between $\frac56$ and $\frac65$) if one changes the
unknown urn.

We now show that, given that
$\var[x] \le \frac1{72}\cdot \log_5 \eta^{-1}$, then with probability at least
$\nicefrac{\eta}9$ each voter will vote according to its maximum probability
choice: that is $$\Pr[ \forall t, X_t \text{ equals the value } x_t
  \text{ that maximizes } \Pr[X_t = x_t]] \ge \frac{\eta}9.$$
If these choices let an urn different from $i$ win the election, we
have proven the theorem. Otherwise, we show that --- if we exchange
the unknown urn with any other urn $k$ --- then still with probability
at least $\nicefrac{\eta}{25}$ each voter $t$ will vote for the same urn $x_t$;
implying either a tie at the top, or that $i$ (which would then not be
the correct urn anymore) would will the election.

We let $s_t$ denote the sum of the two minimum probabilities in
$\{p_{1,t}, p_{\nicefrac12,t},p_{0,t}\}$; that is $s_t = 1-\max\left(p_{1,t},
p_{\nicefrac12,t},p_{0,t}\right)$. Observe that $s_t \le 1- \frac13 = \frac23$ for
each $t$. If we define $s = \sum_{t=1}^m s_t$, we also have $s \le \frac23
\log_5 \eta^{-1}$.

We have,
$$\Pr\left[ \forall t, X_t \text{ equals the value } x_t
  \text{ that maximizes } \Pr[X_t = x_t]\right] = \prod_{t=1}^m (1-s_t).$$

We now lower-bound the product, using the following greedy algorithm:
take one of the largest $s_{t'} < \frac23$, and one of the smallest
$s_t > 0$, with $s_t \ne s_{t'}$. Then  move
$x = \min\left(\frac23 - s_{t'}, s_{t}\right) > 0$ mass from $s_t$ to
$s_{t'}$. Observe that the sum $s$ of the $s_t$'s remains constant
throughout the process; furthermore the product $\prod_{m=1}^t s_t$
decreases: indeed, consider the product of $s_t \cdot s_{t'}$ before
and after the change --- we can disregard the rest since it remains
constant. Let $s_t, s_{t'}$ be the two values before the
change, and $s_t-x, s_{t'}+x$ be the two values after
the change.
 That product used to be $s_t \cdot s_{t'}$, and becomes $s_t\cdot
 s_{t'} -x(s_{t'}-s_t)-x^2$ --- the latter is smaller
 than $s_t \cdot s_{t'}$ since $x>0$ and $s_{t'} > s_t$.
Note also that at each step one of the $s_t$'s stops
being considered (either because it becomes equal to $\frac23$ or
equal to $0$) --- therefore the algorithm terminates. At termination
there will exist at most one $s_t$ with value different from $\frac23$
and $0$. Furthermore, recalling that $s = \sum_{t=1}^m s_t$, we conclude that
 then there will exist
exactly $\left\lceil \frac{s}{\nicefrac 23}\right\rceil$ different
$s_t$'s with value $\nicefrac23$, one with value $0 \le s - \left\lceil
\frac{s}{\nicefrac 23}\right\rceil \cdot \frac23 < \frac23$, and all the others having null value.

Given that $s \le \frac23 \log_5 \eta^{-1}$, we can then minimize the former probability with
\begin{align*}
\Pr[ \forall t, X_t \text{ equals the value } x_t
  \text{ that maximizes } \Pr[X_t = x_t]] &= \prod_{t=1}^m (1-s_t)\\
&\ge 3^{-\left\lceil \frac{s}{\nicefrac 23}\right\rceil-1}\\
& \ge  3^{-\left\lceil \log_5 \eta^{-1}\right\rceil-1}\\
& \ge \frac19 \cdot 3^{- \log_5 \eta^{-1}}\\ 
& \ge \frac19 \eta^{\frac1{\log_3 5}} \ge \frac{\eta}9.
\end{align*}
If these sequence of votes guarantees that the unknown urn $i$ loses
the election, we are done. Otherwise, we exchange the roles of urns
$i$ and $j$. 

Recall that $\frac56 \le \frac{p_i}{p_j},\frac{1-p_i}{1-p_j} \le \frac65$ ---
and therefore, for each $t$, $s'_t \le \frac65 s_t \le \frac45$.
Indeed, let $\{a,b,c\} = \{1,\nicefrac12,0\}$ be such that $p_{a,t}
\le p_{b,t} \le p_{c,t}$. If one lets
$p'_{1,t},p'_{\nicefrac12,t}, p'_{0,t}$ be the probabilities that
voter $t$, with unknown urn $j$, will, respectively, vote for $i$, for
an urn other than $i$ and $j$, and for urn $j$, then we have that
$p'_{1,t} \le \frac65 p_{1,t}, p'_{\nicefrac12,t} \le \frac65
p_{\nicefrac12,t}$ and $p'_{0,t} \le \frac65 p_{0,t}$. Therefore
$p'_{a,t} + p'_{b,t} \le \frac65 \left(p_{a,t}+p_{b,t}\right)$. It
follows that $s'_t = 1-p'_{c,t} = p'_{a,t} + p'_{b,t} \le \frac65
\left(p_{a,t}+p_{b,t}\right) = \frac65 s_t$, which is upper bounded by
$\frac65 \cdot s_t\le \frac65 \cdot \frac23 =
\frac45$.

\smallskip

Observe that the sum $s'$ of the $s'_t$'s, $s' = \sum_{t=1}^m s'_t$,
is then at most $\frac65$ times
the sum $s$ of the $s_t$'s; that is, $s' \le \frac65 s \le \frac4{5} \log_5 \eta^{-1}$.

\medskip

Let $X'_t$ be the random variable that, if the unknown urn is $j$, has
value $1$ if the $t$-th voter votes for urn $i$, $0$ if she
votes for urn $j$, and $\nicefrac12$ otherwise; we have:
$$\Pr\left[ \forall t, X'_t \text{ equals the value } x_t
  \text{ that maximizes } \Pr[X_t = x_t]\right] = \prod_{t=1}^m
(1-s'_t).$$
Using the same greedy algorithm as before, but moving mass $x =
\min\left(\frac45-s'_{t'}, s'_{t}\right)$ from couples of $s'_t$'s
such that $s'_{t'} < \frac45$ and $s'_t > 0$, $s'_{t'} \ne s'_t$, we get
that the previous product is minimized  when exactly $\left\lceil
\frac{s'}{\nicefrac45} \right\rceil$ distinct $s'_t$'s exist having
value $\nicefrac45$, one having value $0 \le s' - \left\lceil 
\frac{s}{\nicefrac 45}\right\rceil \cdot \frac45 < \frac45$, and the
rest having null value. Then,
\begin{align*}
\Pr\left[ \forall t, X'_t \text{ equals the value } x_t
  \text{ that maximizes } \Pr[X_t = x_t]\right] &= \prod_{t=1}^m
(1-s'_t)\\
&\ge 5^{-\left\lceil \frac{s'}{\nicefrac 45}\right\rceil-1}\\
&\ge 5^{-\left\lceil \frac4{5} \cdot \frac{\log_5\eta^{-1}}{\nicefrac 45}\right\rceil-1}\\
& \ge  5^{-\left\lceil \log_5 \eta^{-1}\right\rceil-1} \ge \frac{\eta}{25}.
\end{align*}
Now, if urn $i$ won with this sequence of votes, it follows that $j$
cannot win. 

\smallskip

We have shown that if $\var[X] \le \frac1{72} \log_5
\eta^{-1}$, then the probability of winning is at most
$1-\frac{\eta}{25}$. We now assume $\var[X] > \frac1{72} \log_5
\eta^{-1}$. We will use the following anti-concentration inequality
 (see Theorem
7.3.1 in \cite{mv}, and
\cite{f43}) to finish the proof:
\begin{theorem}[\cite{f43,mv}]\label{inverted_tail}
Let $X = \sum_{i=1}^n X_i$, where $X_i$ are independent random
variables, with $X_i \in [0,1]$, for $i = 1, \ldots, n$. Let $\sigma^2
= \var[X]$ be $\sigma^2 \ge 40000$. Then, for each $t \in \left[0,
  \frac{\sigma^2}{100}\right]$, it holds that
$$\Pr\left[X \ge E[X] + t\right] \ge c \cdot
\exp\left(-\frac{t^2}{3\sigma^2}\right),$$
for some universal constant $c > 0$.
\end{theorem}

We apply Theorem~\ref{inverted_tail} on the random variable $X= X(i,j)$, choosing $t=
\sqrt{\frac{64e^7m\var[X]}{n^3\epsilon^{-2}}}$, if $\Delta_i(j) \ge 0$, and $t = 0$ otherwise. This choice is valid since
$$0 \le \frac{t}{\var[X]} \le \sqrt{m\cdot\frac{64e^7 }{n^3 \epsilon^{-2} \var[X]}} < 
\sqrt{m\cdot\frac{4608 e^7  \ln 5 }{n^3 \epsilon^{-2} \ln \eta^{-1}}} \le \frac1{100},$$
where the latter holds if $m \le \frac{1}{46080000 e^7 \ln 5} \cdot  n^3
\epsilon^{-2} \ln \eta^{-1}$.

\smallskip

 We also need $\var[X] \ge 40000$ to
apply Theorem~\ref{inverted_tail}.
Since $\var[X] > \frac1{72} \log_5 \eta^{-1}$, and $\eta \le H$, we choose $H$ to be
$H = 5^{-2880000}$, obtaining $\var[X] > 40000$.

\medskip

Observe that $E[X] = \frac m2 - \frac m2 \cdot \Delta_{i}(j)$.
We show that the event ``$X \ge E[X] + t$'' implies the event ``$X \ge \frac m2$'' (which directly implies that the unknown urn $i$ will not win the election).

If $\Delta_i(j) < 0$, the claim is trivial, since then $E[X] > \frac m2$, and $t$ is non-negative. Otherwise, by the bound
$\var[X] \ge m \cdot \frac{\left|R_i - B_i\right|}{64}$, we get
$$t \ge  m \cdot e^{\nicefrac72} \epsilon \cdot
\sqrt{\frac{\left|R_i-B_i\right|}{n^3}} \ge \frac{m}2 \cdot \Delta_i(j),$$
which proves that $X \ge E[X] + t \implies X \ge \frac m2$.

Applying Theorem~\ref{inverted_tail}, we get
\begin{align*}
\Pr\left[X \ge E[X] + t\right] &\ge c \cdot
\exp\left(-\frac{t^2}{3\var[X]}\right) \\
&= c \cdot
\exp\left(-\frac{64e^7m}{3n^3\epsilon^{-2}}\right) \\
& \ge c \cdot \exp\left(-\frac{64}{3\cdot46080000\cdot \ln5}\cdot \ln \eta^{-1}\right)\\
& \ge c \cdot \eta.
\end{align*}
The proof is then complete.
\end{proof}

\section{An Upper Bound for Many Signals}

In this section we consider the voting problem in its full generality:
we have a set of $n \ge 2$ urns, with each urn $i = 1,\ldots,n$ inducing a distinct probability
distribution $P_i = (p_{i,1}, p_{i,2}, \ldots,
p_{i,C})$ over a set of $C$ signals or colors.
Let $\epsilon$ be the minimum $\ell_1$ distance between the
distributions $P_i$:
$$\epsilon = \min_{i \ne j} \ell_1(P_i,P_j) = \min_{i \ne j} \sum_{c = 1}^C
\left|p_{i,c} - p_{j,c}\right|.$$
Observe that when $C = 2$, this parameter $\epsilon$ is twice the
one that we used in Section~\ref{sec:twosig}.

\begin{theorem}\label{multicolor}
There exists a voting scheme that, using $m =
\Theta\left(\frac{(C \log C)^2 n^3}{\epsilon^2} \ln
\frac{n}{\eta}\right)$ voters, guarantee that the unknown urn wins
with probability at  least $1-\eta$.
\end{theorem}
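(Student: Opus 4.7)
The plan is to reduce the $C$-color problem to the two-signal case of Theorem~\ref{thm:upperb_pv}. Specifically, we choose a partition $(S,\bar S)$ of the color set $[C]$ and convert each urn $P_i$ into a Bernoulli urn with blue-probability $q_i^S = \sum_{c\in S} p_{i,c}$. Each voter relabels her observed color as ``blue'' if it lies in $S$ and ``red'' otherwise, and then executes the two-color strategy from Section~\ref{sec:twosig} applied to the induced Bernoulli urns $(q_1^S,\ldots,q_n^S)$.

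The crux of the argument is choosing $S$ so that the induced 2-color instance has a sufficiently large minimum pairwise gap. For a uniformly random $S\subseteq [C]$, the difference $q_i^S-q_j^S$ has mean zero and
$$\var(q_i^S-q_j^S) \;=\; \tfrac14\sum_{c=1}^C (p_{i,c}-p_{j,c})^2 \;\ge\; \frac{\|P_i-P_j\|_1^2}{4C} \;\ge\; \frac{\epsilon^2}{4C},$$
by Cauchy-Schwarz. An anti-concentration bound (e.g.\ Paley-Zygmund applied after computing the fourth moment) then implies $|q_i^S-q_j^S|=\Omega(\epsilon/\sqrt C)$ with some absolute constant probability for any fixed pair. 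The goal of the second step is to amplify this pair-wise guarantee into a single partition that simultaneously separates all $\binom{n}{2}$ pairs. Trying $O(C\log C\log n)$ independent random partitions and choosing the best one (or a more careful derandomization) delivers a partition for which every pair of induced Bernoullis differ by at least $\epsilon'=\Omega(\epsilon/(C\log C))$.

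With the effective gap $\epsilon'$ in hand, the number of voters needed follows directly by plugging into Theorem~\ref{thm:upperb_pv}:
$$O\!\left(\frac{n^3}{(\epsilon')^2}\log\frac{1}{\eta}\right) \;=\; O\!\left(\frac{(C\log C)^2\,n^3}{\epsilon^2}\log\frac{n}{\eta}\right),$$
where the $\log n$ factor absorbs the union-bound failure probability of the partition-selection stage. Summing over pairs costs only a constant factor in $\eta$ since we pay for $\binom{n}{2}$ pairs in the partition-selection step rather than in the voting step itself.

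The main technical obstacle is the second step. The Cauchy-Schwarz computation only yields $\Omega(\epsilon/\sqrt C)$ for an individual pair with constant probability, and pushing this to a simultaneous guarantee over all $\binom{n}{2}$ pairs while losing only an extra $\log C$ factor (rather than a $\sqrt C$ or worse factor) is the delicate part of the argument. Once this is established, the rest of the proof is a straightforward invocation of Theorem~\ref{thm:upperb_pv} on the reduced two-color instance.
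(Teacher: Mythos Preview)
Your reduction to a single two-color instance via one fixed partition $S$ cannot work in general: there are instances in which \emph{no} partition of the colors separates all pairs of urns simultaneously. Take $C=n+1$ colors and define urn $P_i$ (for $i=1,\dots,n$) by $p_{i,i}=\epsilon$, $p_{i,n+1}=1-\epsilon$, and $p_{i,c}=0$ otherwise. Then $\ell_1(P_i,P_j)=2\epsilon$ for all $i\ne j$, yet for any subset $S\subseteq[C]$ we have
\[
q_i^S-q_j^S \;=\; \epsilon\bigl([\![i\in S]\!]-[\![j\in S]\!]\bigr),
\]
so $q_i^S\ne q_j^S$ would force the $n$ indicators $[\![1\in S]\!],\dots,[\![n\in S]\!]$ to be pairwise distinct, which is impossible for $n\ge 3$. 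Thus the induced two-color instance has minimum gap exactly $0$ for every choice of $S$, and Theorem~\ref{thm:upperb_pv} is inapplicable. Your amplification step (``try many partitions and pick the best'') cannot rescue this, because the difficulty is not probabilistic but combinatorial: the desired object simply does not exist.

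The paper's proof avoids this obstruction by \emph{not} committing to a single projection. Each voter independently samples a color $c$ (and a scale $t$) and votes according to the $\{c\}$-versus-$\bar c$ bichromatic instance; critically, the paper first develops (Section~\ref{sec:fltwosig}) a variant of the two-signal scheme that tolerates \emph{coincident} urns, so that projections which fail to distinguish a given pair still satisfy $\Delta_i^{(c,t)}(j)\ge 0$ and hence do no harm. The positive margin between the true urn $i$ and any competitor $j$ then comes from those colors $c$ that \emph{do} separate $i$ from $j$; averaging over $c$ and $t$ yields $\Delta_i(j)\ge \Omega\bigl(\epsilon/(CTn^2)\bigr)$, and a Chernoff bound with a union bound over urns (the source of the $\ln(n/\eta)$) finishes the argument. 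The essential idea your proposal is missing is precisely this: you need a scheme that \emph{aggregates} many one-color projections rather than relying on a single partition that works uniformly.
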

The proof of this Theorem spans two subsections
(Sections~\ref{sec:fltwosig} and~\ref{sec:manysig}).
In Section~\ref{sec:fltwosig} we
generalize the bichromatic voting scheme so (a) to treat urns that are not
``well-separated'' as if they were the same -- this virtually increases the
separation parameter $\epsilon$ -- and (b) to guarantee, under some
conditions, that the equivalent of the $M$ parameter of the
bichromatic  voting scheme of Section~\ref{sec:twosig} is not just
upper bounded by $O(n^2\epsilon^{-1})$, but is actually asymptotic to $\Theta(n^2
\epsilon^{-1})$.

 In Section~\ref{sec:manysig}, we use both these
properties to devise a new voting scheme that uses the generalized
bichromatic one as a black box. The main idea of the multicolor voting
scheme is to force voters to view the urns as bichromatic ones: each
voter will choose a color $c$ at random, and consider each urn as a
bichromatic urn with colors $c, \bar{c}$ --- that is, she will imagine that there are only two colors: ``$c$'' and ``any color other than
$c$''. Using this trick directly with the bichromatic voting scheme of section~\ref{sec:twosig} would decrease to 0 the minimum
distance between urns in the worst case. We do not want the
separation between urns to decrease --- since that would increase the
minimum number of voters needed for the election to be successful ---  this is where property (a)
of the generalized bichromatic voting scheme becomes pivotal. Also, we
need a way to aggregate the votes given to each single
urn, in each of the $(c,\bar{c})$ bichromatic instances; this has to
be done in a way that guarantees that 
the right urn  will win
with high probability. We manage to do this by leveraging on property (b).

\subsection{A More Flexible Upper Bound with Two Signals}\label{sec:fltwosig}

To build a framework that can be used to handle the case of $C > 2$
signals,  it is useful to consider a more general formulation
of the bichromatic problem in which certain options can induce identical 
distributions over signals (and hence be indistinguishable from each other).
We present the analysis in the language of urns and colored balls.

Thus, suppose we have a collection of $n$ urns, labeled
$p_{i}$ for $i = 1, 2, \ldots, n$. With a slight abuse of notation we
let $p_{i}$ and $1-p_{i}$
be, respectively, the fraction
of blue balls, and of red balls, in urn $p_{i}$. We assume
w.l.o.g. that $0 \le p_1 \le p_2 \le \cdots \le p_n \le 1$.

\medskip

We assume that one urn is adversarially chosen as the correct one
(we will also refer to this as the {\em unknown} urn).
Then each player
draws a ball from the urn and votes for the name of an urn 
based on the color they observe.
For this general version with indistinguishable urns, we will be
interested in the probability that the urn receiving the most votes
has the same distribution as the correct one;
this general formulation is for the sake of the multi-color case later.

\medskip

We describe the strategy that the players will use to randomly
choose which vote to cast. First of all, for some $n' \ge 10$,
choose  $0 \le p'_1
< p'_2 < \cdots < p'_{n'} \le 1$. Let $\eps = \min_{1 \le i \le n' - 1} \left(p'_{i+1} -
p'_i\right)$. We require that (a) for $1 \le k \le
\left\lceil\frac{n'-1}3 \right\rceil = K$ it holds
that $p'_{k+1} - p'_k \le 2\eps$ and $p'_{n'-k + 1} - p'_{n'-k} \le
2\eps$, and (b) $p'_{K + 1} \le (2K+1)\epsilon$ and $p'_{n'-K} \ge 1 - (2K+1)\epsilon$.

\medskip

The $p'_i$'s are called the {\em landmarks} of the voting scheme.
\begin{itemize}
\item[1.] Let 
$\ds{b_k = \sum_{\ell=1}^{k-1} \frac{2-(p'_{\ell+1} + p'_{\ell})}{p'_{\ell+1} - p'_{\ell}}}$ and
$\ds{r_k = \sum_{\ell=k}^{n'-1} \frac{p'_{\ell+1} + p'_{\ell}}{p'_{\ell+1}
    - p'_{\ell}}}$ for $k =1, \ldots, n'$.
\item[2.] Let $\phi: \{p_1,\ldots, p_n\} \rightarrow
  \{1,\ldots,n'\}$ be a mapping from urns to landmarks' indices,
  defined so that $\phi(p_i) = k$ if $k$ 
  maximizes $p_i b_k + (1-p_i) r_k$ (ties can be broken arbitrarily).
\item[3.] 
Then define $R = \sum_{k=1}^{n'} \left(\left(\left|\phi^{-1}(k)\right| +
1\right) \cdot r_k\right)$ and $B = \sum_{k=1}^{n'}
  \left(\left(\left|\phi^{-1}(k)\right| + 1\right) \cdot b_k\right)$, and set $M = \max(R, B)$.
\item[4.] The probability that a voter will vote for $p_j$ if a blue ball is drawn is
$$\Pr_{\mathbf{P} \sim f(\text{blue})}[\mathbf{P} = p_{j}] = M^{-1} \cdot \left(b_{\phi(p_j)} + \frac{M -B}{n}\right) = B_j.$$
\item[5.] The probability that a voter will vote for $p_j$ if a red ball is drawn is
$$\Pr_{\mathbf{P} \sim f(\text{red})}[\mathbf{P} = p_{j}] =
  M^{-1} \cdot \left(r_{\phi(p_j)} + \frac{M -R}{n}\right) = R_j.$$
\end{itemize}

It is easy to check that the two probability  distributions $(B_1, B_2, \ldots,
B_n)$ and $(R_1,R_2,\ldots, R_n)$ are  well-defined (their values are non-negative and they both
sum up to one). Observe that $B_i = B_j$ and $R_i = R_j$ if $\phi(p_i) =
\phi(p_j)$.

\medskip

For a given urn $p_i$, let $k_i^+$
 be the smallest positive
index  such that
$p'_{k^+_i} \ge p_i$, if such an index exists, and $k^-_i$ be the largest index such that
$p'_{k^-_i} \le p_i$, again if the index exists; observe that at least one of $k_i^+$ and $k^-_i$
has to exist since $n' \ge 10$. We show the following lemma:

\begin{lemma}
For each $i = 1,\ldots, n$, $\phi(p_i)$ is either
equal to $k^+_i$ or to $k^-_i$. If, for some $i$, we have $p_i = p'_k$ it follows that $\phi(p_i) = k^+_i
= k^-_i = k$.
\end{lemma}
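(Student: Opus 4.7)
The plan is to analyze the finite difference of the function $f(k) = p_i b_k + (1-p_i) r_k$ and show that $f$ is unimodal in $k$, with its maximizer pinned down to lie in $\{k_i^-, k_i^+\}$.

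First I would compute
\[
f(k+1) - f(k) = p_i(b_{k+1}-b_k) + (1-p_i)(r_{k+1}-r_k)
= \frac{2p_i - (p'_{k+1}+p'_k)}{p'_{k+1}-p'_k},
\]
using that $b_{k+1}-b_k = (2-(p'_{k+1}+p'_k))/(p'_{k+1}-p'_k)$ and $r_{k+1}-r_k = -(p'_{k+1}+p'_k)/(p'_{k+1}-p'_k)$. Since the denominator is positive, $f(k+1)-f(k)$ has the same sign as $g(k) := 2p_i - (p'_{k+1}+p'_k)$, and because the landmarks are strictly increasing, $g$ is strictly decreasing in $k$. Hence $f$ is unimodal on $\{1,\dots,n'\}$, and any maximizer $k^*$ must satisfy
\[
p'_{k^*-1}+p'_{k^*} \le 2p_i \le p'_{k^*}+p'_{k^*+1}
\]
(with the left inequality vacuous if $k^*=1$ and the right vacuous if $k^*=n'$).

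Next I would do the case analysis to identify $k^*$. If $p_i$ coincides with some landmark $p'_k$, then both inequalities above reduce to $p'_{k-1}\le p'_k$ and $p'_k\le p'_{k+1}$, which hold strictly; in fact the same strict monotonicity shows that $f(k-1)<f(k)$ and $f(k+1)<f(k)$, so the maximizer is unique and equal to $k = k_i^-= k_i^+$, handling the second statement of the lemma. Otherwise $p_i$ falls strictly between two consecutive landmarks $p'_{k_i^-} < p_i < p'_{k_i^+}$ with $k_i^+ = k_i^- + 1$, and I split on the sign of $2p_i - (p'_{k_i^-}+p'_{k_i^+})$: if positive, the above inequalities are satisfied by $k^* = k_i^+$ (and $k_i^-$ is ruled out because the right inequality fails there); if negative, symmetrically $k^* = k_i^-$; if zero, both indices give equal values of $f$, and the tie-breaking rule picks one of them.

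Finally I would handle the edge cases where only one of $k_i^{\pm}$ exists: if $p_i<p'_1$, then $k_i^+=1$ and the inequality $2p_i\le p'_1+p'_2$ holds trivially since both landmarks exceed $p_i$, so $\phi(p_i)=1=k_i^+$; the symmetric argument gives $\phi(p_i)=n'=k_i^-$ when $p_i>p'_{n'}$. There is no real obstacle here beyond keeping the boundary indices straight; the content of the lemma is essentially the observation that the discrete derivative of $f$ changes sign exactly when the running midpoint $(p'_k+p'_{k+1})/2$ crosses $p_i$.
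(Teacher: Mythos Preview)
Your proposal is correct and takes essentially the same approach as the paper: both arguments compute the finite difference $f(k+1)-f(k)=(2p_i-(p'_{k+1}+p'_k))/(p'_{k+1}-p'_k)$ and use its sign to rule out maximizers below $k_i^-$ or above $k_i^+$. Your framing via unimodality and the explicit midpoint case split is slightly more detailed than needed (the paper just shows the two contradictions directly), but the underlying computation and logic are identical.
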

\begin{proof}
For an arbitrary $k$ it holds that
\begin{align*}
p_i \cdot b_k + (1-p_i) \cdot r_k &= p_i \cdot \sum_{\ell=1}^{k-1}
\frac{2-(p'_{\ell+1} + p'_{\ell})}{p'_{\ell+1} - p'_{\ell}} + (1-p_i)
\cdot \sum_{\ell=k}^{n'-1} \frac{p'_{\ell+1} + p'_{\ell}}{p'_{\ell+1}- p'_{\ell}}\\
&= \sum_{\ell=1}^{k-1}
\frac{2p_i}{p'_{\ell+1} - p'_{\ell}} + \sum_{\ell=k}^{n'-1}
\frac{p'_{\ell+1} + p'_{\ell}}{p'_{\ell+1}- p'_{\ell}} - p_i \cdot \sum_{\ell=1}^{n'-1}
\frac{p'_{\ell+1} + p'_{\ell}}{p'_{\ell+1}- p'_{\ell}}.
\end{align*}
The latter sum does not depend on $k$. Therefore $p_i b_k +
(1-p_i)r_k$ is maximized with a $k$ that maximizes the total of the former two
sums.

Suppose that $k^-_i$ exists and that $k < k^-_i$ maximizes the expression; then, by increasing $k$ to
$k+1 \le k^-_i$, we
would remove the term $\frac{p'_{k+1} + p'_k}{p'_{k+1} - p'_k}$ from
the second sum, and add the term $\frac{2p_i}{p'_{k+1}-p'_k}$ to the
first one. Since, by definition $p'_k < p'_{k+1} \le p'_{k^-} \le
p_i$, we have $p'_{k+1} + p'_k < 2p_i$, and therefore the total value
of the first two sums would increase. It follows that $k < k^-_i$ cannot
maximize the expression.

Analogously, suppose then that $k^+_i$ exists and that $k > k^+_i$ maximizes the expression; by
decreasing $k$ to $k-1 \ge k^+_i$, we remove the term
$\frac{2p_i}{p'_{k}-p'_{k-1}}$ from the first sum, and add the term
$\frac{p'_{k} + p'_{k-1}}{p'_{k} - p'_{k-1}}$ to the second one. Since
$p_i \le p'_{k^+} \le p'_{k-1} < p'_k$, we obtain that $2p_i < p'_{k-1} +
p'_k$ and therefore we increase the total value of the first two
sums; thus $k > k^+_i$ does not maximize the expression.
\end{proof}

We now turn to computing the
probability that a player will vote for $p_j$  given that the correct,
adversarially chosen, distribution is $p_i$:
$$\Pr_{\substack{\mathbf{X} \sim p_i\\ \mathbf{P} \sim f(\mathbf{X})}}\left[\mathbf{P} =
    p_j\right]  =  p_i \Pr_{\mathbf{P} \sim
  f(\text{blue})}[\mathbf{P} =  p_j] + (1-p_i) 
\Pr_{\mathbf{P} \sim f(\text{red})}[\mathbf{P} =  p_j] = p_i B_{j} + (1-p_i) R_{j} = E_{i}(j).$$

 We compute
the difference between the probabilities that a vote for urn $p_i$
and a vote for urn $p_j$ are cast, given that the correct urn is
$p_i$:
$$\Delta_{i}(j) = E_{i}(i) - E_{i}(j).$$

We will lower-bound $\Delta_{i}(j)$ to bound the number of voters needed
to let the voting scheme be successful with high probability. 

\begin{lemma}
For each $1\le i,j \le n$, it holds that
$$\Delta_{i}(j) \ge \left\{\begin{array}{cr}
\frac{\left|\phi(p_i) - \phi(p_j)\right|}M & \text{if } p_i = p'_{k^+_i} =
p'_{k^-_i}\\
\frac{\max(\left|\phi(p_i) - \phi(p_j)\right| - 1,0)}M & \text{otherwise}
\end{array}\right.$$
\end{lemma}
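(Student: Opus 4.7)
The plan is to mirror the argument used in Section~\ref{sec:twosig} to lower-bound $\Delta_{i}(j)$: write $M \cdot \Delta_{i}(j)$ as a telescoping sum of terms indexed by consecutive landmark gaps, and then show each term is at least $1$, except possibly one boundary term at the landmark closest to $p_i$. Concretely, set $k_i = \phi(p_i)$ and $k_j = \phi(p_j)$; by the previous lemma, $k_i \in \{k^-_i, k^+_i\}$. If $k_i = k_j$ the claim is trivial (both sides are $\le 0$), so assume WLOG that $k_i > k_j$ (the case $k_j > k_i$ is symmetric with $p'_\ell$ in place of $p'_{\ell+1}$). Using the definitions of $b_k$ and $r_k$ exactly as in Section~\ref{sec:twosig}, one obtains
\[
M \cdot \Delta_{i}(j) \;=\; \sum_{\ell=k_j}^{k_i-1} \frac{2 p_i - (p'_{\ell+1} + p'_\ell)}{p'_{\ell+1}-p'_\ell}.
\]
The $\ell$-th term is $\ge 1$ iff $p_i \ge p'_{\ell+1}$, and is $\ge 0$ iff $2p_i \ge p'_{\ell+1} + p'_\ell$.

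Next I would case-split on whether $p_i$ coincides with a landmark. In the Case~1 hypothesis $p_i = p'_{k^+_i} = p'_{k^-_i} = p'_{k_i}$, so for every $\ell \in \{k_j,\dots,k_i-1\}$ we have $p'_{\ell+1}\le p'_{k_i} = p_i$, every term is $\ge 1$, and the sum is at least $k_i - k_j = |\phi(p_i)-\phi(p_j)|$. This gives the first bound.

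In Case~2, $p_i$ lies strictly between two consecutive landmarks, so $k^+_i = k^-_i + 1$. For $\ell \le k_i - 2$, we still have $p'_{\ell+1} \le p'_{k^-_i} \le p_i$, so those terms contribute at least $1$ each. The only potentially problematic term is at $\ell = k_i - 1$, where $p'_{\ell+1} = p'_{k_i}$ may exceed $p_i$. Here I would use the defining property of $\phi$: since $\phi(p_i) = k_i$ maximizes $p_i b_k + (1-p_i) r_k$, comparing $k_i$ against the other candidate ($k^+_i$ vs.\ $k^-_i$) yields, after simplification, the inequality $2p_i \ge p'_{k^+_i}+p'_{k^-_i}$ when $\phi(p_i)=k^+_i$ (and the analogous inequality in the symmetric subcase for $k_j>k_i$). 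This is exactly what forces the boundary term to be nonnegative rather than just $\ge -1$. Hence we lose at most a single unit compared to Case~1, giving the sum $\ge (k_i - k_j) - 1$, and combined with the trivial bound $\Delta_{i}(j) \ge 0$ we obtain $\Delta_{i}(j) \ge \max(|\phi(p_i)-\phi(p_j)|-1,0)/M$.

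The main obstacle is the bookkeeping for the boundary term: one must verify that the optimality condition defining $\phi(p_i)$ translates into exactly the inequality $2p_i \ge p'_{k^+_i}+p'_{k^-_i}$ (or its mirror) needed to keep that term nonnegative, and that this works symmetrically in the two subcases $k_j < k_i$ and $k_j > k_i$. Once this is pinned down, the telescoping computation and the termwise bound are essentially identical to the ones carried out in Section~\ref{sec:twosig}.
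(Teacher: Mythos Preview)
Your proposal is correct and follows essentially the same route as the paper: write $M\cdot\Delta_i(j)$ as the telescoping sum $\sum_{\ell}\frac{2p_i-(p'_{\ell+1}+p'_\ell)}{p'_{\ell+1}-p'_\ell}$ (or its mirror), observe that each term with $p'_{\ell+1}\le p_i$ (resp.\ $p'_\ell\ge p_i$) contributes at least~$1$, and then handle the single possible boundary term at $\ell\in\{k^-_i,k^+_i-1\}$. The only cosmetic difference is that the paper truncates the sum at $k^-_i$ (resp.\ starts it at $k^+_i$) and records the bound as $|k^\pm_i-\phi(p_j)|$, whereas you keep the full sum and argue directly that the boundary term is nonnegative via the optimality condition for $\phi(p_i)$; but the paper's truncation step implicitly uses exactly the same optimality inequality $2p_i\gtrless p'_{k^+_i}+p'_{k^-_i}$, so the two presentations are equivalent.
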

\begin{proof}
We make the expression of $\Delta_i(j)$ explicit:
$$\Delta_i(j) = p_{i} \left(B_i
-B_j\right)
 + (1-p_{i})  \left(R_i - R_j\right) = p_i \cdot \frac{b_{\phi(p_i)} -
   b_{\phi(p_j)}}{M} + (1-p_i) \cdot \frac{r_{\phi(p_i)} -
   r_{\phi(p_j)}}M.$$

Suppose
first that $\phi(p_j) < k^-_i$; then
\begin{align*} 
M \cdot \Delta_{i}(j) & = p_{i} \cdot \sum_{\ell=\phi(p_j)}^{\phi(p_i)-1}
\frac{2-(p'_{\ell+1} + p'_{\ell})}{p'_{\ell+1}-p'_{\ell}} - (1-p_i) \cdot
\sum_{\ell=\phi(p_j)}^{\phi(p_i)-1} \frac{p'_{\ell+1}+p'_{\ell}}{p'_{\ell+1}-p'_{\ell}} 
= \sum_{\ell=\phi(p_j)}^{\phi(p_i)-1} \frac{2 \cdot p_i - (p'_{\ell+1}+p'_{\ell})}{p'_{\ell+1}-p'_{\ell}}\\
&\ge \sum_{\ell=\phi(p_j)}^{k^-_i-1} \frac{2 \cdot p_i -
  (p'_{\ell+1}+p'_{\ell})}{p'_{\ell+1}-p'_{\ell}}\ge
\sum_{\ell=\phi(p_j)}^{k^-_i-1} \frac{2 \cdot p'_{\ell+1} -
(p'_{\ell+1}+p'_{\ell})}{p'_{\ell+1}-p'_{\ell}} =
\sum_{\ell=\phi(p_j)}^{k^-_i-1} \frac{p'_{\ell+1} -
p'_{\ell}}{p'_{\ell+1}-p'_{\ell}} = \left|k^-_i-\phi(p_j)\right|,
\end{align*}
where the first inequality follows from $\phi(p_i) \ge k^-_i$ and the second
from $p_i \ge p'_{k^-_i} \ge p'_{\ell+1}$.

If, instead, $\phi(p_j) > k^+_i$ we have:
\begin{align*} 
M \cdot \Delta_{i}(j) & = -p_{i} \cdot \sum_{\ell=\phi(p_i)}^{\phi(p_j)-1}
\frac{2-(p'_{\ell+1} + p'_{\ell})}{p'_{\ell+1}-p'_{\ell}} + (1-p_i) \cdot
\sum_{\ell=\phi(p_i)}^{\phi(p_j)-1} \frac{p'_{\ell+1}+p'_{\ell}}{p'_{\ell+1}-p'_{\ell}} 
= \sum_{\ell=\phi(p_i)}^{\phi(p_j)-1} \frac{(p'_{\ell+1}+p'_{\ell}) - 2 p_i}{p'_{\ell+1}-p'_{\ell}}\\
&\ge \sum_{\ell=k^+_i}^{\phi(p_j)-1} \frac{
  (p'_{\ell+1}+p'_{\ell}) -2p_i}{p'_{\ell+1}-p'_{\ell}}\ge
\sum_{\ell=k^+_i}^{\phi(p_j)-1} \frac{
(p'_{\ell+1}+p'_{\ell})-2p'_{\ell}}{p'_{\ell+1}-p'_{\ell}} =
\sum_{\ell=k^+_i}^{\phi(p_j)-1} \frac{p'_{\ell+1} -
p'_{\ell}}{p'_{\ell+1}-p'_{\ell}} = \left|k^+_i-\phi(p_j)\right|,
\end{align*}
where the first inequality follows from $\phi(p_i) \le k^+_i$ and the second
from $p_i \le p'_{k^+_i} \le p'_{\ell}$.

\medskip

Also, recall that $\phi(p_i)$ is an index that maximizes $p_i b_{\phi(p_i)} +
  (1-p_i) r_{\phi(p_i)}$. Since
$$\Delta_i(j)  =
 M^{-1} \cdot \left(
  \left(p_i b_{\phi(p_i)} + (1-p_i) r_{\phi(p_i)}\right) - \left(p_i
  b_{\phi(p_j)} + (1-p_i) r_{\phi(p_j)}\right)\right),$$
 we have that $\Delta_{i}(j) \ge 0$ for each
ordered couple of urns $i,j$. The statement follows.
\end{proof}

\medskip

We now give an upper bound on the probability that the correct urn
will be chosen by a voter. 
Note, somewhat counter-intuitively, that the probability of a correct vote
is higher when this upper bound is smaller --- this is because the
$\Delta_{i}(j)$ are additive gaps, not
multiplicative ones, and so by making the upper bound on the expected
number of votes for the correct urn smaller, the gap $\Delta_{i}(j)$
becomes larger relative to the mean.

\begin{lemma}
It holds that
$$E_i(i)  
\le \frac{2(n'-1)}{\epsilon M} + \frac1{n}.$$
\end{lemma}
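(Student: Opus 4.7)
The plan is to essentially mirror the corresponding upper bound from Section~\ref{sec:twosig}, adapted to the landmark framework. First I would observe that, because $M = \max(R,B)$ dominates both $B$ and $R$, the ``smoothing'' corrections $(M-B)/(nM)$ and $(M-R)/(nM)$ appearing in the definitions of $B_i$ and $R_i$ are each bounded above by $1/n$. Dropping those terms yields $B_i \le b_{\phi(p_i)}/M + 1/n$ and $R_i \le r_{\phi(p_i)}/M + 1/n$, and taking the convex combination with weights $p_i$ and $1-p_i$ collapses $E_i(i) = p_i B_i + (1-p_i) R_i$ to
$$E_i(i) \le \frac{p_i\, b_{\phi(p_i)} + (1-p_i)\, r_{\phi(p_i)}}{M} + \frac{1}{n}.$$

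Next I would bound $b_k$ and $r_k$ termwise. Each summand in the definition of $b_k$ has numerator $2 - (p'_{\ell+1}+p'_\ell) \le 2$ and denominator $p'_{\ell+1}-p'_\ell \ge \epsilon$ (by the definition of $\epsilon$ for the landmarks), so $b_k \le 2(k-1)/\epsilon$; an analogous argument gives $r_k \le 2(n'-k)/\epsilon$. Writing $k = \phi(p_i)$, the numerator in the previous display is therefore at most
$$\frac{2}{\epsilon}\bigl(p_i(k-1) + (1-p_i)(n'-k)\bigr),$$
which is a convex combination of the two nonnegative quantities $k-1$ and $n'-k$, both of which lie in $[0, n'-1]$ since $1 \le \phi(p_i) \le n'$. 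Hence it is at most $2(n'-1)/\epsilon$, and substituting back into the display for $E_i(i)$ produces the claimed bound.

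There is no real obstacle here: the argument is a routine transcription of the Section~\ref{sec:twosig} estimate to the more general landmark setting, using only the crude fact that the landmark index $\phi(p_i)$ lies in $\{1,\dots,n'\}$. In particular, I would deliberately avoid invoking the maximizing property of $\phi$ at this stage, since the stated upper bound must hold regardless of how $\phi$ is defined on its range.
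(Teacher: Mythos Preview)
Your proposal is correct and follows essentially the same argument as the paper: bound the smoothing terms $(M-B)/(nM)$ and $(M-R)/(nM)$ by $1/n$, bound $b_k \le 2(k-1)/\epsilon$ and $r_k \le 2(n'-k)/\epsilon$ termwise using the separation $\epsilon$ of the landmarks, and then observe that the resulting convex combination of $k-1$ and $n'-k$ is at most $n'-1$. The paper's proof is the same line-by-line, including your observation that the maximizing property of $\phi$ is not used here.
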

\begin{proof}
Recall that the correct urn is $p_i$. We upper-bound the
probability that a vote will actually go to $p_i$:
\begin{align*}
E_i(i)  & = p_i \cdot M^{-1} \cdot \left(b_{\phi(p_i)} + \frac{M -B}{n}\right) + (1 -
p_i) \cdot M^{-1} \cdot \left(r_{\phi(p_i)} + \frac{M -R}{n}\right)\\
& \le p_i \cdot \left(M^{-1} \cdot b_{\phi(p_i)} + \frac{1}{n}\right) + (1 -
p_i) \cdot \left(M^{-1} \cdot r_{\phi(p_i)} + \frac{1}{n}\right).
\end{align*}
Observe that, by the definition of $\epsilon$, we have that
$\ds{b_i = \sum_{\ell=1}^{i-1} \frac{2-(p'_{\ell+1} +
  p'_{\ell})}{p'_{\ell+1} - p'_{\ell}}}$ is at most $\ds{b_i \le
\frac{2(i-1)}{\epsilon}}$, and that $\ds{r_i = \sum_{\ell=i}^{n'-1}
\frac{p'_{\ell+1} + p'_{\ell}}{p'_{\ell+1} - p'_{\ell}} \le
\frac{2(n'-i)}{\epsilon}}$. Thus,
$$
E_i(i)   \le p_i \cdot \left(\frac{2(i-1)}{\epsilon M} + \frac{1}{n}\right) + (1 -
p_i) \cdot \left(\frac{2 (n'-i)}{\epsilon M} + \frac{1}{n}\right)
\le \frac{2(n'-1)}{\epsilon M} + \frac1{n}.
$$
\end{proof}

We now give an upper bound on $M$. This will allow
us to apply a Chernoff bound and prove the main theorem.

\begin{lemma}\label{Mbound_lemma}
It holds that 
$$ \frac1{81} \cdot \frac{(n'-1)
  \cdot (n+n')}{\epsilon} \le M \le 2 \cdot \frac{(n'-1) \cdot (n+n')}{\epsilon}.$$
\end{lemma}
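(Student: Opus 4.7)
The plan is to prove the two inequalities separately, using the formula $M = \max(R,B)$ together with $R + B = \sum_{k=1}^{n'} (|\phi^{-1}(k)|+1)(r_k + b_k)$ and the identity $\sum_{k=1}^{n'}(|\phi^{-1}(k)|+1) = n + n'$ (since $\phi$ maps the $n$ urns into $\{1,\ldots,n'\}$).

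For the upper bound I would note $M \le R + B$ and argue that for every $k$,
$$r_k + b_k \;\le\; r_1 + b_{n'} \;=\; \sum_{\ell=1}^{n'-1}\frac{2}{p'_{\ell+1}-p'_\ell} \;\le\; \frac{2(n'-1)}{\epsilon},$$
using nothing more than the fact that $\epsilon$ is the minimum gap between consecutive landmarks. Multiplying by the weights and summing gives $R + B \le 2(n'-1)(n+n')/\epsilon$, which yields the claimed upper bound directly.

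The lower bound is the more delicate half, and this is where conditions (a) and (b) on the landmarks do the real work. The key claim I would establish is that $r_k + b_k \ge c(n'-1)/\epsilon$ for \emph{every} $k$, with $c$ an absolute constant. For $k \ge K+1$, the sum defining $b_k$ contains all terms with $\ell \in [1,K]$; by (a) each denominator $p'_{\ell+1}-p'_\ell$ is at most $2\epsilon$, and by (b) together with monotonicity we have $p'_\ell \le p'_{K+1} \le (2K+1)\epsilon$, so each numerator $2-(p'_{\ell+1}+p'_\ell)$ is at least $2(1-(2K+1)\epsilon)$. This produces a contribution of at least $K(1-(2K+1)\epsilon)/\epsilon$. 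The symmetric argument for $k \le n'-K$ gives the same bound for $r_k$, using the tail terms $\ell \in [n'-K, n'-1]$. Because $K+1 \le n'-K$ whenever $n' \ge 10$, every $k \in \{1,\ldots,n'\}$ falls into at least one of the two regimes.

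The final step, which is where one has to be a little careful, is to bound $1-(2K+1)\epsilon$ below by a positive absolute constant. Here I would use that the landmarks lie in $[0,1]$ and have pairwise gaps at least $\epsilon$, so $\epsilon \le 1/(n'-1)$, together with $K = \lceil (n'-1)/3\rceil$, to conclude $(2K+1)\epsilon$ is bounded away from $1$ uniformly in $n' \ge 10$; combined with $K \ge (n'-1)/3$ this gives $r_k + b_k \ge c(n'-1)/\epsilon$ for an explicit $c$. Summing against the weights yields $R+B \ge c(n'-1)(n+n')/\epsilon$, and since $M \ge (R+B)/2$ the claimed $1/81$ constant follows by a loose choice of $c$. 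I expect this constant-chasing at the end --- in particular verifying $(2K+1)\epsilon$ is bounded strictly below $1$ in the worst case $n'=10$ --- to be the only real obstacle; the structure of the argument itself is essentially forced by the definitions of $b_k, r_k$ and the landmark hypotheses.
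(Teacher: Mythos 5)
Your proposal is correct and follows essentially the same route as the paper: the upper bound via $r_k+b_k \le r_1+b_{n'} \le 2(n'-1)/\epsilon$ together with $\sum_{k}\left(\left|\phi^{-1}(k)\right|+1\right)=n+n'$, and the lower bound via $M \ge (R+B)/2$, keeping for each $k$ only the first $K$ terms of $b_k$ (resp.\ the last $K$ terms of $r_k$), each of size at least $(1-(2K+1)\epsilon)/\epsilon$, and observing that every $k$ collects at least $K$ such terms. The only point to tighten is your final constant check: the supremum of $(2K+1)\epsilon$ over $n'\ge 10$ is not attained at $n'=10$ (for instance $n'=11$ gives $(2K+1)/(n'-1)=9/10>7/9$), but the paper's uniform estimates $2K+1\le \frac56 n'$ and $\epsilon\le \frac1{n'-1}$ give $(2K+1)\epsilon\le \frac{25}{27}$, and with $K\ge \frac{n'-1}{3}$ your argument then yields exactly the stated constant $\frac1{81}$.
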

\begin{proof}
Recall that $M = \max(R,B)$; we will upper bound $R + B$ to get an
upper bound on $M$:
\begin{align*}
R + B &= \sum_{k=1}^{n'}
\left(\left(\left|\phi^{-1}\left(p'_k\right)\right| + 1\right)
\cdot(r_k + b_k)\right) \le \sum_{k=1}^{n'}
(\left(\left|\phi^{-1}\left(p'_k\right)\right|+1\right)
\cdot(r_1 + b_{n'})) \\
& = \sum_{k=1}^{n'} \left(\left(\left|\phi^{-1}\left(p'_k\right)\right|+1\right)\cdot
\sum_{\ell=1}^{n'-1} \frac{2}{p_{\ell+1}-p_{\ell}}\right) \\
\le& \sum_{k=1}^{n'} \left(\left(\left|\phi^{-1}\left(p'_k\right)\right|+1\right) \cdot (n'-1) \cdot \frac{2}{\epsilon}\right) =
\frac{2\cdot (n'-1) \cdot (n+n')}{\epsilon}.
\end{align*}
It follows that
\begin{equation}
M \le \frac{2\cdot (n'-1) \cdot (n+n')}{\epsilon}.
\end{equation}

We now move on to the lower bound.
Recall that $n' \ge 10$, and that, for each $k =
1,2,\ldots,\left\lceil\frac{n'-1}3\right\rceil = K$ (resp., $k = n' - K,
\ldots, n' - 1$) we
have $p'_{k+1} - p'_k \le 2\epsilon$; also,
$p'_{K+1}\le (2K+1)\epsilon$ and $p'_{n'-K} \ge 1-(2K+1)\epsilon$.
 Observe that
$$2K+1=  2\left\lceil\frac{n'-1}3\right\rceil +1 \le 2
\cdot \frac{n' + 1}3 + 1 \le 2
\cdot \frac{n' + \frac52}3  \le 2
\cdot \frac{n' + \frac{n'}4}3 \le \frac{5}{6} \cdot n'.$$
Since $\epsilon = \min_{1 \le i \le n' - 1} p'_{i+1} - p'_i$, we have $\epsilon \le \frac1{n' - 1}$, and
$$(2K+1)\epsilon \le \frac56 \cdot \frac{n'}{n'-1} \le \frac{25}{27}.$$

We are now ready to lower bound $M$:
\begin{eqnarray*}
M &=& \max(R, B) \ge \frac{R+B}2 = \frac12\cdot\sum_{k=1}^{n'} \left(\left(\left|\phi^{-1}(k)\right| +
1\right) \cdot \left(r_k + b_k\right)\right) \\
&=& \frac12 \cdot \sum_{k=1}^{n'} \left(\left(\left|\phi^{-1}(k)\right| +
1\right) \cdot \sum_{\ell=k}^{n'-1} \frac{p'_{\ell+1} + p'_{\ell}}{p'_{\ell+1}
    - p'_{\ell}}\right) + \frac12 \cdot \sum_{k=1}^{n'}
\left(\left(\left|\phi^{-1}(k)\right| +1\right) \cdot \sum_{\ell=1}^{k-1}\frac{2-(p'_{\ell+1} + p'_{\ell})}{p'_{\ell+1}
    - p'_{\ell}} \right)\\
&\ge& \frac12 \cdot \sum_{k=1}^{n'} \left(\left(\left|\phi^{-1}(k)\right| +
1\right) \cdot \sum_{\ell=\max(k, n'-K)}^{n'-1} \frac{p'_{\ell+1} + p'_{\ell}}{p'_{\ell+1}
    - p'_{\ell}}\right) +\\
&& \frac12 \cdot \sum_{k=1}^{n'}
\left(\left(\left|\phi^{-1}(k)\right| +1\right) \cdot \sum_{\ell=1}^{\min(k-1,K)}\frac{2-(p'_{\ell+1} + p'_{\ell})}{p'_{\ell+1}
    - p'_{\ell}} \right)\\
&\ge& \frac12 \cdot \sum_{k=1}^{n'}
\left(\left(\left|\phi^{-1}(k)\right| +
1\right) \cdot \sum_{\ell=\max(k, n'-K)}^{n'-1}
\frac{2(1-(2K+1)\epsilon)}{2\epsilon}\right) +\\
&& \frac12 \cdot \sum_{k=1}^{n'}
\left(\left(\left|\phi^{-1}(k)\right| +1\right) \cdot
\sum_{\ell=1}^{\min(k-1,K)}\frac{2-2\cdot(2K+1)\epsilon}{2\epsilon}
\right),
\end{eqnarray*}
using the upper bound we previously obtained for $(2K+1)\epsilon$, we obtain:
\begin{eqnarray*}
M&\ge& \frac12 \cdot \sum_{k=1}^{n'} \left(\left(\left|\phi^{-1}(k)\right| +
1\right) \cdot \sum_{\ell=\max(k, n'-K)}^{n'-1} \frac{2}{27\epsilon}\right) + \frac12 \cdot \sum_{k=1}^{n'}
\left(\left(\left|\phi^{-1}(k)\right| +1\right) \cdot
\sum_{\ell=1}^{\min(k-1,K)}\frac{2}{27\epsilon} \right)\\
&\ge& \frac1{27\epsilon} \cdot \sum_{k=1}^{n'} \left(\left(\left|\phi^{-1}(k)\right| +
1\right) \cdot \left(n'-\max(k,n'-K) + \min(k-1,K)\right)\right)\\
&\ge& \frac1{27\epsilon} \cdot \sum_{k=1}^{n'} \left(\left(\left|\phi^{-1}(k)\right| +
1\right) \cdot \left(\min(n'-k,K) + \min(k-1,K)\right)\right),
\end{eqnarray*}
observe that $(n'-k)+(k-1) = n'-1$ and therefore at least one of
$(n'-k)$ and $k-1$ is at least $$\frac{n'-1}2 \ge  \frac{\frac32 n' -
  \frac32}3 = \frac{n' + \frac{n'-3}2}3 \ge \frac{n' + \nicefrac72}3 >
\left\lceil\frac{n'-1}3\right\rceil = K.$$
Therefore, at least one of $\min(n'-k,K), \min(k-1,K)$ is at least
$K$. Then,
$$M \ge \frac1{27\epsilon} \cdot \sum_{k=1}^{n'} \left(\left(\left|\phi^{-1}(k)\right| +
1\right) \cdot K\right) = \frac K{27\epsilon} \cdot \sum_{k=1}^{n'} \left(\left|\phi^{-1}(k)\right| +
1\right) = \frac K{27\epsilon} \cdot \left(n + n'\right) \ge \frac{(n'-1)
  \cdot (n+n')}{81\cdot\epsilon}.$$
\end{proof}

\comment{
$p_{\left\lceil\frac n2\right\rceil} \le \nicefrac12$
($p_{\left\lfloor\frac n2\right\rfloor} \ge \nicefrac12$). Then, at
least
$\nicefrac n8$ of those $p_i$'s will be such that $p_{i+1} -
p_{i} \le \frac2n$ --- for otherwise they would span more than just the
interval $\left[0,\nicefrac12\right]$. Then, $B = \sum_{k=0}^n (N_k
\cdot b_k) \ge $}

Therefore, going back to the probability that an urn identical to 
the correct urn is voted for, we have
$$
E_i(i)  \le  \frac{2(n'-1)}{\epsilon M} + \frac1{n} \le
\frac{2(n'-1)}{\epsilon\cdot \frac1{81} \cdot
  \frac{(n'-1)(n+n')}{\epsilon}} = \frac{162}{n+n'}
$$

\subsection{An Upper Bound for Many Signals}\label{sec:manysig}

In this section we consider the voting problem in its full generality:
we have a set of $n \ge 2$ urns, with each urn $i = 1,\ldots,n$ inducing a distinct probability
distribution $P_i = (p_{i,1}, p_{i,2}, \ldots,
p_{i,C})$ over a set of $C$ colors. 
Let $\epsilon$ be the minimum $\ell_1$ distance between the
distributions $P_i$:
$$\epsilon = \min_{i \ne j} \ell_1(P_i,P_j) = \min_{i \ne j} \sum_{c = 1}^C
\left|p_{i,c} - p_{j,c}\right|.$$

It turns out that the bichromatic scheme from Section~\ref{sec:fltwosig}
has already laid much of the groundwork for the multi-color case.
Each voter $u$ will behave as follows:\begin{itemize}
\item[1.] First, $u$ will choose a color $c = c(u)$ uniformly 
at random from among all the colors.
Voter $u$ will then imagine the urns as inducing a bichromatic instance
by imagining all colors other than $c$ as a single color $\bar{c}$.
In this way, urn $i$ becomes a bichromatic urn with distribution
$(p_{i,c}, 1 -p_{i,c})$ over its two colors.

\item[2.] Then, voter $u$ will choose an integer $t = t(u) \in
  \left\{0,1,\ldots, T\right\}$ with $T= \left\lceil \log_3 C
  \right\rceil + 1$, in such a way that $\Pr[t = i] = \alpha^{-1}
  \cdot 3^{-T+i}$, where $\alpha = \sum_{i=0}^T 3^{-i}$.

 Observe that $\alpha < \sum_{i=0}^{\infty} 3^{-i} = \frac32$.

\item[3.] 
Voter $u$ will then apply the
bichromatic voting scheme from Section~\ref{sec:fltwosig}
to choose which urn to vote 
for. She will set $\{p_1, p_2, \ldots, p_n\} = \{p_{1,c}, p_{2,c},
\ldots, p_{n,c}\}$, for $i = 1, \ldots, n$; the sequence of the
$p'_i$'s will be defined as follows: \begin{itemize}
\item first she will pick a subsequence according to the following
  {\em marking algorithm}: set $w_1 = 0$ and mark all the $p_j$'s such
  that $p_j \le 3^{-t} \cdot \epsilon$; if some unmarked $p_j$
  remains, let $i = 2$, and
\item set $w_{i}$ to be the smallest unmarked $p_j$,
\item mark all the $p_j$'s
for which $\left|p_j - w_i\right| < 3^{-t} \cdot \epsilon$;
\item if some unmarked $p_j$ remains, repeat; otherwise, if $w_i \ne
  1$, set $w_{i+1} = 1$; then, stop.
\end{itemize}
\item[4.] let  $i^*$ be the length of the sequence $\{w_i\}$; if $i^*
  < 10$, the voter will add $10 - i^*$ elements to $\{w_i\}$: let $i$
  be such that $w_{i+1} - w_i$ is maximized; the voter will insert the values
  the values $w_i + \frac1{9} (w_{i+1} - w_i), w_i + \frac2{9}
  (w_{i+1} - w_i), \ldots, w_i + \frac{8}9 (w_{i+1}-w_i)$ in the list,
  keeping it sorted.

 The size of the sequence $\{w_i\}$ will then be at least 10.

 The voter
  will then define the sequences $x_{i,1}, x_{i,2}$ as $x_{i,1} = \frac{2w_{i} + w_{i+1}}3$,
  $x_{i,2} = \frac{w_i + 2w_{i+1}}3$, for $i = 1, \ldots, i^*-1$;
\item[5.] the voter then merges the sequences $w_i,
  x_{i,1}, x_{i,2}$, and sorts the resulting sequence increasingly; let
  $y_1 < y_2 < \cdots < y_{3i^*-2}$ be this sequence, and
  $\epsilon_{c,t}$ be its separation parameter: $\epsilon_{c,t} =
  \min_{i=1,\ldots, 3i^*-3} y_{i+1} - y_i$.
\item[6.] then the voter adds elements to $\{y_i\}$ in such a way that:\begin{itemize}
 \item[(a)] the minimum separation between adjacent elements remains at
  least $\epsilon_{c,t}$, \item[(b)] if the list has length $n'$, then for
  each $i = 1,\ldots, \left\lceil \frac{n'}3\right\rceil$, it holds
  that $y_{i+1} -y_i \le 2\epsilon_{c,t}$, and \item[(c)] 
if the list has length $n'$, then for
  each $i = \left\lfloor\frac{2n'}3\right\rfloor,\ldots,n'-1$, it holds
  that $y_{i+1} -y_i \le 2\epsilon_{c,t}$.\end{itemize} To do so, she applies the following
  algorithm:\begin{itemize}
\item[6.1] if (b) is not satisfied, i.e., if the list $\{y_i\}$ has
  currently length $n'$ and $i$ is a minimal index $i$ for which there exists elements
  $y_i, y_{i+1}$ such that $i \le \left\lceil\frac{n'}3\right\rceil$
  and $y_{i+1} - y_i > 2\epsilon_{c,t}$, then insert a new element between
  $y_i$ and $y_{i+1}$ of value $y_i +\epsilon_{c,t}$; this will increase the
  length of the list; repeat this step as
  long as (b) is not satisfied;
\item[6.2] if (c) is not satisfied, i.e., if the list $\{y_i\}$ has
  currently length $n'$ and there is a maximal index $i$ for which there
  exists elements
  $y_i, y_{i+1}$ such that $i \ge \left\lfloor\frac{2n'}3\right\rfloor$
  and $y_{i+1} - y_i > 2\epsilon_{c,t}$, then insert a new element between
  $y_i$ and $y_{i+1}$ of value $y_{i+1} -\epsilon_{c,t}$; repeat this step as
  long as (c) is not satisfied.
\end{itemize}
It is easy to prove that the above algorithm guarantees properties (a),
(b) and (c). Let $n'_{c,t}$ be the length of the final sequence $\{y_i\}$, $K_{c,t} =
\left\lceil\frac{n'_{c,t}}3\right\rceil$, and observe that
the algorithm also guarantees that (d) $n'_{c,t} \ge 10$, (e) $n'_{c,t} \le 3i^*-2 +2(K_{c,t}+1) \le 9i^*
+ 2 \le 9n + 2$ and (f) $y_{K_{c,t} + 1} \le (2K_{c,t}+1) \epsilon_{c,t}$
and $y_{n'_{c,t} - K_{c,t}} \ge
1-(2K_{c,t}+1)\epsilon_{c,t}$.
\end{itemize}

The just-defined bichromatic instance depends only on the original
multi-colored instance, on $c$ and on $t$ --- we use $(c,t)$-instance
to refer to the bichromatic instance induced by $c$ and $t$.

Observe that the separation parameter $\epsilon_{c,t}$ of the
$(c,t)$-instance will be at least $\epsilon_{c,t} \ge 3^{-t-3} \cdot
\epsilon$, since the $w_i$'s are at distance of at least
$3^{-t-2}\cdot\epsilon$ from each other\footnote{Before step 4 they
  were at distance at least $3^{-t}\epsilon$ from each other, and
  step 4 could have added new $w_i$'s at distance at least $3^{-t-2}$
  from each other.}, and  $x_{i,1}$, $x_{i,2}$ split
the interval between $w_i$ and $w_{i+1}$ in three equal parts --- the
subsequently added $y_i$'s do not induce gaps smaller that
$\epsilon_{c,t}$. Furthermore the number of landmarks of the
$(c,t)$-instance will be $10 \le n'_{c,t} \le 9n+2 \le 10 n$, since $n
\ge 2$. 

We are now ready to prove Theorem~\ref{multicolor}, using the machinery built in
Section~\ref{sec:fltwosig}.

\begin{proof}[Proof of Theorem~\ref{multicolor}]
First, given two urns $P_i,P_j$, we say that a color $c$ is {\em useful
  for $i,j$} if $\left|p_{i,c} - p_{j,c}\right| >
\frac{\epsilon}{3C}$. Observe that if $C_{i,j}$ is the set of useful
colors for urns $P_i,P_j$, we have
$$\sum_{c \in C_{i,j}} \left|p_{i,c} - p_{j,c}\right| >
\frac{2}3\cdot \epsilon.$$
Indeed, since there are only $C$ colors, the contribution to the
$\ell_1$ distance between $P_i$ and $P_j$ of their non-useful colors is less than $C \cdot
\frac{\epsilon}{3C} = \frac{\epsilon}3$. Given that the total distance
is at least $\epsilon$, it follows that the useful colors contribute
by more than $\frac{2\epsilon}3$ to the $\ell_1$ distance of $P_i,P_j$.

\medskip

Suppose that $i$ is the unknown urn.  Let $p_i = p_{i,c}$ and $p_j =
p_{j,c}$ in the $(c,t)$-bichromatic instance, for some $c,t$. Let $E_i^{(c,t)}(j)$ be
the expected number of votes that a voter will give to urn $j$, if $i$ is the unknown urn, in
the $(c,t)$-bichromatic instance.
The analysis of the
bichromatic case, guarantees that
$$E^{(c,t)}_i(j) \le E^{(c,t)}_i(i) \le \frac{162}{n+n'_{c,t}} \le \frac{162}n.$$
and that
 the difference $\Delta^{(c,t)}_i(j)
= E_i^{(c,t)}(i) - E_i^{(c,t)}(j)$ is at least
$$\Delta^{(c,t)}_i(j) \ge \frac{\max\left(\left|\phi(p_i) -
  \phi(p_j)\right|-1,0\right)}{M_{c,t}}.$$

\medskip

Fix a color $c \in C_{i,j}$ and let $t_{c,i,j}$ be
the smallest non-negative integer such that
$$\left|p_{i,c} - p_{j,c}\right| \ge \epsilon \cdot 3^{-t_{c,i,j}}.$$
Since $c$ is a useful color we have $\left|p_{i,c} - p_{j,c}\right| >
\frac{\epsilon}{3C}$, and therefore $0 \le t_{c,i,j} \le \left\lceil\log_3
C\right\rceil + 1 = T$.
By
$\epsilon_{c,t} \ge 3^{-t-3} \cdot \epsilon$, we obtain
$$\epsilon_{c,t_{c,i,j}} > \frac1{81} \left|p_{i,c} - p_{j,c}\right|.$$

Since $\left|p_i - p_j\right| \ge \epsilon \cdot 3^{-t_{c,i,j}}$, the
marking algorithm run by the voters 
will mark $p_i$ and $p_j$ at different
iterations --- therefore, there are at least three landmarks
between $p_i$ and $p_j$. It follows that $\left|\phi(p_i) -
\phi(p_j)\right| \ge 2$. Therefore,
$$\Delta^{(c,t_{c,i,j})}_i(j) \ge \frac{1}{M_{c,t_{c,i,j}}}\ge\frac1{2 \cdot
\frac{(n'_{c,t_{c,i,j}}-1)\cdot(n + n'_{c,t_{c,i,j}})}{\epsilon_{c,t_{c,i,j}}}} \ge
\frac{\epsilon_{c,t_{c,i,j}}}{22 n (n'_{c,t_{c,i,j}}-1)} \ge
\frac{|p_{i,c}-p_{j,c}|}{17820 \cdot n^2}.$$

Then,
\begin{align*}
\Delta_i(j) &= \sum_{c =1}^ C \sum_{t=1}^T \left(\frac1C
\cdot \frac1T \cdot \Delta_i^{(c,t)}(j)\right) \ge \frac1{C \cdot T} \cdot  \sum_{c \in
  C_{i,j}} \max_{t=1,\ldots,T} \Delta_i^{(c,t)}(j) \ge \frac1{C \cdot
T} \cdot \sum_{c \in
  C_{i,j}} \Delta_i^{(c,t_{c,i,j})}(j) \\
&\ge \frac1{C \cdot
T} \cdot \sum_{c \in
  C_{i,j}}
\frac{|p_{i,c}-p_{j,c}|}{17820 \cdot n^2} \ge \frac{\epsilon}{26730
  \cdot C \cdot T
  \cdot n^2} = x.
\end{align*}
and
$$E_i(j) = \sum_{c =1}^C \sum_{t=1}^T \left(\frac1C
\cdot \frac1T \cdot E_i^{(c,t)}(j)\right) \le \frac{162}{n}$$

\medskip

We choose $m = \left\lceil 7 \cdot 10^{12} \cdot \frac{ C^2 \cdot T^2
  \cdot n^3}{\epsilon^2} \ln \frac{n}{\eta}\right\rceil =
\Theta\left(\frac{(C \log C)^2 \cdot n^3}{\epsilon^2} \ln \frac{n}{\eta}\right)$ as the number
of voters, and we apply Chernoff bound (see
Theorem~\ref{thm:chernoff}), on $V_j$: the number of votes to urn $j$
in the election, if $i$ is the unknown urn. Observe that $E[V_j] = m
\cdot E_i(j)$, and furthermore:
\begin{align*}
\Pr\left[\left|V_j - E[V_j]\right| > \frac{x}{3} \cdot m\right]
  &= 
\Pr\left[\left|V_j - E[V_j]\right| > \frac{x}{3E_i(j)} \cdot
  E_i(j) \cdot m\right] \\
& \le 2
\exp\left(-\frac{x^2}{27 E^2_i(j)} \cdot E_i(j) \cdot
m\right)\\
 & \le 2
\exp\left(-\frac{x^2}{27 E_i(j)} \cdot m\right)\\
 & \le 2
\exp\left(-\frac{\epsilon^2}{27 \cdot 26730^2 \cdot C^2 \cdot T^2
  \cdot n^4 \cdot \frac{162}n} \cdot m\right)\\
 & \le 2
\exp\left(-\frac{\epsilon^2}{27 \cdot 26730^2 \cdot C^2 \cdot T^2
  \cdot n^4 \cdot \frac{162}n} \cdot m\right) \le \frac{\eta}{n}.
\end{align*}

Applying the Union Bound over all the urns, we have that each single urn $j$
will deviate by at most $\frac x3 m$ from its
expected number of votes with probability at least  $1-\eta$, and since the expected difference of the
number of votes of urn $i$ and $j$ if $i$ is the unknown urn is at
least $m \cdot \Delta_i(j) \ge m\cdot x$, we have that urn $i$ will
win the election with probability at least $1-\eta$.
\end{proof}
\comment{
Let $i$ be the correct urn 
and take any other urn $j$. Observe that there will be (at
least) one useful color $c^* = c^*(i,j)$ such that $\left|p_{i,c^*} -
p_{j,c^*}\right| \ge \epsilon$. Then by 
Equation (\ref{deltaij}) from Section~\ref{sec:fltwosig},
the quantity $\Delta_{c^*,i}(j)$
in the $c^*$-vs-$\bar{c}^*$ bichromatic voting 
would satisfy $\Delta_{c^*,i}(j) \ge \dfrac{1}{M_{c^*}}$.  
Furthermore, for each color $c$
 we will have that $\Delta_{c,i}(j) \ge 0$.
Since $c$ is chosen
uniformly at random between the useful colors, 
the total $\Delta_{i}(j)$ satisfies
$$\Delta_{i}(j) \ge \frac{1}{C' M_{c^*}} \ge \frac{\epsilon}{2C'
  (N-1)N} \ge \frac{\epsilon}{2CN^2}.$$

\comment{If we let $\mathcal{M}$ be $$\mathcal{M} = \max_{\substack{1 \le c \le
    C\\c \text{ useful}}} M_c,$$
we have, by
Eqn.~\ref{Mbound},  $\mathcal{M} \le \frac{2(N-1)N}{\epsilon}$, and
$$\Delta_{i}(j) \ge \frac{1}{C' \mathcal{M}}.$$
Observe that the lower bound is independent of $i$ and $j$. Also, by 
Eqn.~\ref{eii}, we get
$$E_i(i) \le \frac{4N}{\epsilon C'} \cdot \sum_{\substack{1 \le c \le
    C\\c \text{ useful}}} \frac1{M_c}.$$}

Using a Chernoff bound, 
we can now establish that with enough voters, this voting strategy 
will produce the correct option with high probability.
The proof is given in the appendix.

\begin{theorem}
If we choose a number of voters 
$\ds{m =
\left\lceil108 \cdot \frac{C^2
  \cdot N^4}{\epsilon^2} \cdot \ln \frac{N}{2\eta}\right\rceil},$
then our voting strategy chooses the correct option
with probability at least $1 - \eta$.
\label{thm:multicolor}
\end{theorem}

}

\section{Other Voting Systems}

In this section we study other important voting systems, assuming that
there are two types of signals; that is, assuming bichromatic urns.

\subsection{Cumulative Voting}

We show that {\em cumulative voting} requires a smaller  number of
voters for the election to succeed with high probability. In fact,
{\em cumulative voting} can be exploited to work with a number of voters as small as the number of samples used by the optimal centralized algorithm
(that is, the algorithm that, after sampling the minimum number of
balls, produces the right guess with high probability).

Like plurality voting, in the cumulative voting election system, each voter has a single vote to
cast; unlike plurality voting, though, the voter can split her vote arbitrarily between the
candidates:
\begin{definition}[Cumulative Voting]
Each voter assigns a score to each candidate, in such a way that no
score is negative and the sum of the scores assigned by a voter is
$1$. The total score of a candidate is the sum of the scores assigned
to that candidate by the voters. If there exists a candidate $i$
having a total score larger than the total score of each other
candidate $j \ne i$, then $i$ is the winner of the election.
\end{definition}

Given urns $p_1,p_2,\ldots,p_n$, the voting scheme we propose for cumulative voting is directly derived
from the plurality voting scheme we proposed earlier; in the new scheme,
there are only two possible votes: if a voter picks
a red (resp., blue) ball then she will vote $(R_1,R_2,\ldots, R_n)$ (resp., $(B_1,B_2,\ldots,B_n)$) --- that is, she will assign a weight of $R_i$ (resp., $B_i$) to candidate $P_i$, for $i=1,2,\ldots, n$. The $R_i$'s and
the $B_i$'s are those that we defined for the plurality voting scheme.

\begin{theorem}\label{thm:upperb_cv}
Let urns $p_1, p_2, \ldots, p_n$ be given, with urn $p_i$ having a
$p_i$ fraction of blue balls, and a $1-p_i$ fraction of red balls. Let $p_1 < p_2 < \cdots < p_n$.
Also, let $\epsilon$ be $\epsilon = \min_{1 \le i \le n-1} (p_{i+1} -
p_i)$. Then, for Cumulative Voting, $O\left(\epsilon^{-2} \ln \frac 1{\eta}\right)$ voters are
sufficient to guarantee a probability of at least $1 -\eta$ that
the correct urn wins the election.
\end{theorem}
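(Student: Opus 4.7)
The plan is to re-use the weights $(B_1,\ldots,B_n)$ and $(R_1,\ldots,R_n)$ defined for plurality voting in Section~\ref{sec:twosig}, but to exploit a structural simplification unique to cumulative voting: because each voter deterministically casts the vector $(B_1,\ldots,B_n)$ on a blue draw and $(R_1,\ldots,R_n)$ on a red draw, the entire election depends on just one random quantity, namely the number $N_B$ of voters who observe blue. Writing $S_j$ for the total score received by urn $p_j$, one has $S_j = B_j N_B + R_j(m - N_B)$, and hence
$$S_i - S_j = m\Delta_i(j) + D_{ij}(N_B - mp_i),$$
where $D_{ij} = (B_i-R_i) - (B_j-R_j)$ and $\Delta_i(j) = E_i(i) - E_i(j)$ exactly as in Section~\ref{sec:twosig}.

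Next, I will bound $|D_{ij}|/\Delta_i(j)$ uniformly in $j$. The additive offsets $(M-B)/n$ and $(M-R)/n$ inside $B_k$ and $R_k$ cancel in the difference, so $D_{ij} = M^{-1}\bigl((b_i-r_i)-(b_j-r_j)\bigr)$. The telescoping identities already used to prove (\ref{deltaij}) simplify $(b_i-b_j) - (r_i-r_j)$ to $\sum_{\ell=\min(i,j)}^{\max(i,j)-1} 2/(p_{\ell+1}-p_\ell)$, giving $|D_{ij}| \leq 2|i-j|/(M\epsilon)$; combining this with $\Delta_i(j) \geq |i-j|/M$ from (\ref{deltaij}) yields the uniform bound $|D_{ij}|/\Delta_i(j) \leq 2/\epsilon$ for every $j \neq i$.

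Because every pairwise comparison $S_i - S_j$ is controlled by the single random variable $N_B \sim \mathrm{Bin}(m,p_i)$, no union bound over competing urns is needed: whenever $|N_B - mp_i| < m\epsilon/2$, the perturbation satisfies $|D_{ij}(N_B - mp_i)| < m\Delta_i(j)$ simultaneously for every $j \neq i$, so $S_i$ strictly exceeds every other score. Hoeffding's inequality applied to $N_B$ gives $\Pr[|N_B - mp_i| \geq m\epsilon/2] \leq 2\exp(-m\epsilon^2/2)$, and taking $m = \lceil 2\epsilon^{-2}\ln(2\eta^{-1})\rceil = O(\epsilon^{-2}\ln\eta^{-1})$ drives this probability below $\eta$.

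The main obstacle is the point-of-care calculation of the ratio $|D_{ij}|/\Delta_i(j)$; once the telescoping is in hand, the probabilistic part is a one-line concentration bound. This is exactly why cumulative voting can attain the centralized sample-complexity bound $\Theta(\epsilon^{-2}\ln\eta^{-1})$, avoiding both the factor $n^3$ that arises from spreading the plurality vote across many candidates and the factor $\log n$ that arises from a per-candidate union bound.
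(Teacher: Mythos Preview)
Your proof is correct and follows essentially the same route as the paper: both arguments observe that in cumulative voting the entire outcome depends on a single binomial random variable (the number of blue draws), write the score difference $S_i-S_j$ as a linear function of that variable with slope bounded by $O(|i-j|/(M\epsilon))$, combine this with $\Delta_i(j)\ge|i-j|/M$ from (\ref{deltaij}), and finish with a single concentration inequality on the blue count. The only cosmetic differences are that the paper uses Chernoff where you use Hoeffding, and the paper bounds $|B_i-B_j|$ and $|R_i-R_j|$ separately (getting $4|i-j|/(M\epsilon)$) while your direct telescoping of $(b_i-b_j)-(r_i-r_j)=\sum_\ell 2/(p_{\ell+1}-p_\ell)$ yields the slightly sharper $2|i-j|/(M\epsilon)$.
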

\begin{proof}
Choose some $\eta \in (0,1)$, and suppose the number of players is
$$m =  \left\lceil 150 \cdot 
\epsilon^{-2} \cdot \ln \frac{2}{\eta}\right\rceil.$$

We start by using the Chernoff bound to show that the number of voters that pick a ball of some color is concentrated. If $p_i \ge \frac12$, let $X$ be the number of voters picking a blue ball; otherwise let $X$ be the number of voters picking a red ball.
In both cases, $X$ is the sum of iid binary random variables $X_j$,
with $X_j = 1$ with probability $\max(p_i,1-p_i)$ and $X_j = 0$ with
probability $\min(p_i,1-p_i)$. Then, $E[X] = m \cdot \max(p_i,1-p_i)$
and $\frac m2 \le E[X] \le m$. By Chernoff bound,
\begin{align*}
\Pr\left[\left|X - E[X]\right| > \frac{\epsilon }5 \cdot m\right]
& \le\Pr\left[\left|X - E[X]\right| > \frac{\epsilon}5 \cdot
  E[X]\right] 
 \le 2 \cdot \exp\left(-\frac{\epsilon^2}{75} \cdot
E[X]\right) \\
& \le 2 \cdot \exp\left(-\frac{\epsilon^2}{75} \cdot \frac m2 \right) \le
2 \cdot \exp\left(\ln\frac 2{\eta}\right) =  \eta,
\end{align*}
that is, with probability $1-\eta$ the absolute difference between the
number $m_b$ of blue (resp.,
the number $m_r$ of red) balls picked and the expectation $p_i \cdot m$ ($(1-p_i)
\cdot m$) by at most $\frac{\epsilon}5 m$.

For any $i, j \in [n]$, $i \ne j$, let $V_i(j)$ be the fractional number of votes to $P_j$ in
the random election, with unknown urn $P_i$. Let $D_i(j) = V_i(i) -
V_i(j)$.  Observe that urn $i$ beats urn $j$ in the election (with
unknown urn $i$) iff $D_i(j) > 0$. The random variable $D_i(j)$ is the sum of $m$ iid
random variables $D'_i(j)$ each taking value $B_i - B_j$ if the
corresponding voter picked a blue ball and $R_i - R_j$ if she picked a
red ball; we now bound the span $S$ of values of $D'_i(j)$ --- that is, we
bound $S = |(B_i - B_j) - (R_i - R_j)|$. Observe
that 
 $B_i - B_j = \frac{b_i - b_j}M$ and $R_i - R_j =
\frac{r_i - r_j}M$. Also,
$$b_i-b_j = \sum_{\ell =
0}^{i-1} \frac{2-(p_{\ell+1}+p_{\ell})}{p_{\ell+1}-p_{\ell}} -
\sum_{\ell =
  0}^{j-1}
\frac{2-(p_{\ell+1}+p_{\ell})}{p_{\ell+1}-p_{\ell}},$$
and
$$r_i-r_j = \sum_{\ell =
i}^{n-1} \frac{p_{\ell+1}+p_{\ell}}{p_{\ell+1}-p_{\ell}} -
\sum_{\ell =
  j}^{n-1}
\frac{p_{\ell+1}+p_{\ell}}{p_{\ell+1}-p_{\ell}}.$$
Therefore $i \ge j$ iff $b_i - b_j \ge 0$ and $r_i - r_j \le 0$ ---
which implies that $\left|(b_i - b_j) - (r_i - r_j)\right| = \left|b_i
- b_j\right| + \left|r_i - r_j\right|$ and thus the span $S$ of $D'_i(j)$
is equal to $S = \left|B_i - B_j\right| + \left|R_i - R_j\right|$. Furthermore,
$$\left|b_i - b_j\right| =  \sum_{\ell =
\min(i,j)}^{\max(i,j)-1}
\frac{2-(p_{\ell+1}+p_{\ell})}{p_{\ell+1}-p_{\ell}} \le
\frac{2|i-j|}{\epsilon}.$$
Analogously, $\left|r_i - r_j\right| \le
\frac{2|i-j|}{\epsilon}$. 
It follows that the span of $D'_i(j)$ can be upper bounded by
$$S = |B_i - B_j| + |R_i - R_j| \le
\frac{4|i-j|}{\epsilon M}.$$

\smallskip

Observe that $D_i(j)$ is a linear function of the number $m_b$ of blue
balls picked (and the number $m_r = m -m_b$ of red balls picked):
$$D_i(j) = m_b \cdot (B_i - B_j) + m_r \cdot (R_i - R_j).$$
Therefore, $$E[D_i(j)] = m \cdot p_i  \cdot (B_i - B_j) + m \cdot
(1-p_i) \cdot (R_i - R_j) = m \cdot \Delta_i(j),$$
where $\Delta_i(j)$ is the functional defined in
Section~\ref{sec:twosig}; recall that we proved there that
$\Delta_i(j) \ge \frac{|i-j|}M$.

Recall that with probability $1-\eta$, $\left|m_b - m \cdot p_i\right| \le
\frac{\epsilon}5 \cdot m$. If this event happens we have that, for each $j \ne i$,
$$D_i(j) \ge E[D_i(j)] - \frac{\epsilon}5 \cdot m \cdot S \ge m \cdot \left(\Delta_i(j) -
 \frac45 \cdot \frac{|i-j|}{M}\right) = m \cdot \frac{|i-j|}{5M} > 0.$$
Therefore, urn $i$ will beat each urn $j \ne i$, with probability $1-\eta$. The
proof is concluded.
\end{proof}

Observe that the previous bound is tight in a strong sense: no
algorithm that picks $o\left(\epsilon^{-2} \ln
\frac1{\eta}\right)$ balls, and produces a guess arbitrarily after
having seen all their colors, is able to guess the right urn with
probability at least $1-\eta$.


\subsection{Condorcet Voting}

In this section we present a conjecture, and we elaborate on it, with
the aim of showing that {\em Condorcet voting} is as good as Cumulative voting --- and is thus optimal. We begin by recalling the definition of the Condorcet voting system:
\begin{definition}[Condorcet Voting]
In a Condorcet election, each voter returns a
(total) ordering of the candidates. Given two candidates $i$ and $j$, we say
that $i$ beats $j$ in a run-off election if more than half the voters ranked $i$ higher than
$j$. If there exists a candidate $i$ that beats each other candidate
$j \ne i$ in a run-off election, then $i$ is the winner of the
Condorcet election.
\end{definition}
We observe that, in Condorcet voting, voters do not assign real numbers to candidates as in Cumulative voting --- they rather return a discrete object: a permutation of them.

There exist many variants of the Condorcet
election. The differences between them lie in the way of dealing with
ties (that is, when no candidate $i$ beats each
other candidate $j$ in a run-off election). Our main theorem holds for
any such variant, since our theorem will guarantee that no ties will
exist with high probability.

We start by defining a set of coefficients that will be useful for introducing a Condorcet voting scheme.

\begin{definition}\label{recb}
For $k \ge 0$ and $\ell \ge 0$, let $c_{k,\ell}$ be:
\begin{align*}
c_{k,\ell} &= (-1)^{k + \ell} \cdot \binom{k+\ell}k -\frac{(-1)^{\ell} \cdot \binom{k+\ell}k+(-1)^{k+\ell}\cdot(\ell+1)}{k+\ell+2}.
\end{align*}
We define inductively the sequence $b_{k,\ell}$, indexed by two integers; if $k < 0$ or $\ell <0$ then $b_{k,\ell} = 0$; for notational simplicity we will use:
\begin{align*}
x_{k,\ell} = \sum_{\substack{i,j\ge
      0\\i+j \le k }} \frac{b_{i,k-i-j}\cdot
    b_{j,\ell}}{(i+1)(i+j+2)}, &\quad \quad
y_{k,\ell} =\sum_{i=0}^{k}\left((-1)^{k-i}\cdot\frac{b_{i,\ell}}{i+1}\right).
\end{align*}
Then, for $k \ge 0, \ell \ge 0$, we define $b_{k,\ell} = (k+1) \cdot
\left(x_{k-1,\ell}+y_{k-1,\ell}-c_{k+1,\ell}\right)$.
\end{definition}

The induction is well-defined:
$x_{k-1,\ell}$ only depends on the $b_{i,j}$'s for which either (a) $i+j
\le k-1$, or (b) $i \le k-1$ and $j = \ell$; $y_{k-1,\ell}$ only depends on
the $b_{i,j}$'s for which $i \le k-1$ and $j = \ell$.
Therefore, the $b_{k,\ell}$'s can be computed in the following order
via the recurrence: for $n = 0, 1, 2, \ldots$ and for $k = 0,
\ldots, n$, compute $b_{k,n-k}$.
We now make a conjecture on the $b_{k,\ell}$'s:
\begin{conjecture}\label{condorcet_conjecture}
Let
$B(x,P) = \sum_{k = 0}^{\infty} \sum_{\ell=0}^{\infty} \left(b_{k,\ell} \cdot x^k \cdot P^{\ell}\right).$
Then,

\smallskip

(i) the series $B(x,P)$ converges for $0 \le x < \frac12$, $x \le P \le \frac12$, 

(ii) $B(x,P) \ge 0$ for $0 \le x < \frac12$, $x \le P \le \frac12$, and

(iii) $\int_{0}^P B(x,P) \, dx = \frac{1}{P + 1} - \sqrt{\frac{1 - 2P}{1 +2P}}$, for $0 \le P \le \frac12$.
\end{conjecture}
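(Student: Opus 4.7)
The plan is to derive a closed-form expression for $B(x,P)$, from which all three conclusions of the conjecture should follow by analytic manipulation. The first step is to translate the recursion into a functional equation. The factor $(k+1)$ in the defining formula for $b_{k,\ell}$ corresponds to differentiation in $x$, and the shift $k \mapsto k-1$ to multiplication by $x$, so that the leading operator becomes $\partial_x(x\,B(x,P))$. The alternating sum $y_{k-1,\ell}$ is a convolution of $(-1)^k$ with $b_{k,\ell}/(k+1)$; in generating-function language this reads $\frac{1}{1+x}\int_0^x B(t,P)\,dt$. The bilinear expression $x_{k-1,\ell}$ encodes a diagonal convolution that, after an auxiliary substitution, should become an integral transform of $B$ against itself. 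Finally, the $c_{k+1,\ell}$ terms collapse to a rational/logarithmic generating function because $\sum_{k,\ell}(-1)^{k+\ell}\binom{k+\ell}{k}x^k P^\ell = \frac{1}{1+x+P}$, with the fractional denominators $k+\ell+2$ absorbed by a double integration.

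Having assembled the functional equation, the next step is to guess the closed form. The integral identity in (iii) already constrains $B$ severely: differentiating in $P$ via Leibniz's rule yields $B(P,P) + \int_0^P \tfrac{\partial B}{\partial P}(x,P)\,dx$ equal to the derivative of $\tfrac{1}{1+P} - \sqrt{(1-2P)/(1+2P)}$. Rewriting $\sqrt{(1-2P)/(1+2P)}$ as $(1-2P)/\sqrt{1-4P^2}$ suggests strongly that $B(x,P)$ contains the factor $1/\sqrt{1-4xP}$, the natural central-binomial kernel that produces the diagonal singularity. Once a candidate closed form is identified, verification reduces to checking that it satisfies the functional equation together with the initial conditions, which can be read off from the first few $b_{k,\ell}$ computed by hand from the recursion.

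With a closed form in hand, part (i) becomes immediate: the only candidate singularities sit on $xP = 1/4$ (or $x = P = 1/2$) and on $1 + x + P = 0$, so convergence holds in the stated open region. Part (iii) reduces to a direct integration; the term $1/(1+P)$ is expected to appear from a partial-fraction step that separates the $\frac{1}{1+x+P}$ piece from the radical piece. Part (ii), non-negativity of $B(x,P)$, is the most delicate claim. I expect to prove it either (a) by exhibiting a manifestly non-negative integral representation of $B$ derived from the closed form (e.g.\ a Laplace-type integral of squared quantities), or (b) by reinterpreting $b_{k,\ell}$ probabilistically in terms of the Condorcet voting scheme the conjecture is meant to justify, in which case $b_{k,\ell}$ should encode the unnormalized probability of a voter-count event and hence be manifestly non-negative.

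The principal obstacle is the bilinear piece $x_{k-1,\ell}$: it makes the functional equation \emph{quadratic} in $B$, so closing it requires either a fortunate substitution that linearizes it or an algebraic change of variables, perhaps $u = \sqrt{1-4xP}$ or a Catalan-style coordinate. A secondary risk is that the recurrence is genuinely two-dimensional in a way that resists a single-variable substitution. If the closed form proves elusive, the backup plan is to establish (iii) directly by induction on the Taylor coefficients in $P$, matching the explicit power-series expansion of $\frac{1}{1+P} - \sqrt{(1-2P)/(1+2P)}$ coefficient-by-coefficient against $\int_0^P B(x,P)\,dx$ via the recursion. In this fallback path, part (ii) would be handled separately, most plausibly via the probabilistic interpretation rather than by analytic means.
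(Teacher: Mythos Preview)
The key point you need to know is that the paper does \emph{not} prove this statement: it is stated as an open conjecture, and the paper supplies only ``Comments on Conjecture~\ref{condorcet_conjecture}'' that record numerical evidence and partial reformulations. So there is no proof to compare against; what can be compared are your proposed lines of attack and the paper's.

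Your plan and the paper's comments overlap in spirit but diverge in the concrete candidates. For (ii), the paper proposes the alternative expansion $B_1(x,P)=\sum_k \bigl(\sum_{\ell=0}^k a_{k,\ell}P^\ell\bigr)(P+1)^{-k-2}x^k$ with $a_{k,\ell}=\sum_i \binom{k+2}{i}b_{k,\ell-i}$, observes numerically that $B=B_1$ and that the $a_{k,\ell}$ appear non-negative, and notes that proving those two facts would settle (ii). You instead hope for a closed form built around $1/\sqrt{1-4xP}$ or a probabilistic reinterpretation of $b_{k,\ell}$. For (iii), the paper rewrites the integral identity as the coefficient identity $\sum_{k=0}^{n} b_{k,n-k}/(k+1)=C_{n+1}$ with $C_n$ explicit in central binomials; your fallback of matching Taylor coefficients in $P$ is essentially the same reformulation.

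There are two genuine gaps in your plan. First, the ``probabilistic interpretation'' route for non-negativity is circular here: the conjecture is precisely what the paper \emph{assumes} in order to build the permutation distribution (Lemma~\ref{lemma:permutation}), so you cannot appeal to that construction to prove $B\ge 0$. Second, and more seriously, the paper's proof of Lemma~\ref{lemma:permutation} shows that the recursion defining $b_{k,\ell}$ was reverse-engineered exactly so that the quadratic integral equation $\int_0^P B(x,Q)\int_0^x B(y,P)\,dy\,dx - \tfrac{1}{P+1}\int_0^P B(x,Q)\,dx$ equals an explicit rational/logarithmic function; this is the functional equation you are aiming to derive, and its quadratic, two-variable nature (in $P$ and a second parameter $Q$) is not an accident you can linearize away by a single substitution like $u=\sqrt{1-4xP}$. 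Any closed-form guess must simultaneously satisfy that two-parameter identity and the one-parameter integral in (iii), which is a much tighter constraint than your sketch acknowledges. In short: your outline is a reasonable research plan, but it does not yet contain the missing idea, and neither does the paper.
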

\begin{proof}[Comments on Conjecture~\ref{condorcet_conjecture}]
We now make some comments on the conjecture, indicating some possible approaches to settle it.
\begin{enumerate}
\item[(i)] Numerical approximations indicate
  that $B(x,P)$ converges for each $0 \le x < \frac12$, $x \le P \le \frac12$, but it diverges for $x = P = \frac12$.

\item[(ii)] For $k \ge 0$, and $0 \le \ell \le k$, let
$$a_{k,\ell} = \sum_{i=0}^{\ell}\left(\binom{k+2}i \cdot b_{k,\ell-i}\right),$$
also, let
$$B_1(x, P) = \sum_{k=0}^{\infty} \left( \frac{\sum_{\ell=0}^k
  \left(a_{k,\ell} \cdot P^{\ell}\right)}{ (P+1)^{k+2}} \cdot x^k \right).$$

By looking at the first few terms of $B_1(x,P)$'s Taylor expansion, it appears that $B(x,P) = B_1(x,P)$.
\comment{
\begin{eqnarray*}
B(P,Q) &=& \frac1{ (Q+1)^2} + \frac{2+4\,Q}{ (Q+1)^3} \cdot P +
\frac{6+8\,Q+8\,Q^2}{2 \cdot (Q+1)^4} \cdot P^2+
\frac{40+112\,Q+80\,Q^2+32\,Q^3}{3! \cdot (Q+1)^5} \cdot P^3+\\
&& \frac{200+688\,Q+1200\,Q^2+848\,Q^3+256\,Q^4}{4! \cdot (Q+1)^6} \cdot P^4 +\\
&&
\frac{2352+9712\,Q+16096\,Q^2+15744\,Q^3+8128\,Q^4+1840\,Q^5}{5! \cdot
  (Q+1)^7
}
\cdot P^5 + \\
&& \sum_{k=6}^{\infty} \left( \frac{\sum_{\ell=0}^k \left(a_{k,\ell}
  \cdot Q^{\ell}\right)}{(Q+1)^{k+2}} \cdot P^k \right).
\end{eqnarray*}}

The $B_1(x,P)$ expression, if $B_1(x,P)=B(x,P)$, could be quite useful to prove non-negativity, since the
$a_{k,\ell}$'s seem all to be non-negative --- if they are point (ii) of the conjecture directly follows.

Also, if $B_1(x,P) = B(x,P)$, then one can express
$b_{k,\ell}$ (for each $\ell \ge 0$) in terms of $b_{k,0},b_{k,1},\ldots, b_{k,k}$:
$$b_{k,\ell} = \sum_{i=0}^k \left( b_{k,i} \cdot \sum_{j=i}^{\min(k,\ell)} \left(
\frac{(-1)^{\ell-j}}{(\ell-j)!} \cdot \binom{k+2}{j-i} \cdot
\sum_{h=0}^{\ell-j}\left(\left[\ell - j \atop h\right] \cdot
(k+2)^h\right)\right)\right),$$
where $\left[n \atop k\right]$ represents the
unsigned Stirling number of the first kind with indices $n \ge k$. The
last claim can be proved using the $B_1(x,P)$ expression and the
following expression for $(Q+1)^{-t}$, $t > 0$:
$$\frac1{(Q+1)^t} = \sum_{i=0}^{\infty} \left(\frac{(-1)^i}{i!} \cdot
\sum_{j=0}^i \left( \left[i \atop j\right] \cdot t^j\right)\cdot Q^i \right).$$

\item[(iii)] Since
$$\frac1{P+1} - \sqrt{\frac{1-2P}{1+2P}} = \sum_{n=0}^{\infty}
\left(C_n \cdot P^n\right),$$
with
$$C_n = \left\{\begin{array}{cc}
1-\binom{2\left\lfloor \nicefrac n2\right\rfloor}{\left\lfloor
  \nicefrac n2\right\rfloor} & \text{if } n \text{ is
  even}\\
2\binom{2\left\lfloor \nicefrac n2\right\rfloor}{\left\lfloor
  \nicefrac n2\right\rfloor} - 1 & \text{if } n \text{ is
  odd}
\end{array}\right. = \frac{1 - 3 \cdot (-1)^n}2 \cdot
\binom{2\left\lfloor \nicefrac n2\right\rfloor}{\left\lfloor \nicefrac
  n2\right\rfloor} + (-1)^n,$$
we have that the point (iii) of the conjecture states that, for $n \ge 0$, 
$$\sum_{k = 0}^{n} \frac{b_{k,n-k}}{k+1} = C_{n+1}.$$
\end{enumerate}

Tables~\ref{bkl} and~\ref{akl} show how the $b_{k,\ell}$ and the $a_{k,\ell}$ sequences begins.

\begin{table}
\begin{center}
\begin{tabular}{|c|c|c|c|c|c|c|}
\hline
$b_{k,\ell}$ & $\ell=0$& $\ell=1$& $\ell=2$& $\ell=3$& $\ell=4$&
$\ell=5$\\
\hline
$k=0$ &$1$& $-2$& $3$& $-4$& $5$& $-6$\\\hline
$k=1$& $2$& $-2$& $0$& $4$& $-10$& $18$\\\hline
$k=2$&   $3$& $-8$& $18$& $-36$& $65$& $-108$\\\hline
$k=3$&   $\frac{20}3$ & $-\frac{44}3$ & $20$ & $-\frac{44}3$ & $-\frac{40}3$ &$80$\\\hline
$k=4$&$\frac{25}3$&$-\frac{64}3$& $53$& $-\frac{388}3$& $\frac{880}3$&$-610$\\\hline
$k=5$&$\frac{98}5$&$-\frac{844}{15}$&$\frac{582}5$&$-188$&$\frac{668}3$&$-\frac{558}5$\\\hline
\end{tabular}
\end{center}
\caption{The first few $b_{k,\ell}$'s.}
\label{bkl}
\end{table}

\begin{table}
\begin{center}
\begin{tabular}{|c|c|c|c|c|c|c|}
\hline
$a_{k,\ell}$ & $\ell=0$& $\ell=1$& $\ell=2$& $\ell=3$& $\ell=4$&
$\ell=5$\\
\hline
$k=0$ &$1$& & & & & \\\hline
$k=1$& $2$& $4$& & & & \\\hline
$k=2$&   $3$& $4$& $4$& & & \\\hline
$k=3$&   $\frac{20}3$ & $\frac{56}3$ & $\frac{40}3$ & $\frac{16}3$ &  &\\\hline
$k=4$&$\frac{25}3$&$\frac{86}3$& $50$& $\frac{106}3$& $\frac{32}3$&\\\hline
$k=5$&$\frac{98}5$&$\frac{1214}{15}$&$\frac{2012}{15}$&$\frac{656}5$&$\frac{1016}{15}$&$\frac{46}3$\\\hline
\end{tabular}
\end{center}
\caption{The first few $a_{k,\ell}$'s.}
\label{akl}
\end{table}

\comment{
\say{
The following expression could be useful:
\begin{align*}
&\int_0^P \left(B_1(x,Q) \cdot \int_0^x B_1(y,P) dy \, dx\right) - \frac1{P+1} \cdot \int_0^P B_1(x,Q) dx\\
=&\sum_{k=2}^{\infty} \left(\sum_{i=0}^{k-2} \left(\frac1{i+1}
\cdot \frac{\sum_{j=0}^i \left(a_{i,j} \cdot P^j\right)\cdot
  \sum_{j=0}^{k-i-2} \left(a_{k-i-2,j} \cdot Q^j\right)}{(P+1)^{i+2}
  (Q+1)^{k-i}} \right) \cdot \frac{P^k}k\right) -
\sum_{k=1}^{\infty}\left(\frac{\sum_{j=0}^{k-1}\left(a_{k-1,j} \cdot Q^j\right)}{(P+1)(Q+1)^{k+1}} \cdot \frac{P^k}{k}\right)
\end{align*}}}
\end{proof}

We use Conjecture~\ref{condorcet_conjecture} to show the existence of a probability distribution over permutations that induces a given set of ``marginals''.
\begin{lemma}\label{lemma:permutation}
Let $0 \le p_1 < p_2 < \cdots < p_n  \le 1$. Then, if Conjecture~\ref{condorcet_conjecture} is true, there exists a probability distribution over the
symmetric group $S_{n}$, such that, for each $1 \le i < j \le n$,
$$\Pi_{i,j} = \Pr_{\pi}\left[\pi(i) < \pi(j)\right] = \min\left(1,
\frac{1}{p_i + p_j}\right).$$
\end{lemma}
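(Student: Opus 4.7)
The strategy is to realize the desired distribution on $S_n$ through independent continuous latent variables: introduce independent real-valued random variables $Y_1,\ldots,Y_n$ whose marginal distribution of $Y_i$ depends only on $p_i$, and let $\pi$ be the permutation that sorts them in increasing order. Under this reduction, $\Pr[\pi(i)<\pi(j)] = \Pr[Y_i < Y_j]$, so the task becomes purely analytic: choose the marginals so that each pairwise probability matches $\min(1, 1/(p_i+p_j))$.

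I would split into two regimes. The saturated regime $p_i+p_j\le 1$ requires $\Pr[Y_i<Y_j]=1$, which I would enforce by arranging the supports of the $Y_i$'s to be monotonically nested intervals keyed to $p_i$, so that whenever $p_i+p_j\le 1$ the supports are disjoint and ordered with $Y_i$ below $Y_j$. For the nondegenerate regime $p_i+p_j>1$ I would define the density of $Y_i$ from the series $B(\cdot, \alpha(p_i))$ after a suitable rescaling $\alpha$ into the conjecture's convergence rectangle $\{0\le x\le P\le 1/2\}$. Part~(i) of Conjecture~\ref{condorcet_conjecture} makes the density a well-defined function on this rectangle, part~(ii) gives the nonnegativity needed to obtain a bona fide probability density, and part~(iii) supplies the normalization $\int_0^P B(x,P)\,dx = 1/(P+1) - \sqrt{(1-2P)/(1+2P)}$.

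The crux is then to verify that $\Pr[Y_i<Y_j]$, expressed as a double integral against the two densities, evaluates to $1/(p_i+p_j)$. I would approach this by expanding both sides as formal double power series in $p_i,p_j$ and matching coefficients term by term. The three components $x_{k-1,\ell}$, $y_{k-1,\ell}$, and $c_{k+1,\ell}$ in the recursive definition of $b_{k,\ell}$ appear engineered precisely for this step: each represents a natural piece of the bivariate integral, namely an inner-CDF contribution, a boundary contribution from the lower limit of integration, and a correction term chosen to force the right-hand side to be $1/(p_i+p_j)$.

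The main obstacle is this coefficient-matching step. Conjecture~\ref{condorcet_conjecture} furnishes only the diagonal identity $\int_0^P B(x,P)\,dx$, whereas the marginal computation requires a bivariate identity of the form $\int\!\int_{x<y} B(x,P) B(y,Q)\,dx\,dy = \tfrac{1}{P+Q}$, up to the rescaling $\alpha$ and normalization. Bootstrapping from the diagonal identity to the bivariate one, using the inductive definition of $b_{k,\ell}$ to close the recursion, is where the deepest work lies. If the candidate closed form $B_1(x,P)$ discussed in the ``Comments on Conjecture'' indeed equals $B(x,P)$ — which is consistent with the leading terms the authors tabulate — the nonnegativity of the $a_{k,\ell}$'s yields finer coefficient-level access and streamlines the matching; otherwise the verification must be carried out purely inductively using the recurrence, which is a delicate bookkeeping task involving sign cancellations, binomial identities, and Stirling-number manipulations of the kind sketched in the conjecture's commentary.
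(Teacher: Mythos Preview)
Your high-level strategy matches the paper's exactly: define independent latent variables $X_{p_1},\ldots,X_{p_n}$, let $\pi$ sort them, and reduce the lemma to verifying $\Pr[X_p<X_q]=\min(1,1/(p+q))$. However, two points in your sketch diverge from what actually makes the argument go through.

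\textbf{The density is not purely $B$.} You propose to take the density of $Y_i$ to be $B(\cdot,\alpha(p_i))$ after a rescaling. This cannot work as stated: part~(iii) of the conjecture gives $\int_0^P B(x,P)\,dx=\tfrac{1}{P+1}-\sqrt{\tfrac{1-2P}{1+2P}}$, which is not $1$, so $B$ alone is not a probability density. The paper's construction for $p>1/2$ has \emph{three} pieces: an explicit closed-form density $\alpha_p(x)=(p+x)^{-2}$ on $[1-p,\tfrac12]$, the $B$-density $\beta_p(x)=B\bigl(x-\tfrac12,\,p-\tfrac12\bigr)$ on $(\tfrac12,p)$, and a point mass $\sqrt{1/p-1}$ at $p$; for $p\le 1/2$ it is just the point mass at $p$. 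The explicit pieces are what make the saturated case $p+q\le 1$ and the mixed case $p\le\tfrac12<q$ come out exactly, and they absorb enough of the hard case $\tfrac12<p<q$ that only a controlled remainder (the $P_4$ term, a double $\beta$-integral) is left for $B$ to handle.

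\textbf{The bivariate identity is not bootstrapped from the diagonal one.} You identify the main obstacle as passing from the conjecture's diagonal identity $\int_0^P B(x,P)\,dx$ to a bivariate identity for $\int\!\int B(x,P)B(y,Q)$. In the paper these two play entirely separate roles. The diagonal identity (part~(iii)) is used \emph{only} for normalization, to check that the three pieces of $X_p$ sum to mass $1$. The bivariate identity needed for $\Pr[X_p<X_q]=1/(p+q)$ reduces, after expanding in power series, to the coefficient equation $c_{k,\ell}=x_{k-2,\ell}-y_{k-1,\ell}$; and since $y_{k-1,\ell}=\tfrac{b_{k-1,\ell}}{k}-y_{k-2,\ell}$, this is an \emph{immediate} consequence of the defining recursion $b_{k,\ell}=(k+1)\bigl(x_{k-1,\ell}+y_{k-1,\ell}-c_{k+1,\ell}\bigr)$. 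So the recursion is not a tool for deriving the bivariate identity from the conjecture; the recursion \emph{is} the bivariate identity, reverse-engineered, and the conjecture supplies only convergence, nonnegativity, and normalization of the resulting series.
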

\begin{proof}
To each probability $p \in [0,1]$ we associate a random variable $X_{p}$ with values:\begin{itemize}
\item if $p \le \frac12$, $X_{p}$ is the constant random variable $p$, with
  a point mass $\gamma_p = 1$ at $p$;
\item if $p > \frac12$, $X_{p}$ has a point mass of $0 \le \gamma_p \le 1$ at $p$; $X_{p}$ has a density function
  $f_{p}(x)$, with $f_{p}(x) = \alpha_p(x)$ for $x \in
  \left[1-p,\frac12\right]$ and $f_{p}(x) = \beta_p(x)$ for $x \in
  \left(\frac12,p\right)$, with
$$\alpha_{p}(x) = (p+x)^{-2},$$
$$\beta_p(x) = B\left(x-\frac12,p-\frac12\right),$$
and
$$\gamma_p = \sqrt{\frac1p - 1}.$$
\end{itemize}

By Conjecture~\ref{condorcet_conjecture}, we have that $\beta_p(x) \ge
0$ and $\int_{\nicefrac12}^{p} \beta_p(x) \, dx =
\frac2{2p+1}-\sqrt{\frac1p-1}$; therefore the total probability mass
assigned to $X_p$ is $1$ --- $X_p$ is then a well-defined random variable.

\smallskip

The CDF associated to $\alpha_p(x)$, for $p \ge \frac12$ and $1-p \le y \le \frac12$ is:
$$\int_{1-p}^y \alpha_p(x) dx = \frac{y-1+p}{y+p}.$$

The CDF associated to $\beta_p(x)$, for $p \ge \frac12$ and $\frac12 \le y \le p$ is:
$$\int_{\frac12}^y \beta_p(x) dx =
\sum_{k=0}^{\infty}\sum_{\ell=0}^{\infty}\left(\frac{b_{k,\ell}}{k+1}\cdot
\left(y - \frac12\right)^{k+1} \cdot \left(p - \frac12\right)^{\ell}\right)$$

\smallskip

Observe that if $p \ne q$, then there's zero probability that $X_p =
X_q$ --- since, for each $p$, $X_p$ has a positive point mass only at
$p$. Then, we pick a permutation $\pi$ of $\{1,2,\ldots, n\}$ by letting $\pi(i) < \pi(j)$
iff $X_{p_i} < X_{p_j}$.

\comment{
\say{
Regarding $\beta_p(x)$, I believe that:

$$\int_{\frac12}^p \beta_p(x) dx = 1 - \left(\gamma_p + \int_{1-p}^{\frac12} \alpha_p(x) dx\right)= \frac{2}{2p+1} - \sqrt{\frac1p - 1}.$$

$$\beta_p\left(\frac12\right) = \alpha_p\left(\frac12\right) = \frac1{\left(p+\frac12\right)^2}.$$
}}

We verify that the marginals $\Pi_{i,j}$ of our distribution satisfy
the requirement in the claim:
\begin{itemize}
\item if $0 \le p < q \le 1 - p$, then:
$$\Pr\left[X_p \le X_q\right] = 1$$
\item if $0 \le p < \frac12$ and $1-p \le q \le 1$, then:
$$\Pr\left[X_p \le X_q\right] = \Pr\left[X_q \ge p\right] = 1 - \int_{1-q}^{p} \left(q+x\right)^{-2} dx  = \frac{1}{p+q}.$$
\item if $\frac12 < p < q \le 1$. Let $P_1 = \Pr\left[X_p
  \le X_q \le \frac12\right]$, $P_2 = \Pr\left[X_p \le \frac12 \le
  X_q\right]$, $P_3 = \Pr\left[\frac12 \le X_p \le p \le X_q\right]$, and $P_4 = \Pr\left[\frac12 \le X_p \le
  X_q < p\right]$. Then,
$$\Pr\left[X_p \le X_q\right] = P_1 + P_2 + P_3 + P_4.$$
\end{itemize}
We now show that $P_1+P_2+P_3+P_4 = \frac1{p+q}$, completing the
proof. We start by computing $P_1$ and $P_2$:
\begin{align*}
P_1 &= \int_{1-p}^{\frac12} \left((q+x)^{-2} \cdot \int_{1-p}^{x}
(p+y)^{-2} dy \right)dx = \int_{1-p}^{\frac12} \frac{x-(1-p)}{(q+x)^2(p+x)}
dx \\
& =\frac{2p-1}{(q-p)(2q+1)}-
\frac{\ln\frac{(2p+1)(1-p+q)}{2q+1}}{(q-p)^2}.
\end{align*}
$$P_2 = \int_{1-p}^{\frac12} (p+x)^{-2} dx \cdot \left(1-
\int_{1-q}^{\frac12} (q+x)^{-2} dx\right) = \frac{2p-1}{2p+1} \cdot
\left(1-\frac{2q-1}{2q+1}\right) = \frac{2p-1}{2p+1} \cdot
\frac{2}{2q+1}.$$

As for $P_3$ and $P_4$, we have:
$$P_3 = \left(\int_{\frac12}^p \beta_p(x) dx + \sqrt{\frac1p - 1}\right)\cdot \left(\int_{p}^q \beta_q(x) dx + \sqrt{\frac1q-1}\right) = \frac2{2p+1} \cdot \left(\frac2{2q+1} - \int_{\frac12}^p \beta_q(x) dx\right).$$
$$P_4 = \int_{\frac12}^p \left(\beta_q(x) \cdot \int_{\frac12}^x \beta_p(y) dy\right)dx$$

We show that for our definition of $B(x,P) = \beta_{P+\nicefrac12}\left(x+\nicefrac12\right)$, the equation
$\Pr\left[X_p \le X_q\right] = P_1 + P_2 + P_3 + P_4$ is equal to
$P_1+P_2+P_3+P_4 = \frac1{p+q}$, for each $\frac12 < p < q \le 1$.

The latter is true iff $P_3 + P_4 = \frac1{p+q} - P_1 - P_2$;
expanding the terms, we get the equation holds iff:
\begin{eqnarray*}
\frac2{2p+1} \cdot \left(\frac2{2q+1}-\int_{\frac12}^p \beta_q(x) dx \right) + \int_{\frac12}^p \left(\beta_q(x) \cdot \int_{\frac12}^x \beta_p(y) dy\right)dx =\\
 \frac1{p+q} - \frac{2p-1}{(q-p)(2q+1)}+ \frac{\ln\frac{(q-p+1)(2p+1)}{2q+1}}{(q-p)^2} 
 -  \frac{2p-1}{2p+1} \cdot
\frac{2}{2q+1},
\end{eqnarray*}
or, equivalently,
\begin{eqnarray*}
\int_{\frac12}^p \left(\beta_q(x) \cdot \int_{\frac12}^x \beta_p(y) dy\right)dx -\frac2{2p+1} \cdot \int_{\frac12}^p \beta_q(x) dx  =\\
 \frac1{p+q} + \frac{\ln\frac{(q-p+1)(2p+1)}{2q+1}}{(q-p)^2} -
\frac{2q-1}{(q-p)(2q+1)}.
\end{eqnarray*}
\comment{
Together with the constraints, we get a differential system (with unknowns $\beta_p(x)$, for $\frac12 < p\le 1$):
$$\left\{\begin{array}{rl}
\int_{\nicefrac12}^p \left(\beta_q(x) \cdot \int_{\nicefrac12}^x \beta_p(y) dy\right)dx -\frac2{2p+1} \cdot \int_{\nicefrac12}^p \beta_q(x) dx  =\\
= \frac1{p+q} + \frac{\ln\frac{(q-p+1)(2p+1)}{2q+1}}{(q-p)^2} -
\frac{2q-1}{(q-p)(2q+1)} & \text{for each } \nicefrac12 < p < q \le 1\\
\int_{\nicefrac12}^p \beta_p(x) dx = \frac{2}{2p+1}-\sqrt{\frac1p-1} & \text{for each } \nicefrac12 <p\le1\\
\beta_p(x) \ge 0 & \text{for each } \nicefrac12 < x < p \le 1
\end{array}\right.$$

We can also add $\beta_p\left(\frac12\right) = \left(p + \frac12\right)^{-2}$, for each $\frac12 < p \le 1$, to the system. But this last constraint is possibly implied by the ones already in the system (and in any case, it is not important that this property holds --- only, the script's output indicates that it does).

\medskip
}
For notational convenience, we let $P = p - \frac12$ and $Q = q -
\frac12$. Let $B(P,Q) = \beta_q(p) = \beta_{Q + \frac12}(P +
\frac12)$. 
\comment{
We aim to find the coefficients $b_{k,\ell}$. Consider, for each $\frac12 < p < q \le 1$, the first equation of the previous differential system:
\begin{eqnarray*}
(q-p)^2 \cdot \left(\int_{\frac12}^p \left(\beta_q(x) \cdot \int_{\frac12}^x \beta_p(y) dy\right)dx -\frac2{2p+1} \cdot \int_{\frac12}^p \beta_q(x) dx \right) =\\
\frac{(q-p)^2}{p+q} + \ln\frac{(q-p+1)(2p+1)}{2q+1} -
\frac{(q-p)(2q-1)}{2q+1}.
\end{eqnarray*}
By substituting $p$ with $P + \frac12$ and $q$ with $Q+\frac12$, we
get}
We thus need to prove:
\begin{align}
\int_{0}^P \left(B(x,Q) \cdot \int_{0}^x B(y,P) dy\right)dx
-\frac1{P+1} \cdot \int_{0}^P B(x,Q) dx = \nonumber\\
 \frac{1}{P+Q+1} + \frac{\ln\frac{(Q-P+1)(P+1)}{Q+1}}{(Q-P)^2} -
\frac{Q}{(Q+1)(Q-P)}.\label{diffeq}
\end{align}
Since we will expand the right-hand side in a Taylor series around $(Q,P)=(0,0)$, we observe that the right-hand side has a removable singularity at $P =
Q$. Indeed, letting $P=Q-\epsilon$,
\begin{align*}
\lim_{\epsilon \rightarrow 0} \left(\frac{\ln\frac{(1+\epsilon)(Q+1-\epsilon)}{Q+1}}{\epsilon^2} -
\frac{Q}{(Q+1)\epsilon} \right)&= \lim_{\epsilon \rightarrow 0}\left(
\frac{\ln\left((1+\epsilon)\left(1- \frac{\epsilon}{Q+1}\right)\right)}{\epsilon^2} -
\frac{Q}{(Q+1)\epsilon}\right) \\
&= \lim_{\epsilon \rightarrow 0} \left(\frac{
  \epsilon - \frac{\epsilon}{Q+1} - \frac{\epsilon^2}{Q+1} -
  \frac{Q^2\epsilon^2}{2(Q+1)^2} + O(\epsilon^3)}{\epsilon^2} - \frac{Q}{(Q+1)\epsilon}\right)\\
&= \lim_{\epsilon \rightarrow 0} \frac{-
   \frac{\epsilon^2}{Q+1} -
  \frac{Q^2\epsilon^2}{2(Q+1)^2} + O(\epsilon^3)}{\epsilon^2}\\
& = -\frac1{Q+1}- \frac{Q^2}{2(Q+1)^2},
\end{align*}
since the limit from the left and the right coincide.
Therefore, if $P = Q$, the right-hand side of Equation~\ref{diffeq} is 
$$\frac1{2Q+1}-\frac1{Q+1}- \frac{Q^2}{2(Q+1)^2}.$$

\smallskip

Recalling that
$$B(P,Q) = \sum_{\ell = 0 }^{\infty} \sum_{k=0}^{\infty}
\left(b_{k,\ell} \cdot P^k \cdot Q^{\ell}\right),$$
we get that the following holds:
\begin{align*}
\int_0^P\left(B(x,Q)\cdot\int_0^xB(y,P)dy\right)dx &=
\sum_{\ell=0}^{\infty}\sum_{k=2}^{\infty}\left[\left(\sum_{\substack{i,j\ge
      0\\i+j \le k-2 }} \frac{b_{i,k-2-i-j}\cdot
    b_{j,\ell}}{(i+1)(i+j+2)}\right) \cdot P^k \cdot Q^{\ell}\right]
\\
&=
\sum_{\ell=0}^{\infty}\sum_{k=2}^{\infty}\left(x_{k-2,\ell} \cdot P^k
\cdot Q^{\ell}\right).
\end{align*}
\comment{We also have:
\begin{align*}
(Q-P)^2 \cdot \int_0^P\left(B(x,Q)\cdot\int_0^xB(y,P)dy\right)dx &=
  \left(Q^2 - 2PQ + P^2\right) \cdot
\sum_{\ell=0}^{\infty}\sum_{k=0}^{\infty}\left(x_{k-2,\ell} \cdot P^k
\cdot Q^{\ell}\right)\\
&=  \sum_{\ell=0}^{\infty}\sum_{k=0}^{\infty}\left( \left(x_{k-2,\ell-2}-2x_{k-3,\ell-1}+x_{k-4,\ell}\right) \cdot P^k
\cdot Q^{\ell}\right).
\end{align*}}
Also,
$$\int_0^P B(x,Q) dx = \sum_{\ell=0}^{\infty}\sum_{k=1}^{\infty}\left(\frac{b_{k-1,\ell}}k \cdot P^k \cdot Q^{\ell}\right),$$
and, since $\frac1{P+1} = \sum_{k=0}^{\infty} \left( (-1)^k \cdot P^k\right)$, 
$$ - \frac1{P+1} \cdot \int_0^P B(x,Q) dx =
-\sum_{\ell=0}^{\infty}\sum_{k=1}^{\infty}\left[\sum_{i=0}^{k-1}\left((-1)^{k-1-i}\cdot\frac{b_{i,\ell}}{i+1}\right)
  \cdot P^k \cdot Q^{\ell}\right] =
-\sum_{\ell=0}^{\infty}\sum_{k=1}^{\infty}\left(y_{k-1,\ell} \cdot P^k
\cdot Q^{\ell}\right) .$$
\comment{
Then,
\begin{align*}
 - \frac{(Q-P)^2}{P+1} \cdot \int_0^P B(x,Q) dx & =
 \left(Q^2-2QP+P^2\right) \cdot
 \sum_{\ell=0}^{\infty}\sum_{k=1}^{\infty}\left(h_{k,\ell}
   \cdot P^k \cdot Q^{\ell}\right)\\
&=
-\sum_{\ell=0}^{\infty}\sum_{k=1}^{\infty}\left(\left(y_{k-1,\ell-2}-2y_{k-2,\ell-1}+y_{k-3,\ell}\right) \cdot P^k
\cdot Q^{\ell}\right)
\end{align*}
}
Furthermore, we have
\begin{eqnarray*}
\frac{1}{P+Q+1} &=& \sum_{k=0}^{\infty} \sum_{\ell=0}^{\infty} c_1(k,\ell) \cdot P^k \cdot Q^{\ell}\\
\frac{\ln \frac{(Q-P+1)(P+1)}{Q+1}}{(Q-P)^2} - \frac{Q}{(Q+1)(Q-P)} & = &\sum_{k=0}^{\infty} \sum_{\ell=0}^{\infty} c_2(k,\ell) \cdot P^k \cdot Q^{\ell}\\
\end{eqnarray*}
with
$$c_1(k,\ell) = (-1)^{k + \ell} \cdot \binom{k+\ell}k,$$
$$c_2(k,\ell) = -\frac{(-1)^{\ell} \cdot \binom{k+\ell}k+(-1)^{k+\ell}\cdot(\ell+1)}{k+\ell+2}.$$
observe that $c_{k,\ell} = c_1(k,\ell) + c_2(k,\ell)$, where $c_{k,\ell}$ is as in  Definition~\ref{recb}.

\comment{
Then,
\begin{eqnarray*}
c(k,\ell) &=& \frac{(-1)^{k+\ell}\binom{k+\ell}k\cdot\left((k+\ell)\cdot\left(1-(-1)^k\right)+(-1)^k\right)+4(-1)^{k+\ell+1}\binom{k+\ell-2}{k-1}(k+\ell)+[l=0]\cdot (-1)^k}{k+\ell} +\\
&& [k \le 2] \cdot (-1)^{\ell+1} - [k=1 \wedge \ell =0]\\
c(k,\ell) &=& \binom{k+\ell}k (-1)^{\ell}\cdot\left((-1)^k-1+\frac{1}{k+\ell}\right)
-4(-1)^{k+\ell}\binom{k+\ell-2}{k-1}+\\
&&[l=0]\cdot\frac{ (-1)^k}{k} + [k \le 2] \cdot (-1)^{\ell+1} - [k=1
  \wedge \ell =0].
\end{eqnarray*}
Observe that $c(0,\ell) = 0$ for each $\ell \ge 0$, and that $c(1,0) =
c(1,1) = c(2,0) = 0$.

 If we choose $k \ge 1, \ell \ge 0$ in such a way
that $(k,\ell) \not\in \left\{(1,0),(1,1),(2,0)\right\}$, we obtain
$$c(k,\ell) = \binom{k+\ell}k (-1)^{\ell}\left((-1)^k-1+\frac{1}{k+\ell}\right)
-4  (-1)^{k+\ell}\binom{k+\ell-2}{k-1}+ \left\{\begin{array}{cr}
(-1)^{\ell+1} & k \le 2\\
\frac{(-1)^k}k & k > 2 \wedge \ell = 0\\
0 & \text{otherwise}
\end{array}\right.$$

\medskip
}
We then have that $P_1+P_2+P_3+P_4 = \frac1{p+q}$ (particularly, Equation~\ref{diffeq}) is satisfied iff
$$\sum_{\ell=0}^{\infty} \sum_{k=0}^{\infty} \left(\left(x_{k-2,\ell}-y_{k-1,\ell}\right) \cdot P^k \cdot Q^{\ell}\right) = \sum_{\ell=0}^{\infty} \sum_{k=0}^{\infty} \left(c_{k,\ell} \cdot P^k \cdot Q^{\ell}\right).$$
The equality holds iff for each $k, \ell \ge 0$, we have 
$$c_{k,\ell} =x_{k-2,\ell}-y_{k-1,\ell}.$$
Observe that if $k = 0$, then the equation is satisfied, since
$c_{0,\ell}=x_{-2,\ell}=y_{-1,\ell}=0$. If $k \ge 1$, then
$y_{k-1,\ell} = \frac{b_{k-1,\ell}}k - y_{k-2,\ell}$. Since we defined the
$b_{k,\ell}$'s to be, for
for each $k, \ell \ge 0$, 
$$b_{k,\ell} = (k+1) \cdot
\left(x_{k-1,\ell}+y_{k-1,\ell}-c_{k+1,\ell}\right),$$
we have that Equation~\ref{diffeq} is satisfied and $P_1+P_2+P_3+P_4 = \frac1{p+q}$.
\end{proof}

\comment{
\say{This has to be checked. One of these equations is wrong.

Another set of equations that seem to hold for the $b_{k,\ell}$'s: for each $k \ge 0$,
$$\sum_{\ell=0}^k \left( \frac1{\ell+1} \cdot \binom{k}{\ell} \cdot b_{k,\ell}\right) = 1$$
$$\sum_{\ell=0}^k \left( \binom{k}{\ell} \cdot b_{k,\ell}\right) = k^2 + (-1)^k$$
$$\sum_{\ell=0}^k \left( \ell \cdot \binom{k}{\ell} \cdot b_{k,\ell}\right) = \frac{k^4 + 3k^2 - 4k + 2}2 +(-1)^k\cdot \left(k^2 + 3k - 1\right)$$
$$\sum_{\ell=0}^k \left( \ell^2 \cdot \binom{k}{\ell} \cdot b_{k,\ell}\right) = \frac{k^6 + 10k^4 - 36k^3 + 79k^2 -60k+18}6 +(-1)^k\cdot \frac{k^4+6k^3+3k^2-12k+10}{2}$$
}
}

Using the previous distribution over permutations we can prove the main theorem of the section.
\begin{theorem}\label{condorcet_theorem}
Let $0 \le p_1 < p_2 < \cdots < p_n  \le 1$ be the blue-probabilities
of a set of bichromatic urns. Let $\epsilon = \min_{1 \le i \le n-1}
(p_{i+1} - p_i)$. If Conjecture~\ref{condorcet_conjecture} is true, there exists a symmetric voting scheme for
the Condorcet election that
guarantees that the unknown urn wins with probability $1-\eta$ with
$O\left(\frac{\ln \eta^{-1}}{\epsilon^2}\right)$ voters.
\end{theorem}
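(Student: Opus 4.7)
The plan is to use Lemma~\ref{lemma:permutation} (valid under Conjecture~\ref{condorcet_conjecture}) to construct a pair of carefully chosen permutation distributions on $S_n$ and combine them through the ball observation. First I would apply the lemma to $(p_1,\ldots,p_n)$ to obtain a distribution $\mathcal L$, and apply it a second time to the reversed complements $(1-p_n,\ldots,1-p_1)$, relabeling back to the original urn indices to obtain a distribution $\mathcal L'$. Let $\mathcal D_B$ be the reversal of $\mathcal L$ (sample $\pi\sim\mathcal L$ and output $i\mapsto n+1-\pi(i)$) and define $\mathcal D_R$ analogously from $\mathcal L'$. By construction $\mathcal D_B$ tends to rank high-$p$ urns above low-$p$ ones, while $\mathcal D_R$ does the opposite. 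The symmetric voting scheme is then: each voter draws one ball, samples $\pi\sim\mathcal D_B$ if blue and $\pi\sim\mathcal D_R$ if red, and reports $\pi$ as her ranking.

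Next I would compute the per-voter pairwise bias. From the lemma marginals, for $k<\ell$ we have $\Pr_{\mathcal D_B}[k\text{ above }\ell]=1-\min(1,1/(p_k+p_\ell))$ and $\Pr_{\mathcal D_R}[k\text{ above }\ell]=\min(1,1/(2-p_k-p_\ell))$. A short algebraic calculation then shows that for either choice of true urn in $\{k,\ell\}$, the per-voter probability of ranking the true urn above the other simplifies to $(1-p_k)/(2-p_k-p_\ell)$ when $p_k+p_\ell\le 1$ and to $p_\ell/(p_k+p_\ell)$ when $p_k+p_\ell>1$ (the symmetry comes from the fact that switching the true urn between $k$ and $\ell$ leaves both expressions invariant). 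In each regime the excess over $1/2$ equals $(p_\ell-p_k)/\bigl(2\max(2-p_k-p_\ell,\,p_k+p_\ell)\bigr)\ge(p_\ell-p_k)/4\ge\epsilon/4$.

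With per-voter bias $\Omega(\epsilon)$ in hand, the proof concludes via a standard Chernoff-plus-union-bound argument: with $m=\Theta(\epsilon^{-2}\ln(n/\eta))$ voters, for any fixed pair involving the true urn the fraction of voters ranking the true urn above the competitor exceeds $1/2$ except with probability $\eta/n$, and union-bounding over the $n-1$ such pairs shows that with probability at least $1-\eta$ the true urn beats every other urn in a head-to-head comparison, and therefore wins the Condorcet election under any standard tie-breaking rule. The main obstacle is the second step: one must identify the right pair of permutation distributions so that the combination produces a uniform $\Omega(\epsilon)$ bias in every case, including both regimes $p_k+p_\ell\le 1$ and $p_k+p_\ell>1$. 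The reversed-lemma and reversed-dual-lemma combination is what works, precisely because the marginals $\min(1,1/(p_i+p_j))$ guaranteed by Conjecture~\ref{condorcet_conjecture} have the algebraic structure that makes both regimes collapse to the clean closed forms above and yields bias of the correct sign in both.
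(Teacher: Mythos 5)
Your construction and head-to-head analysis coincide, up to a relabeling of conventions, with the paper's: the paper samples from the distribution of Lemma~\ref{lemma:permutation} applied to $(p_1,\ldots,p_n)$ on a blue draw and from the one applied to $q_i=1-p_i$ (equivalently, your reversed complements) on a red draw, and your reversal-plus-relabel bookkeeping produces exactly the same pairwise marginals. Your closed forms $q_k/(q_k+q_\ell)$ and $p_\ell/(p_k+p_\ell)$ for the per-voter probability that the true urn is ranked above a competitor, and the resulting bias of at least $|p_k-p_\ell|/4$ over $1/2$, are precisely the expressions appearing in the paper's proof, so up to that point your argument is sound.

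The gap is in the final counting step, and it matters for the statement as claimed. You discard the pair-dependent part of the bias, keep only the uniform $\epsilon/4$, and union-bound over the $n-1$ competitors at failure probability $\eta/n$ each; this forces $m=\Theta\left(\epsilon^{-2}\ln(n/\eta)\right)$ voters, which is strictly weaker than the claimed $O\left(\epsilon^{-2}\ln\eta^{-1}\right)$ (for constant $\eta$ your count grows like $\epsilon^{-2}\log n$, while the theorem promises $O(\epsilon^{-2})$ independently of $n$). The missing observation is already implicit in your own bias bound: for the pair $(i,j)$ the bias is at least $|i-j|\,\epsilon/4$, since $|p_i-p_j|\ge|i-j|\,\epsilon$. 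Feeding this into the Chernoff bound makes the probability that urn $j$ survives the head-to-head against the true urn $i$ at most $\exp\left(-\Omega(|i-j|^2\epsilon^2 m)\right)\le(\eta/3)^{|i-j|}$ once $m=\Theta\left(\epsilon^{-2}\ln\eta^{-1}\right)$, and then the union bound over all $j\ne i$ is a geometric series summing to at most $\eta$, with no $\log n$ loss. This distance-dependent union bound is exactly how the paper closes the argument; with that modification your proof matches both the scheme and the stated voter count.
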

\begin{proof}
Lemma~\ref{lemma:permutation} guarantees the existence of a
probability distribution $P$ over the set of permutations of
$\{1,2,\ldots, n\}$ such that, for each $1 \le i < j \le n$, $\Pr_{\pi
  \sim P}[\pi(i)< \pi(j)] = \min\left(1,\frac1{p_i+p_j}\right)$. If
$\pi(j) > \pi(i)$ we say that $j$ beats $i$ in $\pi$.

We also let $q_i = 1 - p_{i}$; therefore $0\le q_n < q_{n-1} < \cdots
< q_1 \le 1$, and $\min_{1 \le i \le n-1} (q_{i+1} - q_i) = \epsilon$.
 Lemma~\ref{lemma:permutation} again guarantees the existence of a
 probability distribution $Q$ over the set of permutations of
 $\{1,2,\ldots, n\}$ such that for $n \ge i > j \ge 1$, we have
 $\Pr_{\pi \sim Q}[\pi(i) < \pi(j)] =
 \min\left(1,\frac{1}{q_i+q_j}\right)$. 

Each voter will apply the following algorithm: if she draws blue, she
sample a permutation according to $P$, otherwise she samples a
permutation according to $Q$.

Now, suppose the $i$-th urn is the unknown urn. Let $j \ne i$ be the
index of any other urn. If $j > i$,
\begin{align*}
\Pr[\text{the unknown urn } i \text{ beats another urn } j] &= p_i \Pr_{\pi \sim P}[\pi(i)
  > \pi(j)] + (1-p_i) \Pr_{\pi \sim Q}[\pi(i) > \pi(j)]\\
&= p_i \max\left(0, 1-\frac{1}{p_i+p_j}\right) + q_i
\min\left(1,\frac{1}{q_i+q_j}\right),
\end{align*}
if $p_i + p_j \le 1$ (and therefore $q_i + q_j \ge 1$) the latter simplifies to
$\frac{q_i}{q_i+q_j}$; otherwise $p_i + p_j > 1, q_i+q_j <1$, and the
expression simplifies to $p_i\left(1-\frac{1}{p_i+p_j}\right)+q_i = p_i -
\frac{p_i}{p_i+p_j} + 1 -p_i =1-\frac{p_i}{p_i+p_j} =
\frac{p_j}{p_i+p_j}$.

Therefore, if $j > i$, we have
$$\Pr[\text{the unknown urn } i \text{ beats another urn } j] \in
\left\{\frac{q_i}{q_i+q_j}, \frac{p_j}{p_i+p_j}\right\}.$$

\smallskip

If, otherwise, $j < i$, we have
\begin{align*}
\Pr[\text{the unknown urn } i \text{ beats another urn } j] &= p_i \Pr_{\pi \sim P}[\pi(i)
  > \pi(j)] + (1-p_i) \Pr_{\pi \sim Q}[\pi(i) > \pi(j)]\\
&= p_i \min\left(1, \frac{1}{p_i+p_j}\right) + q_i
\max\left(0,1-\frac{1}{q_i+q_j}\right),
\end{align*}
if $q_i + q_j \le 1$ (and therefore $p_i + p_j \ge 1$) the latter simplifies to
$\frac{p_i}{p_i+p_j}$; otherwise $q_i + q_j > 1, p_i+p_j <1$, and the
expression simplifies to $p_i + q_i \left(1-\frac{1}{q_i+q_j}\right) =
1-q_i+q_i -\frac{q_i}{q_i+q_j} = \frac{q_j}{q_i+q_j}$.

\smallskip

Therefore, in any case, $\Pr[\text{the unknown urn } i \text{ beats
    another urn } j] \in
\left\{\frac{\max(q_i,q_j)}{q_i+q_j}, \frac{\max(p_i,p_j)}{p_i+p_j}\right\}$.
We lower-bound the latter two fractions:
\begin{align*}
\frac{\max(p_i,p_j)}{p_i+p_j} & = \frac{\max(p_i,p_j)}{2\max(p_i,p_j) - \left|p_i-p_j\right|} =
\frac{\max(p_i,p_j) - \frac12 \left|p_i - p_j\right|}{2\max(p_i,p_j) - \left|p_i-p_j\right|}  + \frac12 \cdot \frac{\left|p_i - p_j\right|}{2\max(p_i,p_j)-\left|p_i-p_j\right|}\\
&= \frac12 \cdot \left(1 + \frac{\left|p_i - p_j\right|}{p_i + p_j}\right) \ge \frac12 + \frac{\left|p_i - p_j\right|}4,
\end{align*}
and, analogously,
$$\frac{\max(q_i,q_j)}{q_i+q_j} 
\ge \frac12 + \frac{\left|q_i - q_j\right|}4.$$

Since, $\left|q_i - q_j\right| = \left|p_i - p_j\right|$, we have
$$\Pr[\text{the unknown urn } i \text{ beats another urn } j] \ge \frac12 +
\frac{\left|p_i - p_j\right|}4 \ge \frac12 + \frac{\left|i - j\right|}4 \cdot \epsilon.$$

\medskip

Now, given two urns $i, j$, let $X_i(j)$ be the random variable
counting the number of votes in which $i > j$, with $m$
voters. Observe that if $X_i(j) > \frac m2$, then urn $i$ beats urn
$j$. Also,
$$\frac m2 \le m \cdot \left(\frac12 + \frac{\left|i - j\right| \cdot
  \epsilon}4\right) \le E[X_i(j)] \le m,$$
 and
\begin{align*}
\Pr\left[\left|X_i(j) - E[X_i(j)]\right| \ge \frac{\left|i -
  j\right| \cdot \epsilon }5 \cdot m\right] & \le
\Pr\left[\left|X_i(j) - E[X_i(j)]\right| \ge \frac{\left|i -
  j\right| \cdot \epsilon }5 \cdot E[X_i(j)]\right] \\
& \le \exp\left(-\frac{\left|i-j\right|^2 \cdot \epsilon^2}{75} \cdot
E[X_i(j)]\right)\\
& \le \exp\left(-\frac{\left|i-j\right|^2 \cdot \epsilon^2}{150} \cdot
m\right).
\end{align*}

By choosing $m = \left\lceil 150 \epsilon^{-2} \ln
\frac 3{\eta}\right\rceil$, we obtain:
$$\Pr\left[\left|X_i(j) - E[X_i(j)]\right| \ge \frac{\left|i -
  j\right| \cdot \epsilon }5 \cdot m\right] \le
\exp\left(-\left|i-j\right|^2 \ln \frac3{\eta}\right) = \left(\frac{\eta}3\right)^{\left|i -
  j\right|^2} \le \left(\frac{\eta}3\right)^{\left|i - j\right|}.$$

Observe that if $\left|X_i(j) -E[X_i(j)]\right| < \frac{\left|i -
  j\right| \cdot \epsilon }5 \cdot m$, then --- by $E[X_i(j)]  \ge m
\cdot \left(\frac12 +
\frac{\left|i - j\right| \epsilon}4\right)$ --- we get $X_i(j) \ge m \cdot
\left(\frac12 + \frac{\left|i - j\right|\epsilon}{20}\right) > \frac
m2$, which implies that urn $i$ beats urn $j$.

\medskip

Applying the Union Bound over all the urns $j \ne i$, we obtain
$$\Pr[\text{urn } i \text{ does not win the election}] \le 2 \cdot
\sum_{k=1}^{\infty} \left(\frac{\eta}3\right)^{k} = 2 \cdot \frac{\nicefrac{\eta}3}{1-\nicefrac{\eta}3} \le \eta.$$
\end{proof}

\comment{
The following suboptimal theorem makes use of the exponential distribution.
\begin{theorem}
In the bichromatic case, there exists a symmetric voting scheme for
the Condorcet election that
guarantees that the unknown urn wins with probability $1-\eta$ with
$O\left(\frac{\ln \eta^{-1}}{\epsilon^4}\right)$ voters.
\end{theorem}
\begin{proof}
Let $0 \le p_1 < p_2 < \cdots < p_n \le 1$ be the ``blue-probabilities'' of urns
$U_1, U_2, \ldots, U_n$. We describe the algorithm employed by each voter to
cast her vote:\begin{itemize}
\item she samples independently $X_1,  X_2, \ldots, X_n$ from the
  exponential distributions with with  parameters,
  respectively, $p_1, p_2, \ldots, p_n$;
\item let $i_1, i_2, \ldots, i_n$ be such that $X_{i_1} < X_{i_2} <
  \cdots < X_{i_n}$; the event that there exists $i \ne j$ such that
  $X_i = X_j$ has zero probability and can thus be ignored;
\item if the voter drew a red ball, her vote will be $\pi_r =
  \left\{U_{i_1} < U_{i_2} < \cdots < U_{i_n}\right\}$; if instead she
  drew a blue ball, her vote will be $\pi_b = \left\{U_{i_1} > U_{i_{2}}
    > \cdots > U_{i_n}\right\}$.
\end{itemize}
Let $X_{p_i}$ and $X_{p_j}$ be two
independent random variables distributed according the exponential
distribution with parameters, respectively, $p_i$ and $p_j$. The
probability that $X_{p_i} < X_{p_j}$ will be
\begin{align*}
 \Pr\left[X_{p_i} < X_{p_j}\right] & =\int_{x=0}^{\infty} F_{p_i}(x)
 \cdot f_{p_j}(x) \, dx\\
& =\int_{x=0}^{\infty} \left(1-e^{-p_i x}\right) \cdot p_j \cdot e^{-p_j x} \,  dx\\
& =p_j \cdot \int_{x=0}^{\infty}  e^{-p_j x} \, dx - p_j \cdot
\int_{x=0}^{\infty} e^{-(p_i + p_j) x} \, dx\\
& =p_j \cdot \left[\frac{e^{-p_j x}}{p_j}\right]_0^{\infty} - p_j
\cdot \left[\frac{e^{-(p_i+p_j) x}}{p_i+p_j}\right]_0^{\infty}\\
& = 1 - \frac{p_j}{p_i + p_j} = \frac{p_i}{p_i + p_j}.
\end{align*}
Now consider the run-off election between urns $i$ and $j \ne i$, when
$i$ is the unknown urn. We have that the probability that $i$ is
ranked higher than $j$ is:
\begin{align*}
\Pr[i > j] & = p_i \cdot \frac{p_i}{p_i + p_j} + (1-p_i) \cdot
\frac{p_j}{p_i+p_j}
\end{align*}

To be finished.
\end{proof}
}

\comment{

We choose $m$, the number of voters, to be
$\ds{m =
\left\lceil108 \cdot \frac{C^2
  \cdot N^4}{\epsilon^2} \cdot \ln \frac{N}{2\eta}\right\rceil}.$
Let $V_j$ be the number of votes to urn $j$ 
with our voting scheme,
assuming urn $i$ is correct.
Then $E[V_j] = m \cdot E_i(j)$, and $E[V_i] =
m \cdot E_i(i) \le m$.
Suppose we choose 
$\delta = \dfrac{\Delta_i(j)}{3\sqrt{E_i(j) \cdot E_i(i)}}$. Then,
applying a Chernoff bound, we get:
\begin{align*}
\Pr\left[\left|V_j - E[V_j]\right| \ge \delta \cdot E[V_j]\right]
& \le 2 \cdot \exp\left(-\frac{\delta^2}3 \cdot E[V_j]\right)
\le 2 \cdot \exp\left(-\frac{\Delta_i(j)^2}{27 \cdot E_i(j)\cdot
  E_i(i)}\cdot E[V_j]\right)\\
& \le 2 \cdot \exp\left(-\frac{\epsilon^2 \cdot m}{108 C^2 N^4}\right)
\le \frac{\eta}N.
\end{align*}
Therefore, urn $j \ne i$ will get at most 
$\os{V_j \le E[V_j] + m \cdot \frac{\Delta_i(j)}3 
\cdot \sqrt{\frac{E_{i}(j)}{E_i(i)}} \le m \cdot
\left(E_i(j) + \frac{\Delta_i(j)}3\right)}$ 
votes with probability at
least $1- \ofrac{\eta}n$. Furthermore, urn $i$ will get at least
$\os{V_i \ge E[V_i] - m \cdot \frac{\Delta_i(j)}3 
= m \cdot \left(E_i(i) - \frac{\Delta_i(j)}3\right)}$.
Since $\Delta_i(j) = E_i(i) - E_i(j)$, it follows that with
probability at least $1- \ofrac{\eta}n$, urn $i$ will get more votes
than urn $j$, if $i$ is the correct urn. By applying the Union Bound over
all urns $j \ne i$, we see that the probability the correct urn
$i$ will receive the most votes is at least $1-\eta$.

}

\comment{

 Observe that,
since
$E[X] = \frac m2 + \frac m2 \cdot \Delta_i(j) \le \frac m2 + 200^{-1} n^3 \epsilon^{-2} \ln \eta^{-1}$

We start by taking care of the case where few voters exist. That is, the case $m \le 21600 \epsilon^{-1}$. Take the two adjacent
urns $i,i+1$, with $i = \left\lceil \frac{n+1}2\right\rceil$; then $p_{i+1} - p_i = 
\epsilon$, and $p_i \ge \frac12$. If the probability that $i$ wins the
election, given that $i$ is the unknown urn, is less than
$\nicefrac34$, then the statement is proved. We assume otherwise. Consider the random variable
$\Sigma_i$ representing the $m$ drawals of the
users, if the unknown urn is $i$. The probability of
$\sigma \in \{\text{red}, \text{blue}\}^m$ to be the voters' sequence
of drawals is
$$\Pr[\Sigma_i = \sigma] = p_i^{\left|\{t\mid \sigma(t) = \text{blue}\}\right|} \cdot
\left(1-p_i\right)^{\left|\{t \mid \sigma(t) = \text{red}\}\right|}.$$
 Let $D_w  \subseteq
\{\text{red}, \text{blue}\}^m$ be the set of drawals' sequences
with which the probability that urn $i$ will win the election is at
least $\nicefrac12$. Then, if $W_i$ is the random variable denoting the
winner of the election, if the unknown urn is $i$,
\begin{align*}
\frac34 \le \Pr[W_i = i] &= \sum_{\sigma \in
  \{\text{red}, \text{blue}\}^m} \left(\Pr[\Sigma_i = \sigma] \cdot
\Pr[W_i = i | \Sigma_i = \sigma]\right)\\
&=\sum_{\sigma \in D_w} \left(\Pr[\Sigma_i = \sigma] \cdot \Pr[W_i = i | \Sigma_i = \sigma]\right) \\
&+ \sum_{\sigma \in
  \{\text{red}, \text{blue}\}^m -D_w} \left(\Pr[\Sigma_i = \sigma]
\cdot \Pr[W_i = i | \Sigma_i = \sigma]\right)\\
& \le \sum_{\sigma \in D_w} \Pr[\Sigma_i = \sigma] + \sum_{\sigma \in
  \{\text{red}, \text{blue}\}^m -D_w} \left(\frac12 \cdot \Pr[\Sigma_i =
  \sigma]\right)\\
& = \Pr[\Sigma_i \in D_w] + \frac12 \cdot \left(1 - \Pr[\Sigma_i \in
  D_w]\right) = \frac12 + \frac12 \cdot \Pr[\Sigma_i \in D_w].
\end{align*}
Therefore we conclude $\Pr[\Sigma_i
  \in D_w] \ge \nicefrac12$. Now, suppose that the unknown urn is
$i+1$; then the probability that the sequence $\sigma$ of votes is
drawn is
\begin{align*}
\Pr[\Sigma_{i+1} = \sigma] &= p_{i+1}^{\left|\{t\mid \sigma(t) = \text{blue}\}\right|} \cdot
\left(1-p_{i+1}\right)^{\left|\{t \mid \sigma(t) =
  \text{red}\}\right|} \\
&= \left(p_{i}+\epsilon\right)^{\left|\{t\mid \sigma(t) = \text{blue}\}\right|} \cdot
\left(1-p_{i}-\epsilon\right)^{\left|\{t \mid \sigma(t) 
=  \text{red}\}\right|} \\
&\ge p_i^{\left|\{t\mid \sigma(t) = \text{blue}\}\right|} \cdot
\left(1-p_{i}\right)^{\left|\{t \mid \sigma(t) 
=  \text{red}\}\right|}\cdot\left(\frac{1-p_i-\epsilon}{1-p_i}\right)^{\left|\{t \mid \sigma(t) =
  \text{red}\}\right|}\\
&= \Pr[\Sigma_i = \sigma] \cdot\left(\frac{1-p_i-\epsilon}{1-p_i}\right)^{\left|\{t \mid \sigma(t) =
  \text{red}\}\right|}\\
& \ge  \Pr[\Sigma_i = \sigma] \cdot\left(1-2\epsilon\right)^{\left|\{t \mid \sigma(t) =
  \text{red}\}\right|}\\
& \ge  \Pr[\Sigma_i = \sigma]
\cdot\left(1-2\epsilon\right)^{\frac1{2\epsilon}\cdot 43200} \ge \Pr[\Sigma_i = \sigma]
\cdot2^{-86400}.
\end{align*}
where the second to the last inequality follows from
$(1-a)^{\nicefrac1{a}} \ge (1-b)^{\nicefrac 1b}$, for each $a \le b
\le 1$: specifically for $b = \nicefrac{1}{2}$, and $a = 2\epsilon \le
\frac2{n-1} \le \frac12$. We use the latter bound to lower-bound the
probability of losing the election if the unknown urn is $i+1$:
\begin{align*}
\Pr[W_{i+1} = i] &\ge
\sum_{\sigma \in D_w} \left(\Pr[\Sigma_{i+1} = \sigma] \cdot \Pr[W_{i+1} = i | \Sigma_{i+1} = \sigma]\right) \\
& \ge \frac12 \cdot \sum_{\sigma \in D_w} \Pr[\Sigma_{i+1} = \sigma]\\
& \ge \frac12 \cdot 2^{-86400} \cdot \sum_{\sigma \in D_w} \Pr[\Sigma_{i} = \sigma]
 \ge 2^{-86402}.
\end{align*}
Therefore if the number of voters is $m \le 21600\epsilon^{-1}$, the
probability of winning is at most $1-2^{-86402} < 1$.

 We therefore
assume for the rest of the proof that $m > 21600\epsilon^{-1}\ge 21600(n-1)$.

\medskip

We now consider the cases where there exists some $i$ such that
$E_i(i) \le \frac1{4n}$ or $E_i(i) \ge \frac{44}n$:\begin{itemize}
\item first, assume
there exist some $i$ such that $E_i(i) \le \frac{1}{4n}$. Then,
the expected number of votes to urn $i$, if $i$ is the unknown urn, is
at most $\frac{m}{4n}$. By Markov inequality, the probability that the
number of votes to urn $i$ will exceed twice its expectation is at
most $\nicefrac12$. Therefore with probability at least $\nicefrac12$
urn $i$ will obtain at most $\frac{m}{2n}$ votes. Since there are $n$
urns and $m$ votes, there will exist one urn obtaining at least $\frac
mn$ votes, and therefore urn $i$ will lose the election with
probability at least $\nicefrac12$. We therefore  assume that for all $i$ we have
$E_i(i) > \frac1{4n}$.
\item Now, assume that there exists an urn $i$ such that $E_i(i) \ge
  \frac{44}n$, then by Lemma~\ref{lemma_improper}, there also exists
  an urn $j$ for which $E_j(j) \le \frac{10}n$ and  $E_j(i) \ge \frac{11}n$. By Chernoff bound, the
  probability that the number $N$ of votes to the unknown urn $j$ is more than
  $\left(1+ \frac1{30}\right) \cdot E[N] = \left(1+ \frac1{30}\right)
  \cdot m \cdot E_j(j) \le  \frac{31}{30} \cdot \frac{10m}{n} = \frac{31m}{3n}$ is at most
$$\Pr\left[N \ge \left(1+\frac1{30}\right)\cdot E[N]\right] \le
  \exp\left(-\frac{1}{2700}\cdot E[N]\right),$$
by $E[N] = E_j(j) \cdot m \ge \frac1{4n} \cdot m > 5400 \cdot \left(1-
\frac1n\right) \ge 2700$ (since $n \ge 2$), we have
$$\Pr\left[N \ge \left(1+\frac1{30}\right)\cdot E[N]\right] \le
  \exp\left(-\frac{1}{2700}\cdot 2700\right) = e^{-1}.$$

\medskip

On the other hand, if we denote with $N'$ the number of votes to urn
$i$, if the unknown urn is $j$, we have $E[N'] = E_j(i) \cdot m \ge
\frac{11m}n$. The probability that $N'$ is less than
$\left(1-\frac1{33}\right)\cdot E[N'] \ge \frac{32}{33} \cdot \frac{11m}{n}
= \frac{32m}{3n}$ is
$$\Pr\left[N' \le \left(1-\frac1{33}\right)\cdot E[N']\right] \le
\exp\left(-\frac{1}{3267}\cdot E[N']\right) \le e^{-1},$$
by $E[N'] \ge \frac{32m}{3n} \ge 230400 \cdot
\left(1-\frac1n\right) \ge 115200$.

\medskip

The probability that $N \le \frac{31m}{3n}$ and $N' \ge
\frac{32m}{3n}$ is therefore the constant $1-2e^{-1} > 0$. If this
event happen then urn $j$ will lose the election.
\end{itemize}
For the rest of proof we then assume that $\frac1{4n} \le
E_i(i)\le \frac{44}n$ for each $i$.

\medskip

\medskip

By $E_i(i) \ge \frac1{4n}$, we have that 

Suppose that there
exist urns $i,j$ such that $\Delta_i(j) = E_i(i) - E_i(j) < 0$. Choose
$i$ as the unknown urn. Take
any sequence of votes $\Sigma = (i_1, i_2,
\ldots, i_m)$, where $\Sigma(t)$ is the index of the urn the $t$th
voter has voted for. We create a bijection $f$ between sequence of votes
in the following way: the $t$th coordinate of $f(\Sigma)$ equals
$\Sigma(t)$ if $\Sigma(t) \not\in \{i, j\}$, otherwise the $t$th
coordinate of $\Sigma$ equals the single element in $\{i, j\} -
\Sigma(t)$. That is, $f(\Sigma)$ flips the $i$'s in $j$'s, the $j$'s
in $i$'s, and leaves the other votes untouched. Let us partition the
set of sequences in such a way that, for
each $\Sigma$, the part containing $\Sigma$ is the minimal set closed
under the application of $f$. Since $f \circ f= \text{id}$, the part containing
the generic $\Sigma$ will have either size 1 (if $\Sigma$ does not
contain any $i$, nor any $j$) or 2 (if $\Sigma$ contains some $i$
or some $j$).

\smallskip

Observe that each part of size 1 is made up of a
sequence that make the voters lose the election (since no one ever
votes for $i$ in those sequences). On the other hand, each part  of size 2
is made up of a sequence $\Sigma$ having at least as many votes for
$j$ than for $i$, and a sequence $\Sigma'$ having at least as many
votes for $i$ than for $j$. The sequence $\Sigma$ will then make the
voters lose the election. Furthermore, the probability that the voters
vote according to $\Sigma$  is larger than or equal to the probability that
the voters vote according to $\Sigma'$ (since $E_i(i) < E_i(j)$). It follows that the
probability that $i$ wins the election is at most $\nicefrac12 < 1- H < 1-\eta$.

\medskip

For the rest of the proof, we then assume
$E_{i}(i) \ge \max_j E_{i}(j) \ge \frac1n$, for each urn $i$.
By Lemma~\ref{exp_lb_lemma}, given any $n \ge 31$ and $\epsilon \le
\frac1{n-1}$, there exists an urn index $i^*$ such that
$\Delta_{i^*}(i^*+1) \le \frac{80\epsilon}{n^2}$ and $E_{i^*}(i^*) \le \frac{10}n$.
Throughout the rest of the proof, we assume that $i^*$ is the unknown urn.

\smallskip

 Let $m$ be the
number of voters.
Observe that if $m\le 2 \cdot 10^5 \cdot n $, then the probability that urn $i^*$ gets zero votes is at least:
$$\left(1- E_{i^*}(i^*)\right)^m \ge \left(1- \frac{10}n\right)^m >
\left(1 - \frac{10}{n}\right)^{2 \cdot 10^5 \cdot n} =
\left(\left(1-\frac{10}n\right)^{\nicefrac n{10}}\right)^{2\cdot10^6}
\ge \left(\frac{132}{142}\right)^{\frac{142}{10} \cdot 2\cdot 10^6} = H,$$
where the second to the last inequality follows from
$(1-a)^{\nicefrac1{a}} \ge (1-b)^{\nicefrac 1b}$, for each $a \le b
\le 1$: specifically for $b = \nicefrac{10}{142}$, and $a =
\nicefrac{10}n$ (with $n \ge 142$). Thus if $m \le 2 \cdot
10^5 \cdot n$, the right urn will get zero votes with
probability at least $H$, in which case the voters will not succeed.
We therefore
assume for the rest of the proof that $m > 2 \cdot 10^5 \cdot n$.

\medskip

We define $Y_k$
to be the random variable having value 1 if the $k$th
voter votes for the urn $i^*+1$, value 0 if the $k$th voter votes for urn
$i^*$, and $\nicefrac12$ otherwise; we then have
$$E[Y_k] = E_{i^*}(i^*+1) + \frac12 \cdot \left(1-E_{i^*}(i^*) -
E_{i^*}(i^*+1)\right) = \frac{1-\Delta_{i^*}(i^*+1)}2 \ge
\frac{1-\frac{80\epsilon}{n^2}}2 = \frac12 - \frac{40\epsilon}{n^2}
\ge \frac12 - \frac{40}{141\cdot142^2} \ge \frac{\sqrt{5}}5,$$
for each $k = 1, \ldots, m$. The variance of $Y_k$ can then be lower
bounded by
$$\var[Y_k] \ge E_{i^*}(i^*) \cdot \left(0-E[Y_k]\right)^2\ge
E_{i^*}(i^*) \cdot \left(E[Y_k]\right)^2 =\frac1{5n}.$$

If we
set $Y = \sum_{k=1}^m Y_k$, we have that the event ``the players
lose'' coincides with the event ``$Y \ge \frac m2$''. Observe that
$E[Y] = m \cdot \frac{1-\Delta_{i^*}(i^*+1)}2$ and $\var[Y] \ge \frac
m{5n}$. Observe that, since $m > 2 \cdot 10^5 \cdot n$, we have
$\var[Y] > 4 \cdot 10^4$.

We apply
Theorem~\ref{inverted_tail}, choosing $t = m \cdot
\frac{\Delta_{i^*}(i^*+1)}2$. This choice is legal since
$t \le \frac{40m\epsilon}{n^2}$, and
$$t \le \frac{40m\epsilon}{n^2} = \frac{m}{5n} \cdot
\frac{200\epsilon}{n} \le \frac{m}{5n} \cdot \frac{200}{141\cdot142} =
\frac m{5n} \cdot \frac{200}{20000 + 22} < \frac{m}{5n}
\cdot \frac1{100} \le \frac{\var[Y]}{100}.$$
We obtain:
\begin{align*}
\Pr\left[\text{the players lose}\right] &= \Pr\left[Y \ge \frac
  m2\right] = \Pr\left[Y \ge E[Y] +
  t\right] \ge c \cdot \exp\left(-\frac{t^2}{3\var[Y]}\right)\\
&\ge c\cdot
\exp\left(-\frac{m^2\left(\frac{40\epsilon}{n^2}\right)^2}{3\frac{m}{5n}}\right)
= c \cdot \exp\left(-m \cdot \frac{8000}3 \cdot \frac{\epsilon^2}{n^3}\right).
\end{align*}
Finally, we observe that, for the latter to be smaller than $\eta$, one needs
to have $m \ge \frac3{8000} \cdot \frac{n^3}{\epsilon^2} \cdot \ln \frac c{\eta}$.
\end{proof}

\begin{proof}
Take any asymmetric voting scheme for $\mathcal{I}(n,\epsilon)$ with
$m$ voters. That is, a sequence of $m$ vectors
$(R_{1,t},\ldots, R_{n,t})$ and $(B_{1,t},\ldots, B_{n,t})$, for $1
\le t \le m$, such that the probability that the $t$th voter votes for
the $i$th urn if she draws a blue (resp., red) ball is $B_{i,t}$
(resp., $R_{i,t}$). We define $E_{i,t}(j) = p_i \cdot B_{j,t} +
(1-p_i) \cdot R_{j,t}$ as the probability that the $t$th voter will
vote for urn $j$, if the unknown urn is $i$. Similarly, we define
$\Delta_{i,t}(j) = E_{i,t}(i) - E_{i,t}(j)$ as the difference between
the probabilities that the $t$th voter will vote for $i$ and $j$, if
$i$ is the unknown urn.

\medskip

Define $R_i$ to be $R_i = \frac1m \sum_{t=1}^m R_{i,t}$, and $B_i$ to
be $B_i = \frac1m \sum_{t=1}^m B_{i,t}$, for $1 \le i \le n$. Since
$(R_1, \ldots, R_n)$ and $(B_1,\ldots, B_n)$ are a convex combination
of probability distributions,
they are probability distributions themselves.
We recall that $E_i(j)$ and $\Delta_{i}(j)$ are $E_{i}(j) = p_i \cdot B_j +
(1-p_i) \cdot R_j$ and $\Delta_i(j) = E_i(i) -E_i(j)$. Observe that $m \cdot
E_i(j)$ is the expected number of votes in the original asymmetric
scheme to urn $j$, if $i$ is the unknown urn.

\medskip

If there exists some $k, l$ such that $\Delta_k(\ell) < 0$, then ... .

\medskip

Otherwise, $\Delta_k(\ell) \ge 0$, for each $k, l$, and therefore we
can apply Lemma~\ref{lemma:BR}, to obtain that the $B_i$'s are
monotonically non-decreasing, and the $R_i$'s are monotonically
non-increasing, and, by Lemma~\ref{exp_lb_lemma}, there exists an index $i^*$ such
$E_{i^*}(i^*), E_{i^*+1}(i^*+1) \le
\frac{10}n$, and such that $\Delta_{i^*}(i^*+1), \Delta_{i^*+1}(i^*) \le \frac{80\epsilon}{n^2}$.

\smallskip

Going back to the original asymmetric scheme, we have that
$$\sum_{t=1}^m E_{i^*,t}(i^*) = m \cdot E_{i^*}(i^*) \le m \cdot
\frac{10}n.$$
Since each voter distributes a single vote among urns, if we define $T
= \{t \mid E_{i^*,t}(i^*) \le \frac{100}n\}$, we have that
$\left|T\right| \ge \frac9{10}\cdot m$, for otherwise the voters in
$[m]- T$ alone would exceed the upper bound of $\frac{10}n$.

Also, we have
$$\sum_{t=1}^m \Delta_{i,t}(j) = \sum_{t=1}^m \left(E_{i,t}(i) 
- E_{i,t}(j)\right) = m \cdot E_i(i) - m \cdot E_i(j) = m \cdot \Delta_i(j).$$
For the same reasons, if we define $T' = \{t \mid \Delta_{i^*,t}(i^*)
\le \frac{800\epsilon}{n^2}\}$, we have that $\left|T'\right| \ge
\frac9{10}\cdot m$, as
$$\sum_{t=1}^m \Delta_{i^*,t}(i^*) = m \cdot \Delta_{i^*}(i^*) \le m \cdot
\frac{10}n.$$

\end{proof}

\section{Cumulative voting}
\say{This section's statements have to be checked. I am not sure
  whether they are right.}
\say{Update: they are not.}

\say{As of now, we have considered the so-called ``ballot voting'' (each
voter vote for a single candidate, and the candidate with the highest
number of votes win), the simplest voting rule. I believe that
``cumulative voting'', a different voting rule, guarantees
a probability of winning of at least $1 - \eta$ if the number of
voters $m$ is $m\ge \Theta(\frac{n^2}{\epsilon^2} \cdot \log
\frac1{\eta})$. This improves the bound on the number of voters by a
factor of $n$.

\medskip

In the cumulative voting method, each voter still has a single vote,
but can distribute it arbitrarily between different
candidates. In fact, a vote is just a probability distribution
over the candidates. After each vote has been cast, the weighted count
of votes for each candidate is computed: this is called the score of
the candidate. If a candidate has a larger score than any other, he
wins the election (otherwise, the vote is null, say).

\medskip

The cumulative voting scheme we propose is a trivial modification of
the ballot voting scheme of the previous sections. That is, if a voter
draws a red (resp., blue) ball, then her vote to urn $j$, $j = 1, \ldots, m$, will
have weight equal to Section~\ref{intermediate_scenario_section}'s
$\Pr_{\mathbf{P} \sim f(\text{red})}[\mathbf{P} = P_j]$
(resp., $\Pr_{\mathbf{P} \sim f(\text{blue})}[\mathbf{P} = P_j]$).}
}

\comment{
\subsection{Game theoretic voters}

\say{We need to detail the game definition, in particular: what is a
  strategy? It appears to me that, in this game, there are
  continuously many strategies for each player (since a strategy is
  nothing else that a function from colors to probability
  distributions over urns). If this is right, we have to be careful in using
  known theorems in game theory, since many of them work only if the
  number of strategies is finite (for instance, Nash's theorem breaks
  down on games with infinitely many strategies).}

We now consider a game-theoretic version of our problem. We assume that each single voter gains $1$ coin if the election is won by the
correct unknown distribution, and $0$ if the election is won by some
other distribution (or if there is a tie at the top). In the game, the
unknown distribution choice is done adversarially (and ``embedded'' in
the payoff function).
Observe that each player
gets the same payoff, and the social welfare is nothing else than that
payoff times the number of players.
Observe that the
set of pure strategies in this game is the set of functions from the
set of colors to the set of distributions. A mixed strategy is then
nothing else than a probability distribution over the pure strategies.

\smallskip

It is trivial to observe that the voting scheme previously defined
gives rise to a mixed strategy that, if used by all $k$ players,
guarantees a payoff of $1- f_{n,c,\epsilon}(k)$ to each voter, where
$f_{n,c,\epsilon}(k) = e^{-\Theta\left(n^{-3}  c^{-1} \epsilon^2
 k\right)}$, and 
$\lim_{k \rightarrow
  \infty} f_{n,c,\epsilon}(k) = 0$. 

We observe that this sequence of strategies has the nice property
that, if $k$ is large enough, then even if a constant fraction $\alpha = \alpha(n,c,\epsilon)$ of
the players play their game adversarially, still the right urn will
win with high probability.

\smallskip

The following easy observation shows that there exists a sequence of
strategies for the players that guarantees at least the same
individual payoff of $1- f_{n,c,\epsilon}(k)$ (and thus at least the same social welfare), and
that is a (weak) Nash equilibrium.

\begin{observation}
If in a game all players share the same payoff function $p$, and if that
function has a global maximum at $\overline{x}$, then $\overline{x}$ is a
Nash equilibrium (and even a Cabal\footnote{A sequence of strategies is a Cabal
  equilibrium, if no set of players can change their strategies in
  such a way that no one in the set reduces her gain, and someone in
  the set increases it.} equilibrium).
\end{observation}

To put it differently, the price of stability of games where players share the same payoff function is $1$.

In our voting game, the payoff function is the minimum of polynomially
many (exponential-sized) multi-linear polynomials
over a compact set defined by linear constraints.
It follows that it has a global maximum. Therefore
there exists a Nash equilibrium where each voter's payoff is at least
$1-f_{n,c,\epsilon}(k)$.

\say{Can we use these (or other) properties of the payoff function to
  show the existence of a strong Nash equilibrium?}

\medskip

As for the price of anarchy, we observe that it is infinite even in a
fairly restricted case. Suppose we have $n = 2$ bichromatic ($c = 2$)
urns, one containing just a red ball (pure-red), and the other
containing a single blue ball (pure-blue). Take any number $k
\ge 3$ of players. Each player will adopt the following pure strategy:
if  a red ball is drawn, vote for the pure-blue urn; if a blue ball
is drawn, vote for the pure-red urn.

This gives rise to a Nash equilibrium. Indeed a single player, by herself,
cannot change the outcome of the election (since $k-1 \ge 2$ players
will vote for the wrong urn). Furthermore, the payoff of players is
identically zero, since the wrong urn will necessarily win the
election. The same example works with $k = 2$ players, if we insist
in considering a tie as a defeat. We point out that this example still
works even if the unknown urn is chosen uniformly at random (and,
indeed, even if it is chosen so to maximize the players' payoff given
their strategies).

We observe that the $n,c,k$ parameters cannot be lowered
further while still guaranteeing a meaningful zero-social-welfare Nash
equilibrium. Indeed, if $n = 1$ the players will necessarily win; if $c = 1$ then all
distributions are equal and
 the game would not
be very interesting\comment{(but one could still give a
zero-social-welfare equilibrium in the adversarially-chosen urn case: let all players vote for the same
urn; then, since the unknown urn is chosen by the adversary, at least
$k-1 \ge 2$ players will vote for the wrong urn)}; if
$k=1$, then a single person is playing and  the notion of Nash equilibrium is not meaningful.

\comment{

\say{We will give an easy theorem stating that if the number
  of voters is $k = \Theta(n^3 \epsilon^{-2} \log \eta^{-1})$, there exists
  a Nash equilibrium guaranteeing a social welfare of value $\ge
  (1-\eta) k$ (the maximum social welfare being $k$). We call such
  equilibria good.

As of now, we do
  not rule out that all good Nash equilibria have to be weak (even though it
  seems hard to believe that no good strong Nash equilibrium exists ---
  this is something that we might want to look at in the future).

On the other hand,  there exist Nash
equilibria of social welfare value $0$. That is, not only the price of
anarchy is infinite, but there exists an equilibrium giving the minimum
possible social welfare, and an equilibrium approaching the maximum
possible social welfare (as the number of voters increases).

Finally, it can be worthwhile to note that not only good Nash equilibria
exist, but also there exists a sequence of mixed strategies such that
even if a constant fraction $c = c(n,\epsilon,\eta)$ of the players deviate
from their claimed strategy the social welfare will still be $(1-O(\eta))k$.
}}
\comment{
\subsection{An intermediate scenario}

\say{I HAVE FOUND A BUG IN THE FOLLOWING PROOF. Should I correct it? Given
the new proof, it might be useless to correct this one. Rather, we
might either try to generalized the bichromatic LP proof to the multi-color case
directly, or modify the bichromatic LP proof to account for different
urns with equal $p_i$'s. Given the latter, next section's reduction
would still work.

This section's usefulness, if any, is that it gives an intuition as to
why, using scoring rules, it is hard (or, as we would like to show,
impossible) to avoid the $\epsilon^{-4}$ factor
in the number of voters.}

Before giving a general solution to Problem~\ref{voting_problem}, we
generalize the result in Section~\ref{simplest_scenario_section} to a
scenario with $n$ binary probability distributions, $P_1, \ldots, P_n$, the $i$-th of
which equal to $P_i = (p_i, 1-p_i)$ (in our language $p_i$ will be the
probability of drawing ``blue'', whereas $1-p_i$ will be the
probability of drawing ``red''), with $0 \le p_1 \le p_2 \le \cdots \le
p_n \le 1$. Observe that, here, there might exists some $i$ such that
$p_i = p_{i+1}$ and that, furthermore, each $p_i$ is an arbitrary
(possibly irrational) real number in $[0,1]$.
We assume that at least two different $p_i \ne p_j$ exists, for otherwise
the problem is trivially solved by letting each player select a probability distribution $P_i$
uniformly at random, independently of the color observed.

\medskip

Let $\epsilon$ be the minimum absolute difference between two different $p_i,
p_j$:
$$\epsilon = \min_{\substack{1 \le i,j \le n\\p_i > p_j}} \left(p_{i}
- p_j\right).$$

We will bucket the unit segment into parts of length
$\ell = \Theta(\epsilon)$, in such a way that no three consecutive
buckets can contain $p_i, p_j$ with  $p_i \ne p_j$.  We choose $\ell$
to be
$$\ell = \left\lceil\frac3{\epsilon}\right\rceil^{-1}$$
We will then have $\ell^{-1} +
    1=\left\lceil\frac3{\epsilon}\right\rceil +  1$ buckets: for each $k = 0, \ldots, \ell^{-1}-1$, we will have
    bucket $\left[k\ell,(k+1)\ell\right)$; also, we will have the
      bucket $\{1\}$, to guarantee a partition of the whole
      $[0,1]$. Given a distribution $P_i = (p_i, 1-p_i)$, we use
      $\beta(i)$ to denote the index of the bucket containing $P_i$.

The previous property is then easily
    checked: by contradiction assume that the union of three
consecutive buckets contain $p_i > p_j$; then
the difference $p_i - p_j$ would be strictly smaller than
$$p_i - p_j < 3\ell = 3 \left\lceil
    \frac3{\epsilon}\right\rceil^{-1} \le \epsilon,$$
but, by the definition of $\epsilon$ we would obtain
$\epsilon \le p_i - p_j < \epsilon$,
a contradiction.

\medskip

We now describe the function $f$ that the players will use to randomly
choose which vote to cast:\begin{itemize}
\item[1a.] If the player draws red, then she will choose randomly a bucket $\mathbf{B_k}$ ($0 \le k \le
  \ell^{-1}$) with probability proportional to $1- \left(\frac k{\ell^{-1}}\right)^2$, otherwise
\item[1b.] if the player draws blue, then she will choose randomly a bucket $\mathbf{B_k}$ ($0 \le k \le
  \ell^{-1}$) with probability proportional to $1-\left(1- \frac k{\ell^{-1}}\right)^2$.
\item[2.] Then the player will choose uniformly at random a
  probability distribution $\mathbf{P'} \in (P_1, \ldots, P_n)$;
\item[3.] if $\mathbf{P'}$ belongs to bucket $\mathbf{B_k}$ then the player
  votes for $\mathbf{P} = \mathbf{P'}$; otherwise she repeats the decision process.
\end{itemize}
As in the previous subsection, the probability of choosing a bucket
$B_k$ (resp., $B_{\ell^{-1}-k}$) given a red (blue) draw is
$$b_k = 
\frac{6\ell^{-1}}{4\ell^{-2}+3\ell^{-1}-1}\cdot\left(1-\left(\frac k{\ell^{-1}}\right)^2\right).$$
Furthermore, the probability that, after an execution of the decision process, the process will be repeated is
exactly
$$\alpha = 1 - \sum_{k=0}^{\ell^{-1}} \left( b_k \cdot
\frac{\left|B_k\right|}n \right).$$
Therefore, recalling that distribution $P_j$ belongs to bucket $B_{\beta(j)}$, we will have
$$\Pr_{\mathbf{P} \sim f(\text{red})}[\mathbf{P} = P_j] =
\frac{b_{\beta(j)} \cdot n^{-1}}{1-\alpha} = \frac{b_{\beta(j)}}{\sum_{k=0}^{\ell^{-1}}\left(b_k \cdot \left|B_k\right|\right)}$$
and,
\say{The following line has a bug. The normalization factor can be
  different for the $f(\text{red})$ and the $f(\text{blue})$ cases.}
$$\Pr_{\mathbf{P} \sim f(\text{blue})}[\mathbf{P} = P_j] = \frac{
  b_{\ell^{-1}-\beta(j)}}{\sum_{k=0}^{\ell^{-1}}\left(b_k \cdot \left|B_k\right|\right)}.$$
Continuing with last subsection's line of thought, we compute the
probability that the player will vote for $P_j$  given that the unknown,
adversarially chosen, distribution is $P_i$:
\begin{eqnarray*}
\Pr_{\substack{\mathbf{X} \sim P_i\\ \mathbf{P} \sim f(\mathbf{X})}}\left[\mathbf{P} =
    P_j\right] & = & p_i \Pr_{\mathbf{P} \sim
  f(\text{blue})}[\mathbf{P} =  P_{j}] + (1-p_i) 
\Pr_{\mathbf{P} \sim f(\text{red})}[\mathbf{P} =  P_j]\\
&=& \frac{p_i \cdot b_{\ell^{-1}-\beta(j)} + (1-p_i) \cdot b_{\beta(j)}}{\sum_{k=0}^{\ell^{-1}}\left(b_k \cdot \left|B_k\right|\right)}\\
&=& \frac{g_{p_i,\ell^{-1}}(\beta(j))}{\sum_{k=0}^{\ell^{-1}}\left(b_k \cdot \left|B_k\right|\right)},
\end{eqnarray*}
where the last step follows from $p_i \cdot b_{\ell^{-1}-{\beta(j)}} +
(1-p_i) \cdot b_{\beta(j)}$ being equal to
last section's $g_{p_i,\ell^{-1}}(\beta(j))$. Therefore its maximum (and
therefore the whole function's maximum) is
achieved at $\beta^* = p_i \cdot \ell^{-1}$.

Observe that, since $\beta(i) = \left\lfloor p_i\cdot \ell^{-1}\right\rfloor$ it holds that $\left|\beta(i) - \beta^*\right| \le 1$.
Furthermore, since each $p_j \ne p_i$ is such that $\left|\beta(j) -
\beta(i)\right|\ge 3$, we have that $\left|\beta(j) - \beta^*\right|
\ge \left|\beta(j) - \beta(i)\right| - 1$. Therefore, for
  each $P_j \ne P_i$,
\begin{eqnarray*}
g_{p_i,\ell^{-1}}(\beta(i)) - g_{p_i, \ell^{-1}}(\beta(j)) &\ge&
  \left(g_{p_i, \ell^{-1}}(\beta^*)-
\frac{6\cdot 1^2}{\ell^{-1}(4\ell^{-2}+3\ell^{-1}-1)}\right)
  -\\
&& \left(g_{p_i, \ell^{-1}}(\beta^*)-
\frac{6\cdot (\left|\beta(j) - \beta(i)\right| - 1)^2}{\ell^{-1}(4\ell^{-2}+3\ell^{-1}-1)}\right)\\
& = & \frac{6\cdot \left|\beta(j) - \beta(i)\right|\cdot
  \left(\left|\beta(j) -
  \beta(i)\right|-2\right)}{\ell^{-1}(4\ell^{-2}+3\ell^{-1}-1)} =
\Theta\left(\ell^3 \cdot \left|\beta(j) - \beta(i)\right|^2\right).
\end{eqnarray*}

It follows that the most likely vote will be for
the probability
distributions equal to $P_i$. Furthermore,
since $\sum_{k=0}^{\ell^{-1}} \left(b_k \cdot \left|B_k\right|\right)
\le b_0 \cdot \sum_{k=0}^{\ell^{-1}} \left|B_k\right| = b_0 \cdot n =
O(\ell \cdot n)$,  that each other distribution will
get a probability of being voted smaller by an additive term of at
least
$$\delta \ge \Omega\left(\frac{\ell^{3}}{\ell \cdot n}\right) =
\Omega\left(\ell^2 \cdot n^{-1}\right) = \Omega\left(\epsilon^2 \cdot n^{-1}\right),$$
where the last step follows from $\ell =
\Theta(\epsilon)$. Furthermore, the probability of voting for $P_i$ is
$$\Pr_{\substack{\mathbf{X} \sim P_i\\ \mathbf{P} \sim f(\mathbf{X})}}\left[\mathbf{P} =
    P_i\right] \ge \Omega\left(\frac{\ell}{\ell n}\right) = \Omega\left(n^{-1}\right).$$

\say{We can show that, if the number of voters is
$$k \ge \Omega\left(\frac{n}{\epsilon^4} \cdot \log
  \frac1{\eta}\right),$$
then the probability that the collective guess is right is at least
$1-\eta$.}
}

\subsection{The general problem}\label{general_problem_section}
We now use the result obtained in
subsection~\ref{intermediate_scenario_section}, to show that
Problem~\ref{voting_problem} always has a solution.

We describe the process run by each player:\begin{itemize}
\item the player chooses a color $\mathbf{C}$ from $\mathcal{C}$
  uniformly at random;
\item she then partitions the colors of the instance into two classes:
  the class $\{\mathbf{C}\}$, containing just color $\mathbf{C}$, and
  the class $\mathcal{C} - \{\mathbf{C}\}$, containing all other
  colors.
\item she then runs the process of
  subsection~\ref{intermediate_scenario_section}, to cast a vote on
  the virtually bichromatic instance.
\end{itemize}

Observe that, independently of $\mathbf{C}$ and of the bichromatic
instance it induces,  the unknown adversarially chosen probability
distribution
$P_i$ will always be part of the set of distribution chosen with
highest probability --- and that all the other distributions will have
probability of being chosen smaller by at least a additive $\delta$
term.

Therefore, if --- in the original multicolor instance --- distribution
$P_i$ was different from some other distribution $P_j$, then there
exists some color $C$ for which $P_i(C) \ne P_j(C)$. But then,  the
bichromatic instance generated by picking
$\mathbf{C} = C$ guarantees that the probability of choosing
$P_i$ is larger than the probability of choosing $P_j$ by at least a
$\delta$ additive term.

It follows that if the unknown multicolor distribution $P_i$ is different from
all other distributions $P_j$, then the probability that $P_i$ will be
chosen by our process is larger than the probability of choosing any
other $P_j$, by at least an additive term of $\delta \cdot
\left|\mathcal{C}\right|^{-1}$.
Also, if $P_i$ is equal to some other $P_{i'}$ but different from
$P_j$, then still the probability of voting $P_i$ will be larger by
at least $\delta \cdot \left|\mathcal{C}\right|^{-1}$ of the probability of
voting  $P_j$, and also the probability of voting $P_i$ will be equal
to the probability of voting $P_{i'}$.

\say{To remove the requirement on the finiteness of $\mathcal{C}$ we
  could show that, for each two distributions with infinite support
  one can partition their support into finitely many parts (each
  having positive measure) in such a way that the two distributions
  induced on the parts are different.

I am not sure whether for, say, continuous probability distribution such
a property holds in general. But it should hold if, for instance, the
CDF is continuous or even if it has at most numerably many discontinuities.}
}

\xhdr{Acknowledgments}
We thank Larry Blume,
David Easley, and Bobby Kleinberg for valuable discussions.

\bibliographystyle{plain}
\bibliography{n,urns}

\end{document}